\newtheorem{theorem}{Theorem}[section]
\newtheorem{proposition}[theorem]{Proposition}
\newtheorem{lemma}[theorem]{Lemma}
\newtheorem{corollary}[theorem]{Corollary}
\newtheorem{observation}[theorem]{Observation}
\theoremstyle{definition}
\renewenvironment{proof} {\noindent\textbf{Proof:} } { \qed \medskip}
\newcommand{\reals}{\mathbb{R}}
\newcommand{\ignore}[1]{}
\newcommand{\ec}[1]{$\textsc{EE}_{#1}$}
\newcommand{\ecb}[2]{$\textsc{EE}_{#1}^{#2}$}
\newcommand{\naive}[1]{$\mathcal{N}_{#1}$}
\newcommand{\eenergy}[1]{E\left({#1}\right)}
\newcommand{\ttime}[1]{T\left( {#1} \right) }
\newcommand{\nlp}[2]{\textsc{NLP}_{#1}^{#2} }
\newcommand{\algo}[1]{\mathcal{A}\left(#1\right)}
\begin{document}
\title{\bf 
Energy Consumption of Group Search on a Line
\footnote{This is the full version of the paper with the same title which will appear in the proceedings of the 
46th International Colloquium on Automata, Languages and Programming
8-12 July 2019, Patras, Greece}
}

\author{
Jurek Czyzowicz\footnotemark[1] 
\and
Konstantinos Georgiou\footnotemark[2] 
\and
Ryan Killick\footnotemark[4] 
\and 
Evangelos Kranakis\footnotemark[4] 
\and
Danny Krizanc\footnotemark[7]
\and
Manuel Lafond\footnotemark[5] 
\and
Lata Narayanan\footnotemark[8] 
\and
Jaroslav Opatrny\footnotemark[8]
\and
Sunil Shende\footnotemark[9]
}

\def\thefootnote{\fnsymbol{footnote}}
\footnotetext[1]{
Universite du Québec en Outaouais, Gatineau, Qu\'{e}bec, Canada, \texttt{jurek.czyzowicz@uqo.ca}
}
\footnotetext[2]{
Dept. of Mathematics, 
Ryerson University, 
Toronto, ON, Canada, \texttt{konstantinos@ryerson.ca}
}
\footnotetext[4]{
School of Computer Science, Carleton University, Ottawa ON, Canada, \texttt{ryankillick,kranakis@scs.carleton.ca}
}
\footnotetext[7]{
Department of Mathematics \& Comp. Sci., Wesleyan University, Middletown, CT, USA, \texttt{dkrizanc@wesleyan.edu}
}
\footnotetext[8]{
Department of Comp. Sci. and Software Eng., Concordia University, Montreal, Qu\'{e}bec,  Canada, \texttt{lata,opatrny@encs.concordia.ca}
}
\footnotetext[9]{
Department of Computer Science, Rutgers University, Camden, USA,
\texttt{shende@camden.rutgers.edu}
}
\footnotetext[5]{
Department of Computer Science, Universit\'{e} de Sherbrooke, Qu\'{e}bec,  Canada
\texttt{Manuel.Lafond@usherbrooke.ca}
}
\maketitle

\begin{abstract}
Consider two robots that start at the origin of the infinite line in search of an \textit{exit} at an unknown location on the line. The robots can collaborate in the search, but can only communicate if they arrive at the same location at exactly the same time, i.e. they use the so-called {\em face-to-face} communication model.  The group search time is defined as the worst-case time as a function of $d$, the distance of the exit from the origin, when both robots can reach the exit. It has long been known that for a single robot traveling at unit speed, the search time is at least $9d - o(d)$; a simple doubling strategy achieves this time bound. It was shown recently in \cite{chrobak2015group} that $k \geq 2$ robots traveling at unit speed also require at least $9d$ group search time. 

We investigate \textit{energy-time trade-offs} in group search by two robots, where the energy loss experienced by a robot traveling a distance $x$ at constant speed $s$ is given by $s^2 x$ and is motivated by first principles in physics and engineering. Specifically, we consider the problem of minimizing the total energy used by the robots, under the constraints that the search time is at most a multiple $c$ of the distance $d$ and the speed of the robots is bounded by $b$. Motivation for this study is that for the case when robots must complete the search in $9d$ time with maximum speed one ($b=1;~c=9$), a single robot requires at least $9d$ energy, while for two robots, all previously proposed algorithms consume at least $28d/3$ energy.

When the robots have bounded memory and can use only a constant number of fixed speeds, we generalize an algorithm described in \cite{BS95,chrobak2015group} to obtain a family of algorithms parametrized by pairs of $b,c$ values that can solve the problem for the entire spectrum of these pairs for which the problem is solvable. In particular, for each such pair, we determine optimal (and in some cases nearly optimal) algorithms inducing the lowest possible energy consumption. 

We also propose a novel search algorithm that simultaneously achieves search time $9d$ and consumes energy $8.42588d$. Our result shows that two robots can search on the line in optimal time $9d$ while consuming less total energy than a single robot within the same search time. 
Our algorithm uses robots that have unbounded memory, and a finite number of dynamically computed speeds. It can be generalized for any $c, b$ with $cb=9$, and consumes energy $8.42588b^2d$. 

\end{abstract}

\section{Introduction}
\label{sec: intro}


The problem of searching for a treasure at an unknown location in a specified continuous domain was initiated over fifty years ago ~\cite{beck1964linear,bellman1963optimal}.  Search domains that have been considered include the infinite line \cite{baezayates1993searching,beck1964linear,bellman1963optimal,kao1996searching}, a set of rays \cite{Bose13,BrandtFRW17}, the unit circle \cite{Watten2017,CGKNOV15,pattanayak2017evacuating}, and  polygons \cite{CKKNOS,HIKK01,K94}. Consider a robot (sometimes called a mobile agent) starting at some known location in the domain and looking for \textit{an exit} that is located at an unknown distance $d$ away from the start. What algorithm should the robot use to find the exit as soon as possible? The most common cost measure used for the search algorithm is the worst-case {\em search time},  as a function of the distance $d$ of the exit from the starting position. For a fixed-speed robot, the search time  is proportional to the length of the trajectory of the robot.  Other measures such as turn cost~\cite{demaine2006online} and different costs for revisiting \cite{Bose16} have been sometimes considered. 

We consider for the first time the {\em energy} consumed by the robots while executing the search algorithm. The energy used by a robot to travel a distance $x$ at speed $s$ is computed as $s^2x$ and is motivated from the concept of viscous drag in fluid dynamics; see Section~\ref{sec: definition} for details on the energy model. For a single robot searching on the line, the classic \textit{Spiral Search} algorithm (also known as the doubling strategy) has search time $9d$ and is known to be \textit{optimal} when the robot moves with unit speed.  Since in the worst case, the robot travels distance $9d$ at unit speed, the energy consumption is $9d$ as well. Clearly, as the speed of the robot increases, the time to find the exit decreases but at the same time, the energy used increases. Likewise, as the speed of the robot decreases, the time to find the exit increases, while the energy consumption decreases. Thus there is a natural trade-off between the time taken by the robot to search for the exit and the energy consumed by the robot. To investigate this trade-off, we consider the problem of minimizing the total energy used by the robots to perform the search when the speed of the robot is bounded by $b$, and the time for the search is at most a multiple $c$ of the distance $d$ from the starting point to the exit.

Group search by a set of $k \geq 2$ collaborating robots has recently gained a lot of attention. In this case, the search time is the time when {\em all} $k$ robots reach the exit. The problem has also been called {\em evacuation}, in view of the application when it is desired that all robots reach and evacuate from the exit. 
Two models of communication between the robots have been considered. In the wireless communication model, the 
robots can instantly communicate with each other at any time and over any distance. In the face-to-face communication model (F2F), two robots can communicate only when in the same place at the same time.  In many search domains, and for both communication models, group search by $k \geq 2$ agents has been shown to take less time than search by a single agent; see for example \cite{CGKNOV15,CKKNOS}. 

In this paper, we focus on group search on the line, by two robots using the F2F model. 
 Chrobak {\em et al}  \cite{chrobak2015group} showed that group search in this setting cannot be performed in time less than $9d - o(d)$, regardless of the number of robots, assuming all robots use at most unit speed. They also describe several strategies that achieve search time $9d$. In the first strategy, the two robots independently perform the Spiral Search algorithm, using unit speed during the entire search. Next, they consider a strategy first described in \cite{BS95},  that we call the {\em Two-Turn} strategy, whereby two robots head off independently in opposite directions at speed $1/3$; when one of them finds the exit, it moves at unit speed to chase and catch the other robot, after which they both return at unit speed to the exit. Finally, they present a new strategy, called the {\em Fast-Slow} algorithm in which one robot moves at unit speed, while the other  robot moves at speed 1/3, both performing a spiral search. The doubling strategy is very energy-inefficient, it uses energy $18d$ if the two robots always travel together, or  $14D$  if the robots start by moving in opposite directions. The other two algorithms both use energy $28d/3 > 9d$. Interestingly, the two strategies that achieve an energy consumption of $28d/3$ with search time $9d$, both use two different and pre-computed speeds, but are quite different in terms of the robot capacities needed. In the Two-Turn strategy, the robots are extremely simple and use constant memory; they use only three states. In  Fast-Slow and  Spiral Search,  the robots need unbounded memory, and perform computations to determine how far to go before turning and moving in the opposite direction. 

Memory capability, time- and speed-bounded search, and energy consumption by a two-robot group search algorithm on the line: these considerations motivate the following questions that we address in our paper:
\begin{enumerate}
\item Is there a search strategy for constant-memory robots that has energy consumption $<9d$?
\item Is there \textit{any} search strategy that uses time $9d$ and energy $< 9d$?  
\end{enumerate}

\subsection{Our results}

We generalize the Two-Turn strategy for any values of $c, b$. We analyze the entire spectrum of values of $c,b$ for which the problem admits a solution, and for each of them we provide optimal (and in some cases nearly optimal) speed choices for our  robots (Theorem~\ref{thm: (sub)optimal speeds naive bounded speeds}). In particular, and somewhat surprisingly, our proof makes explicit how for any fixed $c$ the optimal speed choices do not simply "scale" with $b$; rather more delicate speed choices are necessary to comply with the speed and search time bounds. For the special case of $c\cdot b=9$, our results match with the specific Two-Turn strategy described in \cite{chrobak2015group}. Our results further show that no Two-Turn strategy can achieve energy consumption less than $9d$ while keeping the search time at $9d$. In fact, we conjecture that this trade-off is impossible for \textit{any} group search strategy that uses only constant memory robots.

In the unbounded-memory model, for the special case of $c=9$ and $b=1$, we give a novel search algorithm that achieves energy consumption of $8.42588d$, thus answering the second question above in the affirmative. This result shows that though two robots cannot search faster than one robot on the line \cite{chrobak2015group},  somewhat surprisingly, two robots can search using less total energy than one robot, in the same optimal time.  Our algorithm uses robots that have unbounded memory, and a finite number of dynamically computed speeds. Note that our algorithm can be generalized for any $c, b$ with $cb=9$, and utilizes energy $8.42588b^2d$ (Theorem~\ref{thm: CR non-primitive bounded speeds}).

\subsection{Related Work} 
\label{sec: related work}

Several authors have investigated various aspects of mobile robot (agent) search, resulting in an extensive literature on the subject in theoretical computer science and mathematics (e.g., see \cite{alpern2003theory,GAL} for reviews). Search by constant-memory robots has been done mainly for finite-state automata (FSA) operating in discrete environments like infinite grids, their finite-size subsets (labyrinths) and other graphs. The main concern of this research was the feasibility of search, rather than time or energy efficiency. For example, \cite{Budach} showed that no FSA can explore all labyrinths, while \cite{Blum} proved that one FSA using two pebbles or two FSAs, communicating according to the F2F model can explore all labyrinths.  However, no collection of FSAs may explore all finite graphs communicating in the F2F model~\cite{Rol80} or wireless model~\cite{Cook}. On the other hand, all graphs of size $n$ may be explored using a robot having $O(\log n)$ memory~\cite{Reingold}.

Exploration of infinite grids is known as the ANTS problem~\cite{ELSUW}, where it was shown that four collaborating FSAs in the semi-synchronous execution model and communicating according to the F2F scenario can explore an infinite grid. Recently, \cite{BrandtUW18} showed that four FSAs are really needed to explore the grid (while three FSAs can explore an infinite band of the  2-dimensional grid).

Continuous environment cases 
have been investigated in several papers when the efficiency of the search  is often represented by the time of reaching the target (e.g., see \cite{baezayates1993searching,beck1964linear,bellman1963optimal,kao1996searching}). Even in the case of continuous environment as simple as the infinite line, after the seminal papers~\cite{beck1964linear,bellman1963optimal}, various scenarios have been studied where the turn cost has been considered~\cite{demaine2006online}, the environment was composed of portions permitting different search speeds~\cite{CzyzowiczKKNOS17}, some knowledge about the target distance was available~\cite{Bose13} or where some other parameters are involved in the computation of the cost function~\cite{Bose16} (e.g. when the target is moving). 

The group search, sometimes interpreted as the evacuation problem has been studied first for the disc environment under the F2F~\cite{Watten2017,CGGKMP,CGKNOV15,CKKNOS,LMS} and wireless \cite{CGGKMP} communication scenarios and then also for other geometric environments (e.g., see \cite{CKKNOS}). Other variants of search/evacuation problems with a combinatorial flavour have been recently considered in \cite{CGS18,CGKKKNOS18a,CGKKKNOS18b,GeorgiouKK16,GeorgiouKK17}. Some papers investigated the line search problem in the presence of crash faulty~\cite{PODC16} and Byzantine faulty agents~\cite{isaacCzyzowiczGKKNOS16}. The interested reader may also consult the recent survey~\cite{CGKMAC19} on selected search and evacuation topics.

The energy used by a mobile robot is usually considered as being spent solely for travelling. As a consequence, in the case of a single, constant speed robot the search time is proportional to the distance travelled and the energy used by a robot. Therefore the problems of minimization of time, distance or energy are usually equivalent for most robots' tasks. For teams of collaborating robots, the searchers often need to synchronize their walks in order to wait for information communicated by other searchers (e.g, see~\cite{Watten2017,CGGKMP,LMS}), hence the time of the task and the distance travelled are different. However, the distance travelled by a robot and its energy used are still commensurable quantities.

To the best of our knowledge, energy consumption as a function of mobile robot speed which is based on natural laws of physics (related to the {\em drag force}) has never been studied in the search literature before. Our present work is motivated by~\cite{chrobak2015group}, which proves that the competitive ratio $9$ is tight for group search time with two mobile agents in the F2F model when both agents have unit maximal speeds. More exactly, it follows from~\cite{chrobak2015group} that having more unit-speed robots cannot improve the group search time obtained by a single robot. Nevertheless, our paper shows that using more robots can improve the energy spending, while keeping the group-search time still the best possible.

~\cite{chrobak2015group} presents interesting examples of group search algorithms for two distinct speed robots communicating according to the F2F scenario. An interested reader may consult ~\cite{SIROCCO16}, where optimal group search algorithms for a pair of distinct maximal speed robots were proposed for both communication scenarios (F2F and wireless) and for any pair of robots' maximal speeds. It is interesting to note that, according to~\cite{SIROCCO16}, for any distinct-speed robots with F2F communication, the optimal group search time is obtained only if one of the robots perform the search step not using its full speed. 

\textbf{Paper Organization:}
In Section~\ref{sec: definition} we formally define the evacuation problem \ecb{c}{b}, and  proper notions of efficiency. 
Our algorithms and their analysis for constant-memory robots is presented in Section~\ref{sec: naive algo}, while in Section~\ref{sec: better algo} we introduce and analyze algorithms for unbounded-memory robots. 
All omitted proofs can be found in the Appendix. 
Also, due to space limitations, all figures appear in Appendix~\ref{sec: omitted figures}.

\section{Preliminaries}
\label{sec: definition}

Two robots start walking from the origin of an infinite (bidirectional) line in search of a hidden exit at an unknown absolute distance $d$ from the origin. The exit is considered found only when one of the robots walks over it. An algorithm for group search by two robots specifies trajectories for both robots and terminates when both robots reach the exit.  The time by which the second robot reaches the exit is referred to as the \textit{search  time} or the \textit{evacuation time}. 

\noindent {\bf Robot models:}  The two robots operate under the F2F communication model in which two robots can communicate only when they are in the same place at the same time. Each robot  can change its speed at any time. We distinguish between {\em constant-memory} robots that can only travel at a constant number of hard-wired speeds, and {\em unbounded-memory} robots that can dynamically compute speeds and distances, and travel at any possible speed. 

\noindent {\bf Energy model:} A robot moving at constant speed $s$ traversing an interval of length $x$ is defined to use energy $s^2 \cdot x$. This model is well motivated from first principles in physics and engineering and corresponds to the energy loss experienced by an object moving through a viscous fluid \cite{batchelor2000}. In particular, an object moving with constant speed $s$ will experience a {\em drag} force $F_{D}$ proportional\footnote{The constant of proportionality has (SI) units $kg/m$ and depends, among other things, on the shape of the object and the density of the fluid through which it moves.} to $s^2$. In order to maintain the speed $s$ over a distance $x$ the object must do work equal to the product of $F_D$ and $x$ resulting in a continuous energy loss proportional to the product of the object's squared speed and travel distance. For simplicity we have taken the proportionality constant to be one.  

The total energy that a robot uses traveling at speeds $s_1, s_2, \ldots, s_t$, traversing intervals $x_1, x_2, \ldots, x_t$, respectively, is defined as $\sum_{i=1}^t s_i^2 \cdot x_i$. 
For  group search with two  robots, the \textit{energy consumption} is defined as the sum total of the two robots' energies used until the search algorithm terminates.  

For each $d>0$ there are two possible locations for the exit to be at distance $d$ from the origin: we will refer to either of these as input instances $d$ for the group search problem. Our goal is to solve the following \textit{optimized} search problem parametrized by two values, $\mathbf{b}$ and $\mathbf{c}$:\\
 
\noindent
{\textcolor{darkgray}{$\blacktriangleright$}\nobreakspace\sffamily\bfseries Problem \ecb{c}{b}}: Design a group search algorithm for two robots in the F2F model that minimizes the energy consumption for $d$-instances under the constraints that the search time is no more than $\mathbf{c\cdot d}$ and the robots use 
speeds that are at most $\mathbf{b}$. When there are no speed limits on the robots  (i.e.  $b=\infty$), we abbreviate \ecb{c}{\infty} by \ec{c}. Note that $b,c$ are inputs to the algorithm, but $d$ and the exact location of the exit are not known.

As it is standard in the literature on related problems, we assume that the exist is at least a known constant distance away from the origin. In this work, we pick the constant equal to 2, although our arguments can be adjusted to any other constant. 
It is not difficult to show that \ecb{c}{b} is well defined for each $b,c>0$ with $bc\geq1$, and the optimal offline solution, for instance $d$, is for both robots to move at speed $\frac{1}{c}$ to the exit. This offline algorithm has energy consumption $\frac{2d}{c^2}$ (see Observation~\ref{obs: opt offline bounded speeds} in Appendix~\ref{sec: opt offline bounded speeds}). 
Consider an online algorithm for \ecb{c}{b}, which on any instance $d$ has energy consumption at most $e(c,b,d)$. The \textit{competitive ratio} of the algorithm is defined as 
$
\sup_{d>0} \tfrac{c^2}{2d}~ e(c,b,d).
$

Due to~\cite{chrobak2015group}, and when $b=1$, no online algorithm (for two robots) can have evacuation time less than $9d-\epsilon$ (for any $\epsilon>0$ and for large enough $d$). By scaling, using arbitrary speed limit $b$, we obtain the following fact. 

\begin{observation}
\label{lem: restrictions on cb f2f}
No online F2F algorithm can solve \ecb{c}{b} if $cb<9$.
\end{observation}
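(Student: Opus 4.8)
The plan is to reduce the statement to the known lower bound of \cite{chrobak2015group} by a straightforward rescaling argument. Suppose, for contradiction, that some online F2F algorithm $\mathcal{A}$ solves \ecb{c}{b} for some pair $c,b>0$ with $cb<9$; in particular, on every instance $d$ it uses speeds at most $b$ and has evacuation time at most $c\cdot d$. The idea is to ``speed up'' $\mathcal{A}$ by a factor $b$: define a new algorithm $\mathcal{A}'$ that, on instance $d$, simulates the trajectories of $\mathcal{A}$ with all speeds multiplied by $1/b$ (equivalently, $\mathcal{A}'$ runs the same geometric trajectories but in time scaled down by $b$). Then $\mathcal{A}'$ uses speeds at most $1$, so it is a valid unit-speed group search algorithm for two robots in the F2F model, and its evacuation time on instance $d$ is at most $(c\cdot d)/b = (cb/9)\cdot 9d < 9d$.

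Next I would argue this contradicts \cite{chrobak2015group}. That result states that for two unit-speed robots using F2F communication, no online algorithm can evacuate in time less than $9d-o(d)$; more precisely, for every $\epsilon>0$ there are arbitrarily large $d$ on which the evacuation time is at least $9d-\epsilon$. Since $cb/9<1$ is a fixed constant, the bound $(cb/9)\cdot 9d$ is bounded away from $9d$ by a constant fraction of $d$, so for $d$ large enough it is strictly less than $9d-\epsilon$ for a suitable fixed $\epsilon$. Hence $\mathcal{A}'$ beats the lower bound of \cite{chrobak2015group}, a contradiction. Therefore no such $\mathcal{A}$ exists when $cb<9$.

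The only point requiring a little care — and the main (minor) obstacle — is making the scaling rigorous: one must check that scaling all speeds by $1/b$ preserves the F2F communication structure (two robots meet at the same place at the same time in $\mathcal{A}$ if and only if they do so in $\mathcal{A}'$, since time is uniformly rescaled), that it preserves correctness (each robot still walks over the exit, as the spatial trajectories are unchanged), and that the worst-case-over-$d$ guarantees transfer. Since the transformation is a pure uniform time-rescaling of a fixed spatial trajectory, all of these hold. I would also note explicitly that this is exactly the ``by scaling'' remark preceding the statement in the excerpt, so the proof can be kept to a couple of lines once the rescaling is spelled out.
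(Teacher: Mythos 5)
Your proposal is correct and takes exactly the paper's approach: the paper itself offers no further detail, stating only that the observation follows ``by scaling'' from the $9d-o(d)$ lower bound of \cite{chrobak2015group}, and your uniform rescaling of speeds by $1/b$ (which preserves trajectories, meetings, and correctness) is precisely the intended construction. One small slip worth fixing: dividing every speed by $b$ \emph{multiplies} the evacuation time by $b$, so the rescaled time is $(c\cdot d)\cdot b=(cb/9)\cdot 9d$ rather than $(c\cdot d)/b$; your right-hand expression and the final conclusion $cbd<9d$ are nevertheless the correct ones.
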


\section{Solving \ecb{c}{b} with Constant-Memory Robots}
\label{sec: naive algo}

In this section we propose a family of algorithms for solving \ecb{c}{b} (including $b=\infty$). 
The family uses an algorithm that is parametrized by three discrete speeds: $s$, $r$ and $k$. The robots use these speeds depending on  finite state control as follows:\\

\noindent
{\textcolor{darkgray}{$\blacktriangleright$}\nobreakspace\sffamily\bfseries Algorithm \naive{s,r,k}: }
Robots start moving in opposite directions with speed $s$ until the exit is found by one of them. The finder changes direction and moves at speed $r > s$ until it catches the other robot. Together the two 
robots return to the exit using speed $k$. 

\begin{lemma}[Proof on page~\pageref{sec: energy time formulas}]
\label{lem: energy time formulas}
Let $b,c$ be such that there exist $s,r,k$ for which \naive{s,r,k} is feasible. 
Then, for instance $d$ of \ecb{c}{b}, the induced evacuation time of \naive{s,r,k} is 
$
d\cdot \ttime{s,r,k}
$
and the induced energy consumption is 
$
2d\cdot \eenergy{s,r,k}
$, where
$$
\ttime{s,r,k}:= \frac{2 (k+r)}{k (r-s)}+\frac{1}{s}, ~~
\eenergy{s,r,k}:=\frac{r}{r-s}
\left(
s^2+r^2+2k^2
\right)
$$
\end{lemma}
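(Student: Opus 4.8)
The plan is a straightforward kinematic bookkeeping over the three phases that \naive{s,r,k} prescribes. Fix an instance $d$; by the left--right symmetry of the algorithm I may assume the exit sits at position $+d$, the mirror case being identical, and the feasibility hypothesis gives $r>s$, which is exactly what is needed for every phase below to complete.

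First I would track positions and times phase by phase. In the \emph{search} phase both robots leave the origin at speed $s$ in opposite directions; the one heading toward the exit reaches it at time $d/s$, at which instant the other robot is at $-d$. In the \emph{chase} phase the finder reverses and moves at speed $r$ while the other robot keeps moving at speed $s$, so the gap $2d$ closes at rate $r-s$ and the two meet $\tfrac{2d}{r-s}$ later, at position $-\tfrac{d(r+s)}{r-s}$; the face-to-face meeting is where the second robot learns the exit's location. In the \emph{return} phase both robots travel from the meeting point back to $+d$ at speed $k$, a distance of $\tfrac{2dr}{r-s}$. Summing the three durations and combining the last two over the denominator $k(r-s)$ yields evacuation time $d\left(\tfrac{2(k+r)}{k(r-s)}+\tfrac1s\right)=d\,\ttime{s,r,k}$.

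For the energy I would tally each robot's legs and apply the rule that a leg of length $x$ traversed at speed $v$ costs $v^2x$: the finder covers $d$ at speed $s$, then $\tfrac{2dr}{r-s}$ at speed $r$ (from the exit out to the meeting point), then $\tfrac{2dr}{r-s}$ at speed $k$; the other robot covers $\tfrac{d(r+s)}{r-s}$ at speed $s$ and then $\tfrac{2dr}{r-s}$ at speed $k$. Adding the five contributions, the two $s$-terms merge via $1+\tfrac{r+s}{r-s}=\tfrac{2r}{r-s}$ and the whole sum factors as $\tfrac{2dr}{r-s}\bigl(s^2+r^2+2k^2\bigr)=2d\,\eenergy{s,r,k}$.

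There is no real obstacle here beyond keeping the signs and the common denominators straight; the only facts worth stating explicitly are that exactly one robot meets the exit in the search phase (the two robots partition the line into its two rays and the exit is at distance $d\ge 2>0$), that the chase terminates because $r>s$, and that both robots indeed finish at the exit so the algorithm is correct. I would also note that both placements of the exit give the formulas with equality, so $d\,\ttime{s,r,k}$ and $2d\,\eenergy{s,r,k}$ are exact values, not merely worst-case upper bounds.
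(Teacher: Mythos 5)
Your proposal is correct and follows essentially the same route as the paper's proof: the same three-phase decomposition (search, chase at relative speed $r-s$ closing a gap of $2d$, joint return at speed $k$), the same distances $\tfrac{2dr}{r-s}$ for the chase and return legs, and the same per-leg energy tally that merges the two speed-$s$ contributions into $\tfrac{2dr}{r-s}s^2$. The only cosmetic difference is that you track absolute positions while the paper tracks distance from the exit.
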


\ignore{
\begin{proof}[Proof of Lemma~\ref{lem: energy time formulas}]
Consider the moment that the exit is located, after time $d/s$ time of searching. 
The robot that now chases the other speed-$s$ robot at constant speed $r>s$ will reach it after $2d/(r-s)$ time. To see this note that the configuration is equivalent to that the speed-$s$ robot is immobile and the other robot moves at speed $r-s$, having to traverse a total distance of $2d$. Moreover, the speed-$s$ robot traverses an additional length $2ds/(r-s)$ segment till it is caught, being a total of $2ds/(r-s)+2d$ away from the exit. Once robots meet, the walk to the exit at speed $k$, which takes additional time $(2ds/(r-s)+2d)/k$. Overall the evacuation time equals
$$
\frac{d}{s}+\frac{2d}{r-s}+\frac{2ds/(r-s)+2d}{k} =
d\left( \frac{2 (k+r)}{k (r-s)}+\frac{1}{s}\right).
$$

Similarly we compute the total energy till both robots reach the exit. The energy spent by the finder is
$$
d\cdot s^2 + \left( \frac{2ds}{r-s} + 2d \right) \cdot r^2 + \left( \frac{2ds}{r-s} + 2d \right) \cdot k^2,
$$
while the energy spent by the non finder is 
$$
\left(d + \frac{2ds}{r-s}\right) \cdot s^2  + \left( \frac{2ds}{r-s} + 2d \right) \cdot k^2.
$$
Adding the two quantities and simplifying gives the promised formula. 
\end{proof}
}


We propose a systematic way in order to find optimal values for $s,r,k$ of algorithm \naive{s,r,k} for optimization problem \ecb{c}{b} (including $b=\infty$), whenever such values exist. 

\begin{theorem}[Proof on page~\pageref{sec: naive as NLP}]
\label{thm: naive as NLP}
Algorithm \naive{s,r,k} gives rise to a feasible solution to problem \ecb{c}{b} if and only if $bc\geq 9$. For every such $b,c>0$, the optimal choices  of \naive{s,r,k} can be obtained by solving Non Linear Program:
\begin{align}
& ~~\min_{s,r,k \in \reals} \eenergy{s,r,k} \tag{$\nlp{c}{b}$}\\
s.t. ~~&
\ttime{s,r,k} \leq c  \notag \\
&r\geq s \notag \\
&0 \leq s,r,k\leq b \notag
\end{align}
where functions $\eenergy{\cdot,\cdot,\cdot}, \ttime{\cdot,\cdot,\cdot}$ are as in Lemma~\ref{lem: energy time formulas}.
Moreover, if $s_0, r_0, k_0$ are the optimizers to $\nlp{c}{b}$, then the competitive ratio of \naive{s_0,r_0,k_0} equals 
$
c^2 \cdot \eenergy{s_0,r_0,k_0}.
$
\end{theorem}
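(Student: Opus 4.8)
The plan is to reduce everything to Lemma~\ref{lem: energy time formulas}, which already gives that on instance $d$ the algorithm \naive{s,r,k} evacuates in time $d\cdot\ttime{s,r,k}$ and consumes energy $2d\cdot\eenergy{s,r,k}$, with coefficients depending only on $s,r,k$. Since the speed bound is $d$-independent and both time and energy scale linearly in $d$, the algorithm \naive{s,r,k} is a feasible solution of \ecb{c}{b} (i.e. it evacuates within time $cd$ and never exceeds speed $b$ on \emph{every} instance $d\ge 2$) exactly when $0\le s,r,k\le b$, $r>s$ (so the chase phase terminates and $\ttime{s,r,k}$ is finite), and $\ttime{s,r,k}\le c$. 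These are precisely the constraints of $\nlp{c}{b}$, up to replacing $r>s$ by the closed condition $r\ge s$ — harmless, because $\ttime{s,r,k}\to\infty$ as $r\to s^{+}$, so the boundary $r=s$ never satisfies the time constraint and contributes nothing to the minimization. As the energy on instance $d$ is the fixed positive multiple $2d$ of $\eenergy{s,r,k}$, minimizing energy over all feasible \naive{s,r,k} is the same as minimizing $\eenergy{s,r,k}$ over the feasible region of $\nlp{c}{b}$; this yields the "optimal choices" claim, once the feasibility characterization is in place.

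For "\,$\nlp{c}{b}$ is feasible $\iff bc\ge 9$\," I would compute $\min\ttime{s,r,k}$ over $0<s\le r\le b$, $0<k\le b$ directly and show it equals $9/b$. The $k$-dependent part of $\ttime{s,r,k}$ is $\tfrac{2}{r-s}\bigl(1+\tfrac rk\bigr)$, which is strictly decreasing in $k$, so the minimizer sets $k=b$; for fixed $s$ the quantity $\tfrac{b+r}{r-s}$ has derivative $\tfrac{-(b+s)}{(r-s)^{2}}<0$ in $r$, so the minimizer sets $r=b$; the problem then reduces to minimizing $\tfrac{4}{b-s}+\tfrac1s$ over $s\in(0,b)$, whose first-order condition $4s^{2}=(b-s)^{2}$ gives $s=b/3$ and value $\tfrac{6}{b}+\tfrac{3}{b}=\tfrac9b$. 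Hence a feasible triple with $\ttime{s,r,k}\le c$ exists iff $c\ge 9/b$, i.e. $bc\ge9$; the witness is the scaled Two-Turn point $(s,r,k)=(b/3,\,b,\,b)$, which realizes $\ttime{\cdot}=9/b$. Conversely, any feasible \naive{s,r,k} forces $c\ge\ttime{s,r,k}\ge 9/b$, hence $bc\ge 9$. (One could instead invoke Observation~\ref{lem: restrictions on cb f2f} for the "only if" direction, but the direct computation has the advantage of also certifying the explicit witness.)

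For the competitive ratio: by definition it is $\sup_{d>0}\tfrac{c^{2}}{2d}\,e(c,b,d)$, where $e(c,b,d)$ is the algorithm's energy on instance $d$; with the optimizers $s_0,r_0,k_0$, Lemma~\ref{lem: energy time formulas} gives $e(c,b,d)=2d\cdot\eenergy{s_0,r_0,k_0}$, so the supremum collapses to the $d$-independent value $c^{2}\cdot\eenergy{s_0,r_0,k_0}$.

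The main obstacle is the monotonicity-plus-one-variable-calculus argument pinning $\min\ttime{s,r,k}=9/b$; everything else is bookkeeping on top of Lemma~\ref{lem: energy time formulas}. A secondary point to nail down is existence of the optimizers: because $\ttime{s,r,k}\to\infty$ as $s\to 0^{+}$, $r\to s^{+}$, or $k\to 0^{+}$, the set $\{\,\ttime{s,r,k}\le c\,\}\cap\{0\le s,r,k\le b\}$ is a compact subset of the region where $\eenergy{\cdot}$ is continuous, so the minimum is attained and the closed formulation with $r\ge s$ has the same optimal value as the physically meaningful open one.
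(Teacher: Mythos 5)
Your proposal follows essentially the same route as the paper: reduce to Lemma~\ref{lem: energy time formulas}, observe that feasibility of \naive{s,r,k} is exactly the constraint set of $\nlp{c}{b}$ (with the $r\ge s$ versus $r>s$ boundary dismissed via the blow-up of $\ttime{\cdot}$), characterize solvability by showing $\min\ttime{s,r,k}=9/b$ through the same monotonicity-in-$k$, monotonicity-in-$r$, then one-variable calculus in $s$ argument (the paper's Lemma~\ref{lem: bounded speeds spectrum of r}), and collapse the competitive-ratio supremum since energy is linear in $d$. The one place you fall slightly short of the paper is the existence of optimizers: your compactness argument for $\{\ttime{s,r,k}\le c\}\cap\{0\le s,r,k\le b\}$ is fine for finite $b$, but the theorem (and its downstream use in Section~\ref{sec: naive for unbounded}) also covers $b=\infty$, where that set is unbounded. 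The paper closes this case (Lemma~\ref{lem: optimizer exists}) by exhibiting the feasible point $s=7/c$, $r=14/c$, $k=7/c$ with objective value $686/c^2$ and using $\eenergy{s,r,k}\ge s^2+r^2+2k^2$ to deduce that any optimizer lies in the box $s,r,k\le\sqrt{686}/c$, after which compactness applies; you would need to add this (or an equivalent coercivity argument) to handle the unbounded-speed case.
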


\ignore{
\begin{proof}[Proof of Theorem~\ref{thm: naive as NLP}]
Note that in $\nlp{c}{\infty}$ that aims to provide a solution to \ec{c}, constraints $s,r,k\leq \infty$ are simply omitted. 
In particular, the theorem above claims that when $b=\infty$, i.e. when speeds are unbounded, algorithm \naive{s,r,k} always admits some feasible solution. 
In what follows, we prove all claims of the theorem. 

By Lemma~\ref{lem: energy time formulas}, the energy performance of \naive{s,r,k} equals $2d\cdot \eenergy{s,r,k}$, and the induced evacuation time is $d\cdot \ttime{s,r,k}$. For the values of $s,r,k$ to be feasible, we need that 
$0< s,r,k \leq b$, 
that 
$
r>k
$
and that 
$
d\cdot \ttime{s,r,k}\leq cd
$. 
Clearly the latter time constraint simplifies to the time constraint of $\nlp{c}{b}$, while the objective value can be scaled by $d>0$ without affecting the optimizers to the NLP, if such optimizers exist. 
Finally note that even though the strict inequalities become non strict inequalities in the NLP, speeds evaluations for which any of $s,r,k$ is 0 or $r=k$ violates the time constraint (for any fixed $c>0$). Therefore, $\nlp{c}{b}$ correctly formulates the problem of choosing optimal values for \naive{s,r,k} for solving \ecb{c}{b}. 

The next two lemmas show that the Naive algorithm can solve problem \ecb{c}{b} for the entire spectrum of $c,b$ values for which the problem admits solutions, as per Lemma~\ref{lem: restrictions on cb f2f}. 

\begin{lemma}
\label{lem: optimizer exists}
For every $c$, problem $\nlp{c}{\infty}$ admits an optimal solution. 
\end{lemma}

\begin{proof}[Proof of Lemma~\ref{lem: optimizer exists}]
Consider the redundant constraints $s,r,k\geq 1/c$ that can be derived by the existing constraints of~$\nlp{c}{\infty}$ (note that if all speeds are not at least $1/c$ then clearly the time constraint is violated). For the same reason, it is also easy to see that $r-s\geq 1/c$, since again we would have a violation of the time constraint. 

Next, it is easy to check that $s=7/c, r=14/c, k=7/c$ is a feasible solution, hence the NLP is not infeasible. 
The value of the objective for this evaluation is $686/c^2$. But then, notice that the objective is bounded from below by $s^2+r^2+2k^2$. Hence, if an optimal solution exists, constraints 
$s,r,k\leq \sqrt{686}/c$ are valid for the optimizers. We may add these constraints to~$\nlp{c}{\infty}$, resulting into a compact (closed and bounded) feasible region. But then, note that the objective is continuous over the new compact feasible region, hence from the extreme value theorem it attains a minimum. 
\end{proof}


\begin{lemma}
\label{lem: bounded speeds spectrum of r}
There exist $s,r,k$ for which \naive{s,r,k} induces a feasible solution to \ecb{c}{b}\ if and only if $c\geq 9/b$. 
\end{lemma}

\begin{proof}[Proof of Lemma~\ref{lem: bounded speeds spectrum of r}]
Consider the problem of minimizing completion time of the Naive Algorithm, given that the speeds are all bounded above by $b$. 
The corresponding NLP that solves the problem reads as. 
\begin{align}
\min & \frac{2 (k+r)}{k (r-s)}+\frac{1}{s} \label{min time, bounded speeds} \\
s.t. ~~&
r \geq s \notag \\
&0 \leq s,r,k\leq b \notag
\end{align}
Note that it is enough to prove that the optimal value to \eqref{min time, bounded speeds} is $9/b$. 
Indeed, that would imply that no speeds exist that induce completion time less than $9/b$, making the corresponding feasible region of~$\nlp{c}{b}$ empty if $c<9/b$. 

Now we show that the optimal value to \eqref{min time, bounded speeds} is $9/b$, by showing that the unique optimizers to the NLP are $r=k=b$ and $s=b/3$. Indeed, note that 
$$
\frac{\partial}{\partial r} \left( \frac{2 (k+r)}{k (r-s)}+\frac{1}{s} \right)
= -\frac{2 (k+s)}{k (r-s)^2}
$$
which is strictly negative for all feasible $s,r,k$ with $r>s$. Hence, there is no optimal solution for which $r<b$, as otherwise by increasing $r$ one could improve the value of the objective. Similarly we observe that 
$$
\frac{\partial}{\partial k} \left( \frac{2 (k+r)}{k (r-s)}+\frac{1}{s} \right)
= -\frac{2 r}{k^2 (r-s)}
$$
which is again strictly negative for all feasible $s,r,k$ with $r>s$. Hence, there is no optimal solution for which $k<b$, as otherwise by increasing $k$ one could improve the value of the objective.

To conclude, in an optimal solution to~\eqref{min time, bounded speeds} we have that $r=k=b$, and hence one needs to find $s$ minimizing $g(s,b,b)=\frac{4}{b-s}+\frac{1}{s}$. For this we compute 
$$
\frac{\partial}{\partial s} g(s,b,b) = \frac{4}{(b-s)^2}-\frac{1}{s^2}
$$
and it is easy to see that $g(s,b,b)=0$ if and only if $s=b/3$ or $s=-b$ (and the latter is infeasible). At the same time, $g(s,b,b)$ is convex when $s\leq b$ because $\frac{\partial^2}{\partial s^2} g(s,b,b) = \frac{8}{(b-s)^3}+\frac{2}{s^3} >0$, hence $s=b/3$ corresponds to the unique minimizer. 
\end{proof}


The last component of Theorem~\ref{thm: naive as NLP} that requires justification pertains to the competitive ratio. 
Now fix $b,c>0$ for which $bc\geq 9$, and let $\eenergy{s_0,r_0,k_0}$ be the optimal solution to $\nlp{c}{b}$ (corresponding to the optimal choices of algorithm \naive{s,r,k}). By Lemma~\ref{lem: energy time formulas} the induced energy consumption is $2d\cdot \eenergy{s_0,r_0,k_0}$. Then,  the competitive ratio of the algorithm is 
$
\sup_{d>0} \frac{c^2}{2d}~ 2d\cdot \eenergy{s_0,r_0,k_0} = c^2 \cdot \eenergy{s_0,r_0,k_0}.
$
\end{proof}
}

The following subsections are devoted to solving $\nlp{c}{b}$, effectively proving Theorem~\ref{thm: (sub)optimal speeds naive bounded speeds}.
First in Section~\ref{sec: naive for unbounded} we solve the case $b=\infty$ and we use our findings to solve the case of bounded speeds $b$ in the follow-up Section~\ref{sec: naive for bounded}.

\subsection{Optimal Choices of \naive{s,r,k} for the Unbounded-Speed Problem}
\label{sec: naive for unbounded}

In this section we propose solutions to the unbounded-speed problem \ec{c}. 
Since \ec{c} is the same as \ecb{c}{\infty}, by Observation~\ref{obs: opt offline bounded speeds}, the problem is well-defined for every fixed $c>0$. Moreover, 
by the proof of Theorem~\ref{thm: naive as NLP} (see proof of Lemma~\ref{lem: optimizer exists} in the Appendix) algorithm \naive{s,r,k} induces a feasible solution  for every $c>0$ as well, and the optimal speeds can be found by solving $\nlp{c}{\infty}$.
Indeed, in the remaining of the section we show how to choose optimal values for $s,r,k$ for solving \ec{c} with \naive{s,r,k}.
Let 
\begin{equation}
\label{equa: values sigma, rhow, kappa}
\sigma \approx  2.65976, \rho \approx 11.3414 , \kappa \approx 6.63709,
\end{equation}
whose exact values are the roots of an algebraic system and will be formally defined later. 
The main theorem of this section reads as follows. 

\begin{theorem}[Proof on page~\pageref{sec: optimal speeds naive unbounded speeds}]
\label{thm: optimal speeds naive unbounded speeds}
Let $\sigma, \rho, \kappa$ as in~\eqref{equa: values sigma, rhow, kappa}.  
For every $c>0$, the optimal speeds of \naive{s,r,k} for problem \ec{c} are
$
s=\tfrac{\sigma}{c}, ~~r=\tfrac{\rho}{c}, ~~k=\tfrac{\kappa}{c}.
$
Moreover, the competitive ratio of the corresponding solution is independent of $c$ and equals $\frac{\rho  \left(2 \kappa ^2+\rho ^2+\sigma ^2\right)}{\rho -\sigma } \approx 292.369$. 
\end{theorem}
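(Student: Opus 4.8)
The plan is to solve $\nlp{c}{\infty}$ by a standard Lagrange-multiplier / KKT analysis, exploiting the scaling structure of the problem to reduce to a single instance. First I would observe that both the objective $\eenergy{s,r,k}$ and the time function $\ttime{s,r,k}$ have clean homogeneity: replacing $(s,r,k)$ by $(ts,tr,tk)$ scales $\ttime{\cdot}$ by $1/t$ and leaves $\eenergy{\cdot}$ scaled by $t^0$ in the first factor but $t^2$ in the bracket — more precisely $\ttime{ts,tr,tk} = \tfrac1t \ttime{s,r,k}$ and $\eenergy{ts,tr,tk} = t^2 \eenergy{s,r,k}$ is \emph{not} quite what we want, so instead I would note that the substitution $s = \sigma/c$, $r=\rho/c$, $k=\kappa/c$ turns the constraint $\ttime{s,r,k}\le c$ into the $c$-free constraint $\ttime{\sigma,\rho,\kappa}\le 1$ and the objective into $\tfrac{1}{c^2}\eenergy{\sigma,\rho,\kappa}$. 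Hence minimizing over $(s,r,k)$ for a given $c$ is equivalent to minimizing $\eenergy{\sigma,\rho,\kappa}$ subject to $\ttime{\sigma,\rho,\kappa}\le 1$, $\rho\ge\sigma$, $\sigma,\rho,\kappa\ge 0$, a problem with no $c$ in it. That immediately gives the ``independent of $c$'' clause for the competitive ratio, since by Theorem~\ref{thm: naive as NLP} the competitive ratio is $c^2\cdot\eenergy{s_0,r_0,k_0} = c^2\cdot\tfrac{1}{c^2}\eenergy{\sigma,\rho,\kappa} = \eenergy{\sigma,\rho,\kappa}$.

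Next I would argue the time constraint is tight at the optimum: since $\partial\ttime{}/\partial\sigma<0$... actually $\ttime{}$ is decreasing in each of $\rho,\kappa$ (as computed in the proof of Lemma~\ref{lem: bounded speeds spectrum of r}), while $\eenergy{}$ is increasing in $\rho$ and $\kappa$ in the relevant range, so at an optimum we cannot have slack in the time constraint — otherwise shrink $\rho$ or $\kappa$ slightly to reduce energy while staying feasible. Similarly the constraint $\rho\ge\sigma$ is not active (it would force infinite time), and the nonnegativity constraints are not active (same reason). So at the optimum only $\ttime{\sigma,\rho,\kappa}=1$ binds, and I would set up the Lagrangian $\eenergy{\sigma,\rho,\kappa} - \lambda(\ttime{\sigma,\rho,\kappa}-1)$ and write the three stationarity equations $\nabla\eenergy{} = \lambda\nabla\ttime{}$ together with $\ttime{\sigma,\rho,\kappa}=1$. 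Eliminating $\lambda$ from two of the equations gives two polynomial relations among $\sigma,\rho,\kappa$, and together with $\ttime{}=1$ this is the ``algebraic system'' whose roots are the stated numerical values $\sigma\approx2.65976$, $\rho\approx11.3414$, $\kappa\approx6.63709$. I would then verify these are the genuine global minimizers — existence of an optimizer is already guaranteed by Lemma~\ref{lem: optimizer exists}, and since the feasible region (after the argument above) reduces to the single surface $\ttime{}=1$ in the interior of the positive orthant with $\rho>\sigma$, it suffices to check the critical point found is the unique KKT point and that the objective $\to\infty$ near the boundary (as $\rho\downarrow\sigma$ or as any variable $\to\infty$ along the surface), so the interior critical point must be the minimum.

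Finally, plugging the root values into $\eenergy{\sigma,\rho,\kappa}=\tfrac{\rho(2\kappa^2+\rho^2+\sigma^2)}{\rho-\sigma}$ gives the claimed competitive ratio $\approx 292.369$, which by the scaling discussion equals $c^2\cdot\eenergy{s,r,k}$ for the optimal speeds, independent of $c$. The main obstacle I expect is not conceptual but the bookkeeping of the Lagrange system: the partial derivatives of $\eenergy{}$ and $\ttime{}$ are rational functions with $(\rho-\sigma)$ and $\kappa$ in denominators, so clearing denominators produces a moderately high-degree polynomial system, and one has to be careful both that the elimination is done correctly and that the selected real root is the one lying in the feasible region (the problem also has spurious roots like $\rho=-\sigma$ analogous to the one seen in Lemma~\ref{lem: bounded speeds spectrum of r}). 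A secondary subtlety is rigorously ruling out that the optimum sits on the $\rho=\sigma$ boundary or ``at infinity'' — this needs the observation that $\ttime{}=1$ forces all three of $\sigma,\rho,\kappa$ into a bounded range bounded away from the bad locus, which follows from the redundant bounds $\sigma,\rho,\kappa\ge 1$ and $\rho-\sigma\ge 1$ derived exactly as in the proof of Lemma~\ref{lem: optimizer exists}, plus the explicit upper bound $\sqrt{686}$ from the feasible point $(7,14,7)$.
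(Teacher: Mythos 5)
Your proposal is correct and follows essentially the same route as the paper: the scaling substitution is the paper's Lemma~\ref{lem: normalized reduction}, the tightness of the time constraint (argued by shrinking $k$, whose decrease strictly lowers $\eenergy{s,r,k}$ while preserving feasibility when there is slack) is Lemma~\ref{lem: tight constraint}, and the KKT system with $\lambda$ eliminated, solved numerically and combined with existence from Lemma~\ref{lem: optimizer exists} and uniqueness of the feasible critical point, is exactly Lemma~\ref{lem:optimizers of pcprime}. The only cosmetic difference is that the paper additionally verifies positive definiteness of $\nabla^2\eenergy{\sigma,\rho,\kappa}$ at the candidate, whereas your ``unique KKT point plus guaranteed existence'' argument already suffices.
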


A high level outline of the proof of Theorem~\ref{thm: optimal speeds naive unbounded speeds} is as follows. First we show that any optimal choices of the speeds of \naive{s,r,k}\ must satisfy the time constraint of $\nlp{c}{\infty}$ tightly. Then, we show that finding optimal speeds $s,r,k$ of \naive{s,r,k}\ for the general problem \ec{c} reduces to problem \ec{1}. Finally, we obtain the optimal solution to $\nlp{1}{\infty}$ by standard tools of nonlinear programming (KKT conditions).

\ignore{
\begin{proof}[Proof of Theorem~\ref{thm: optimal speeds naive unbounded speeds}]
First we observe that $s=\frac{\sigma}{c}, r=\frac{\rho}{c}, k=\frac{\kappa}{c}$ are indeed feasible to $\nlp{c}{\infty}$ (for every $c>0$), since
$$
\ttime{\frac{\sigma}{c}, \frac{\rho}{c}, \frac{\kappa}{c}}
=c \left(\frac{2 (\kappa +\rho )}{\kappa  (\rho -\sigma )}+\frac{1}{\sigma }\right)
$$
and in particular, for the values of $\sigma, \rho, \kappa$ described above we have 
$\left(\frac{2 (\kappa +\rho )}{\kappa  (\rho -\sigma )}+\frac{1}{\sigma }\right) \approx 1$ 
(from the formal definition of $\sigma, \rho, \kappa$ that appears later, it will be clear that expression will be exactly equal to 1). 
Moreover, by Theorem~\ref{thm: naive as NLP}, the competitive ratio of \naive{\frac{\sigma}{c}, \frac{\rho}{c}, \frac{\kappa}{c}} is 
$$
c^2 \cdot \eenergy{\frac{\sigma}{c}, \frac{\rho}{c}, \frac{\kappa}{c}} = \frac{\rho  \left(2 \kappa ^2+\rho ^2+\sigma ^2\right)}{\rho -\sigma },
$$
as claimed. 

In the remaining of the section we prove that the choices for $s,r,k$ of Theorem~\ref{thm: optimal speeds naive unbounded speeds} are indeed optimal for \naive{s,r,k}. 
First we establish a structural property of optimal speeds choices for \naive{s,r,k}. 

\begin{lemma}
\label{lem: tight constraint}
For any $c>0$, optimal solutions  to $\nlp{c}{\infty}$ 
satisfy constraint $\ttime{s,r,k} \leq c$ tightly.
\end{lemma}

\begin{proof}
Consider an optimal solution $\bar s, \bar r, \bar k$. As noted before, we must have 
$\bar s, \bar r, \bar k>0$ and $\bar r>\bar s$, as otherwise the values would be infeasible. 

Next note that the time constraint can be rewritten as 
$$
k\geq  \frac{2 r s}{c r s-c s^2-r-s}
$$
For the sake of contradiction, assume that the time constraint is not tight for $\bar s, \bar r, \bar k$. Then, there is $\epsilon>0$ so that $\bar s, \bar r, k'$ is a feasible solution, where $k'=\bar k - \epsilon>0$. But then, the objective value strictly decreases, a contradiction to optimality. 
\end{proof}

We will soon derive the optimizers to $\nlp{c}{\infty}$ using Karush-Kuhn-Tucker (KKT) conditions. Before that, we observe that solutions are scalable with respect to $c$, which will also allow us to simplify our calculations. 

\begin{lemma}
\label{lem: normalized reduction}
Let $s', r', k'$ be the optimizers to~$\nlp{1}{\infty}$ inducing optimal energy $E$. 
Then, for any $c$, the optimizers to~$\nlp{c}{\infty}$ are $\bar s = s'/c, \bar r=r'/c, \bar k = k'/c$, and the induced optimal energy is $\frac1{c^2}E$. 
\end{lemma}

\begin{proof}
Note that the triplet $(s,r,k)$ is feasible to~$\nlp{c}{\infty}$ (for a specific $c$) 
if and only if 
the triplet $(c\cdot s,c \cdot r,c \cdot k)$ is feasible to~$\nlp{1}{\infty}$.
Moreover, it is straightforward that when speeds are scaled by $c$, the induced energy is scaled by $c^2$. 
Hence, for every $c>0$ there is a bijection between feasible (and optimal) solutions to ~$\nlp{c}{\infty}$ and ~$\nlp{1}{\infty}$.
\end{proof}

We are therefore motivated to solve $\nlp{1}{\infty}$, and that will allow us to derive the optimizers for~$\nlp{c}{\infty}$, for any $c>0$.

\begin{lemma}
\label{lem:optimizers of pcprime}
The optimal solution to~$\nlp{1}{\infty}$ is obtained for 
$$
s = \sigma \approx  2.65976,~
r = \rho \approx 11.3414 ,~
k=\kappa \approx 6.63709
$$
and the optimal NLP value is $\frac{\rho  \left(2 \kappa ^2+\rho ^2+\sigma ^2\right)}{\rho -\sigma } \approx 292.37$.
\end{lemma}

\begin{proof}
By KKT conditions, we know that, necessarily, all minimizers of $\eenergy{s,r,k}$ satisfy the condition that 
$-\nabla \eenergy{s,r,k}$ is a conical combination of tight constraints (for the optimizers). Lemma~\ref{lem: tight constraint}
asserts that $\ttime{s,r,k} = 1$ has to be satisfied for all optimizers $s,r,k$.
At the same time, recall that, by the proof of Lemma~\ref{lem: optimizer exists}, none of the constraints  $r\geq s$ and $s,r,k\geq 0$ can be tight for an optimizer. Hence, KKT conditions imply that any optimizer $s,r,k$ satisfies, necessarily,  the following system of nonlinear constraints
\begin{align*}
-\nabla \eenergy{s,r,k} &= \lambda \nabla \ttime{s,r,k} \\
\ttime{s,r,k}&=1 \\
\lambda &\geq 0
\end{align*}
More explicitly, the first equality constraints is
\begin{align*}
\left(
\begin{array}{c}
r-\frac{2 r \left(k^2+r^2\right)}{(r-s)^2}\\
\frac{2 k^2 s-2 r^3+3 r^2 s+s^3}{(r-s)^2}\\
\frac{4 k r}{s-r}
\end{array}
\right)
&=
\lambda \left(
\begin{array}{c}
\frac{2 (k+r)}{k (r-s)^2}-\frac{1}{s^2}\\
-\frac{2 (k+s)}{k (r-s)^2}\\
\frac{2 r}{k^2 (s-r)}
\end{array}
\right)
\end{align*}
From the 3rd coordinates of the gradients, we obtain that $\lambda=2k^3$, which directly implies that the dual multiplier $\lambda$ preserves the correct sign for the necessary optimality conditions. 

Hence, the original system of nonlinear constraints is equivalent to that 
\begin{align*}
r-\frac{2 r \left(k^2+r^2\right)}{(r-s)^2}
&=2k^3
\left(
\frac{2 (k+r)}{k (r-s)^2}-\frac{1}{s^2}
\right) \\
-2 k^2 s+2 r^3-3 r^2 s-s^3
&= 4k^2(k+s) \\
\frac{2 (k+r)}{k (r-s)}+\frac{1}{s} &=1
\end{align*}
Using software numerical methods, we see that the above algebraic system admits the following 3 real roots for $(s,r,k)$: 
\begin{align*}
(2.659764883844293,~11.341425445393606,~6.637089776204052) & (\textrm{multiplicity 1}) \\
(-0.6115006613361799,~0.47813267995355124,~1.0972211311317337) & (\textrm{multiplicity 2})
\end{align*}

Since also all speeds are nonnegative, we obtain the unique candidate optimizer 
$$(\sigma,\rho,\kappa)=(2.65976, 11.3414, 6.63709).$$ 
To verify that indeed $(\sigma,\rho,\kappa)$ is a minimizer, we compute
$$
\nabla^2 \eenergy{s,r,k}=
\left(
\begin{array}{ccc}
 \frac{4 r \left(k^2+r^2\right)}{(r-s)^3} & \frac{(r+s) \left(-2 k^2+r^2+s^2-4 r s\right)}{(r-s)^3} & \frac{4 k r}{(r-s)^2} \\
 \frac{(r+s) \left(-2 k^2+r^2+s^2-4 r s\right)}{(r-s)^3} & \frac{2 \left(r^3-3 s r^2+3 s^2 r+s^3+2 k^2 s\right)}{(r-s)^3} & -\frac{4 k s}{(r-s)^2} \\
 \frac{4 k r}{(r-s)^2} & -\frac{4 k s}{(r-s)^2} & \frac{4 r}{r-s} \\
\end{array}
\right).
$$
\ignore{
MatrixForm[
 FullSimplify[
  D[f[s, r, k], {{s, r, k}, 2}]
  ]
 ]
}
Moreover, 
$$
\nabla^2 \eenergy{\sigma,\rho,\kappa}=
\left(
\begin{array}{ccc}
 11.9718 & -1.56333 & 3.99485 \\
 -1.56333 & 2.83125 & -0.936864 \\
 3.99485 & -0.936864 & 5.22546 \\
\end{array}
\right)
$$
which has eigenvalues $14.1183, 3.41098, 2.49927$, hence it is PSD. 
As a result, $f(s,r,k)$ is locally convex at $(\sigma,\rho,\kappa)$, and therefore $(\sigma,\rho,\kappa)$ is a local minimizer to~$\nlp{1}{\infty}$. 
As we showed earlier, $(\sigma,\rho,\kappa)$ is the only candidate optimizer, hence a global minimizer as well. 

\ignore{
s0 = 2.65976;
r0 = 11.3414;
k0 = 6.63709;
hessian[s_, r_, 
  k_] := {{(2 r)/(r - s) + (4 r s)/(r - s)^2 + (
    2 r (2 k^2 + r^2 + s^2))/(r - s)^3, (2 r^2)/(r - s)^2 - (
    2 r s)/(r - s)^2 + (2 s)/(r - s) - (
    2 r (2 k^2 + r^2 + s^2))/(r - s)^3 + (
    2 k^2 + r^2 + s^2)/(r - s)^2, (
   4 k r)/(r - s)^2}, {(2 r^2)/(r - s)^2 - (2 r s)/(r - s)^2 + (2 s)/(
    r - s) - (2 r (2 k^2 + r^2 + s^2))/(r - s)^3 + (
    2 k^2 + r^2 + s^2)/(r - s)^2, -((4 r^2)/(r - s)^2) + (6 r)/(
    r - s) + (2 r (2 k^2 + r^2 + s^2))/(r - s)^3 - (
    2 (2 k^2 + r^2 + s^2))/(r - s)^2, -((4 k r)/(r - s)^2) + (4 k)/(
    r - s)}, {(
   4 k r)/(r - s)^2, -((4 k r)/(r - s)^2) + (4 k)/(r - s), (4 r)/(
   r - s)}}
   MatrixForm[ hessian[s0, r0, k0] ]
   }

\ignore{
c1[s_, r_, k_] := 
  r - (2 r (k^2 + r^2))/(r - s)^2 - 
   2*k^3*((2 (k + r))/(k (r - s)^2) - 1/s^2);
(* c2[s_,r_,k_]:= (-2 r^3+2 k^2 s+3 r^2 s+s^3)/(r-s)^2-2*k^3*(-((2 \
(k+s))/(k (r-s)^2))) ;*)

 c2[s_, r_, k_] := (-2 r^3 + 2 k^2 s + 3 r^2 s + s^3)/1 + 
   2*k^2*((2 (k + s))/1 ) ;

c3[s_, r_, k_] := 2*(k + r)/k/(r - s) + 1/s - 1;

NSolve[ 
 {c1[s, r, k] == 0, c2[s, r, k] == 0, c3[s, r, k] == 0}, {s, r, k}]
}

\end{proof}

\ignore{
f[s_, r_, k_] := r/(r - s)*(s^2 + r^2 + 2*k^2);
g[s_, r_, k_] := 2*(k + r)/k/(r - s) + 1/s;

MatrixForm[
 FullSimplify[ -Grad[f[s, r, k], {s, r, k}] ]
 ]

}

\ignore{
time[s_, r_, k_] := (2 (k + r))/(k (r - s)) + 1/s;
energy[s_, r_, k_] := ( r (2 k^2 + r^2 + s^2))/(r - s)
Minimize[ {energy[s, r, k], 
  time[s, r, k] == 1. && s >= 0 && r >= 0 && k >= 0}, {s, r, k}] 
}

Lemma~\ref{lem:optimizers of pcprime} together with Lemma~\ref{lem: normalized reduction} imply that for any $c>0$ the optimal solution to $\nlp{c}{\infty}$ is exactly for 
$$
s=\frac{\sigma}{c}, ~r=\frac{\rho}{c}, ~k=\frac{\kappa}{c}
$$
and hence, the proof of Theorem~\ref{thm: optimal speeds naive unbounded speeds} follows. 
}

\subsection{(Sub)Optimal Choices of \naive{s,r,k} for the Bounded-Speed Problem}
\label{sec: naive for bounded}
In this section, we show how to choose optimal values for $s,r,k$ for solving \ecb{c}{b} with \naive{s,r,k}, for the entire spectrum of $c,b$ values for which the problem is solvable by online algorithms. 

The main result of this section is the following:

%


\begin{theorem}
\label{thm: (sub)optimal speeds naive bounded speeds}
Let $\gamma_1\approx 9.06609$, $\gamma_2=\rho\approx11.3414$, and $\sigma, \rho, \kappa$ as in~\eqref{equa: values sigma, rhow, kappa}.  
For every $c,b>0$ with $cb\geq 9$, the following choices of speeds $s,r,k$ are feasible for \naive{s,r,k}
$$
\begin{tabular}{l|ccc}
			& $9 \leq cb \leq \gamma_1$		&	$\gamma_1 < cb < \gamma_2$	&	$cb \geq \gamma_2$ \\
\hline
$s$ 	& 	$\frac{-\sqrt{(bc)^2-10 bc+9}+bc-3}{2 c}$	& 	$0.532412 b-0.0262661 b^2 c$	&	$\sigma/c$ \\
$r$	& $b$	& $b$		&	$\rho/c$ \\
$k$	& $b$	& $\frac{2 b s}{b c s-b-c s^2-s}$		&	$\kappa/c$ \\
\end{tabular}
$$
\ignore{
$$
s=
\left\{
\begin{array}{ll}
\frac{-\sqrt{(bc)^2-10 bc+9}+bc-3}{2 c}&, 9 \leq cb \leq \gamma_1 \\
aaaa &, ~\gamma_1 < cb < \gamma_2 \\
\sigma/c&, cb \geq \gamma_2
\end{array}
\right. , ~~
r=
\left\{
\begin{array}{ll}
b &, 9 \leq cb \leq \gamma_1 \\
aaaa &, ~\gamma_1 < cb < \gamma_2 \\
\rho/c&, cb \geq \gamma_2
\end{array}
\right. , ~~
k=
\left\{
\begin{array}{ll}
b&, 9 \leq cb \leq \gamma_1 \\
aaaa &, ~\gamma_1 < cb < \gamma_2 \\
\kappa/c&, cb \geq \gamma_2
\end{array}
\right. 
$$
}
The induced competitive ratio is given by: 
$$
f(x):=
\left\{
\begin{array}{ll}
\frac{1}{2} x \left(x \left(x-\sqrt{(x-9) (x-1)}\right)+\sqrt{(x-9) (x-1)}+3\right), & 9 \leq x \leq \gamma_1 \\
\frac{x^2 \left((0.532412\, -0.0262661 x)^2+\frac{11595.8 (20.2699\, -1. x)^2}{(x (x (x-2.46798)-398.916)+2221.18)^2}+1\right)}{0.0262661 x+0.467588}, 
& \gamma_1 < x < \gamma_2 \\
292.369 & x \geq\gamma_2
\end{array}
\right.
$$
and the induced energy, for instances $d$, is $f(cb)\frac{2d}{c^2}$. 
Moreover, the competitive ratio depends only on the product $cb$.

In particular, the speeds' choices are optimal when $cb\leq \gamma_1$ and when $cb\geq \gamma_2$. When $\gamma_1<cb<\gamma_2$, the derived competitive ratio is no more than 0.03 additively off from that induced by optimal choices of $s,r,k$. 
\end{theorem}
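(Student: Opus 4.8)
By Theorem~\ref{thm: naive as NLP} the task reduces to solving $\nlp{c}{b}$, with competitive ratio $c^{2}\eenergy{s,r,k}$. The first step is a scaling reduction. Since $\ttime{\cdot}$ is homogeneous of degree $-1$ and $\eenergy{\cdot}$ of degree $2$, substituting $(\hat s,\hat r,\hat k)=(cs,cr,ck)$ turns $\nlp{c}{b}$ into the program: minimize $\eenergy{\hat s,\hat r,\hat k}$ subject to $\ttime{\hat s,\hat r,\hat k}\le1$, $\hat r\ge\hat s$, and $0\le\hat s,\hat r,\hat k\le cb$; its optimal value $\hat E^{\star}$ depends only on $x:=cb$, and since $\eenergy{s,r,k}=\tfrac{1}{c^{2}}\eenergy{\hat s,\hat r,\hat k}$, the induced energy on an instance $d$ is $\hat E^{\star}(cb)\cdot\tfrac{2d}{c^{2}}$ and the competitive ratio equals $\hat E^{\star}(cb)$. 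This already yields the ``depends only on $cb$'' assertion and the stated energy formula; it remains to identify $\hat E^{\star}$ with the piecewise $f$ and to read off the optimizers.

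The regime $cb\ge\gamma_{2}=\rho$ is immediate: the unbounded optimizer $(\sigma/c,\rho/c,\kappa/c)$ of Theorem~\ref{thm: optimal speeds naive unbounded speeds} satisfies $r=\rho/c\le b$, so it is feasible for $\nlp{c}{b}$, hence optimal (as $\nlp{c}{b}$ only adds constraints to $\nlp{c}{\infty}$), and $f(cb)=292.369$. For $9\le cb<\gamma_{2}$ I would first prove two structural facts. (i) \emph{The time constraint is tight at every optimizer}: $\eenergy{\cdot}$ is strictly increasing and $\ttime{\cdot}$ strictly decreasing in $k$, so if $\ttime{\cdot}<c$ one lowers the energy by decreasing $k$ while staying feasible, the descent being halted by $\ttime{\cdot}=c$ rather than by $k=0$ (since $k\to0$ forces $\ttime{\cdot}\to\infty$; note also $\ttime{\cdot}=\infty$ when $r=s$, so $r>s$, $s>0$, $k>0$ at every optimizer). (ii) \emph{The bound $r=b$ is tight}: if $r<b$, then $r\le b$ and $s\le b$ (the latter since $s\le r<b$) are inactive, so the stationarity conditions coincide with the KKT system of $\nlp{c}{\infty}$, whose only feasible stationary point --- by Theorem~\ref{thm: optimal speeds naive unbounded speeds} and the KKT analysis outlined for it --- is $(\sigma/c,\rho/c,\kappa/c)$, contradicting $r<\rho/c$, unless the bound $k=b$ is also active, a residual sub-case I would eliminate by showing its KKT multiplier is forced negative. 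Imposing $\ttime{\cdot}=c$ and $r=b$ reduces the feasible set to the one-parameter curve $k=k(s)$ of the theorem's middle column, the only remaining constraint being $k(s)\le b$; an elementary manipulation gives $k(s)\le b\iff\hat s^{2}-(x-3)\hat s+x\le0\iff\hat s\in[\hat s_{-},\hat s_{+}]$ with $\hat s_{\pm}=\tfrac{1}{2}\!\left((x-3)\pm\sqrt{x^{2}-10x+9}\right)$, which is nonempty exactly when $x\ge9$ (consistent with Observation~\ref{lem: restrictions on cb f2f}). So the problem becomes: minimize the single-variable function $\psi(s):=\eenergy{s,b,k(s)}$ over $\hat s\in[\hat s_{-},\hat s_{+}]$.

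It remains to locate the minimum of $\psi$. For $9\le cb\le\gamma_{1}$, I would show $\psi$ is increasing on $[\hat s_{-}/c,\hat s_{+}/c]$ --- equivalently, that an explicit polynomial in $\hat s$ and $x$ is nonnegative there --- so the minimizer is the left endpoint $\hat s=\hat s_{-}$, at which also $k=b$; substituting into $\eenergy{\cdot}$ and using $\hat s_{-}^{2}=(x-3)\hat s_{-}-x$ yields $s=\tfrac{cb-3-\sqrt{(cb)^{2}-10cb+9}}{2c}$, $r=k=b$, and the competitive-ratio value given by the first branch of $f$ (a short algebraic identity confirms this simplification). The threshold $\gamma_{1}\approx9.06609$ is characterized as the value of $cb$ at which a stationary point of $\psi$ first enters the open interval, i.e. $\psi'(\hat s_{-}/c)=0$, obtained numerically. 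For $\gamma_{1}<cb<\gamma_{2}$ the minimizer is an interior stationary point $s^{\star}$ with $\psi'(s^{\star})=0$; after eliminating $k$ this is a polynomial equation in $\hat s$ with coefficients polynomial in $x$ and no usable closed form, so I would instead exhibit the explicit near-optimal choice $\hat s=\tilde s(x):=0.532412\,x-0.0262661\,x^{2}$ --- the parabola through the origin matching $s^{\star}$ at $x=\gamma_{1}$ (where $\hat s^{\star}=\hat s_{-}$) and at $x=\gamma_{2}$ (where $\hat s^{\star}=\sigma$) --- verify feasibility of the induced $(s,r,k)$, compute the corresponding competitive ratio (the middle branch of $f$), and bound the loss: because $\psi'(s^{\star})=0$, the gap $\psi(\tilde s)-\psi(s^{\star})$ is second order in $\tilde s-s^{\star}$, which, combined with $|\tilde s-s^{\star}|$ being small over the short interval $(\gamma_{1},\gamma_{2})$ (the two curves agree at both ends), is at most $0.03$ by a finite numerical estimate; continuity of $f$ at $\gamma_{1}$ and $\gamma_{2}$ serves as a consistency check.

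The step I expect to be hardest is this last additive-error bound --- rigorously certifying the explicit surrogate $\tilde s(x)$ against the root $s^{\star}$ of an awkward algebraic system across the middle interval --- together with making the ``$\psi$ attains its minimum at $\hat s_{-}$'' argument fully rigorous for $cb\le\gamma_{1}$ (excluding competing interior local minima, not merely checking the sign of $\psi'$ at the endpoint) and disposing cleanly of the $k=b$ sub-case inside the proof that $r=b$.
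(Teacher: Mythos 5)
Your proposal is correct in substance and reaches the same formulas by what is, at the top level, the same strategy as the paper (reduce to $\nlp{c}{b}$ via Theorem~\ref{thm: naive as NLP}, normalize so everything depends only on $x=cb$, solve the two outer regimes exactly and patch the middle regime with the same quadratic-in-$x$ surrogate matched at $\gamma_1$ and $\gamma_2$). The one genuine organizational difference is in the regime $9\le cb\le\gamma_1$: the paper guesses the active set $\{\ttime{\cdot}=c,\ r=b,\ k=b\}$, solves the tight time constraint for $s$, and then verifies the KKT multipliers $\lambda_1,\lambda_2,\lambda_3$ are nonnegative precisely while $cb\le\gamma_1$ (this is where $\gamma_1$ arises, as a root of $3x(2(x-9)x-3)+49$) together with positive definiteness of the Hessian; you instead eliminate $k$ along the curve $\{\ttime{\cdot}=c,\ r=b\}$ and argue monotonicity of the resulting one-variable function on $[\hat s_-,\hat s_+]$. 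Your route is a legitimate and arguably cleaner alternative, and your observation that $k(s)\le b$ is exactly the quadratic condition with roots $\hat s_\pm$ (degenerating at $x=9$) is consistent with the paper's Corollary~\ref{cor: solution when c=9/b}; but it trades the paper's multiplier-sign computations for an equally nontrivial monotonicity claim, so no work is saved. Two caveats. First, your structural claim that $r=b$ is tight at \emph{every} optimizer for all $9\le cb<\gamma_2$ is stronger than anything the paper proves or needs: in $(\gamma_1,\gamma_2)$ the theorem only asserts feasibility and near-optimality of a heuristic choice, and the paper merely \emph{motivates} keeping $r=b$ there; so the unresolved sub-case you flag ($k=b$, $r<b$) need not be ``disposed of'' at all for the middle regime, and for $cb\le\gamma_1$ it is subsumed by the paper's verification that the stated point satisfies first- and second-order conditions. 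Second, the $0.03$ additive bound is, in the paper as well, a purely numerical comparison against the numerically computed optimum (see Lemma~\ref{lem: naive case between gamma1 gamma2} and Figure~\ref{fig: comparisonVSoptimal-additive}); your second-order Taylor heuristic does not upgrade this to a certified bound, so you should not expect that step to become rigorous without interval-arithmetic-style verification. With those adjustments your plan fills in to a complete proof equivalent to the paper's.
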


\begin{corollary}
For $c=9, b=1$, the bounded-memory robot algorithm \naive{s,r,k} has energy consumption $28d/3$ and competitive ratio 378.

\end{corollary}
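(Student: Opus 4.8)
The plan is simply to specialize Theorem~\ref{thm: (sub)optimal speeds naive bounded speeds} to the pair $c=9$, $b=1$. Since $cb=9$, we fall into the first regime $9\le cb\le\gamma_1$ (in fact $cb=9$ is its left endpoint), so the theorem dictates $r=k=b=1$ and $s=\frac{-\sqrt{(bc)^2-10bc+9}+bc-3}{2c}$. Substituting $bc=9$, the discriminant $(bc)^2-10bc+9=81-90+9$ vanishes, so $s=\frac{9-3}{2\cdot 9}=\frac13$. Thus the algorithm under consideration is \naive{1/3,1,1}.

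Next I would confirm feasibility and compute the two quantities directly from Lemma~\ref{lem: energy time formulas}. For $(s,r,k)=(1/3,1,1)$ we have $\ttime{1/3,1,1}=\frac{2(1+1)}{1\cdot(1-1/3)}+3=6+3=9=c$, so the search time bound $\ttime{s,r,k}\le c$ holds (with equality), and trivially $0\le s,r,k\le 1=b$ and $r>s$, so \naive{1/3,1,1} is a feasible solution to \ecb{9}{1}. Its energy consumption on an instance $d$ is $2d\cdot\eenergy{1/3,1,1}=2d\cdot\frac{r}{r-s}\bigl(s^2+r^2+2k^2\bigr)=2d\cdot\frac{1}{2/3}\bigl(\tfrac19+1+2\bigr)=2d\cdot\frac32\cdot\frac{28}{9}=\frac{28d}{3}$, as claimed.

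For the competitive ratio I would invoke the last clause of Theorem~\ref{thm: naive as NLP}, which gives competitive ratio $c^2\cdot\eenergy{s,r,k}=81\cdot\frac{14}{3}=378$; equivalently, one checks that the function $f$ from Theorem~\ref{thm: (sub)optimal speeds naive bounded speeds} satisfies $f(9)=\frac12\cdot 9\cdot(9\cdot 9+3)=378$ (the radical $\sqrt{(x-9)(x-1)}$ being zero at $x=9$), and that $f(9)\cdot\frac{2d}{c^2}=378\cdot\frac{2d}{81}=\frac{28d}{3}$, consistent with the above. There is essentially no obstacle here: the only point requiring a little care is that $cb=9$ sits exactly on the boundary between the infeasible range $cb<9$ (Observation~\ref{lem: restrictions on cb f2f}) and the first case of Theorem~\ref{thm: (sub)optimal speeds naive bounded speeds}, and that the square-root term in the formula for $s$ drops out precisely there; everything else is arithmetic substitution.
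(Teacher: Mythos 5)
Your proposal is correct and follows essentially the same route as the paper: the corollary is just the specialization of Theorem~\ref{thm: (sub)optimal speeds naive bounded speeds} (equivalently, of Corollary~\ref{cor: solution when c=9/b} inside the proof of Lemma~\ref{lem: optimizers for c>9/b, speed bound b}) to $cb=9$, giving $s=1/3$, $r=k=1$, after which the energy $2d\cdot\frac{3}{2}\cdot\frac{28}{9}=\frac{28d}{3}$ and competitive ratio $81\cdot\frac{14}{3}=378$ follow from Lemma~\ref{lem: energy time formulas} and Theorem~\ref{thm: naive as NLP} exactly as you compute. Your observation that the radical vanishes at the boundary $cb=9$ is the only point of care, and you handle it correctly.
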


Theorem~\ref{thm: (sub)optimal speeds naive bounded speeds} is proven by solving 
$\nlp{c}{b}$ of Theorem~\ref{thm: naive as NLP}. In particular, the induced competitive ratio of \naive{s,r,k} for the choices of Theorem~\ref{thm: (sub)optimal speeds naive bounded speeds} is summarized in Figure~\ref{fig: compratiof2f} (Appendix~\ref{sec: omitted figures}). Speed values $s,r,k$, are chosen optimally when $cb$ is either at most $\gamma_1$ or at least $\gamma_2$ (i.e. optimizers to $\nlp{c}{b}$ admit analytic description). The optimal speed parameters when $\gamma_1<cb<\gamma_2$ cannot be determined analytically (they are roots of high degree polynomials). The values that appear in Theorem~\ref{thm: (sub)optimal speeds naive bounded speeds} are heuristically chosen, but interestingly induce nearly optimal competitive ratio, see Figure~\ref{fig: comparisonVSoptimal-additive} (Appendix~\ref{sec: omitted figures}). 

\ignore{
\begin{minipage}[t]{0.39\textwidth}
  \vspace{0pt}  
\begin{figure}[H]
  \centering
  \includegraphics[width=0.8\linewidth]{figgs/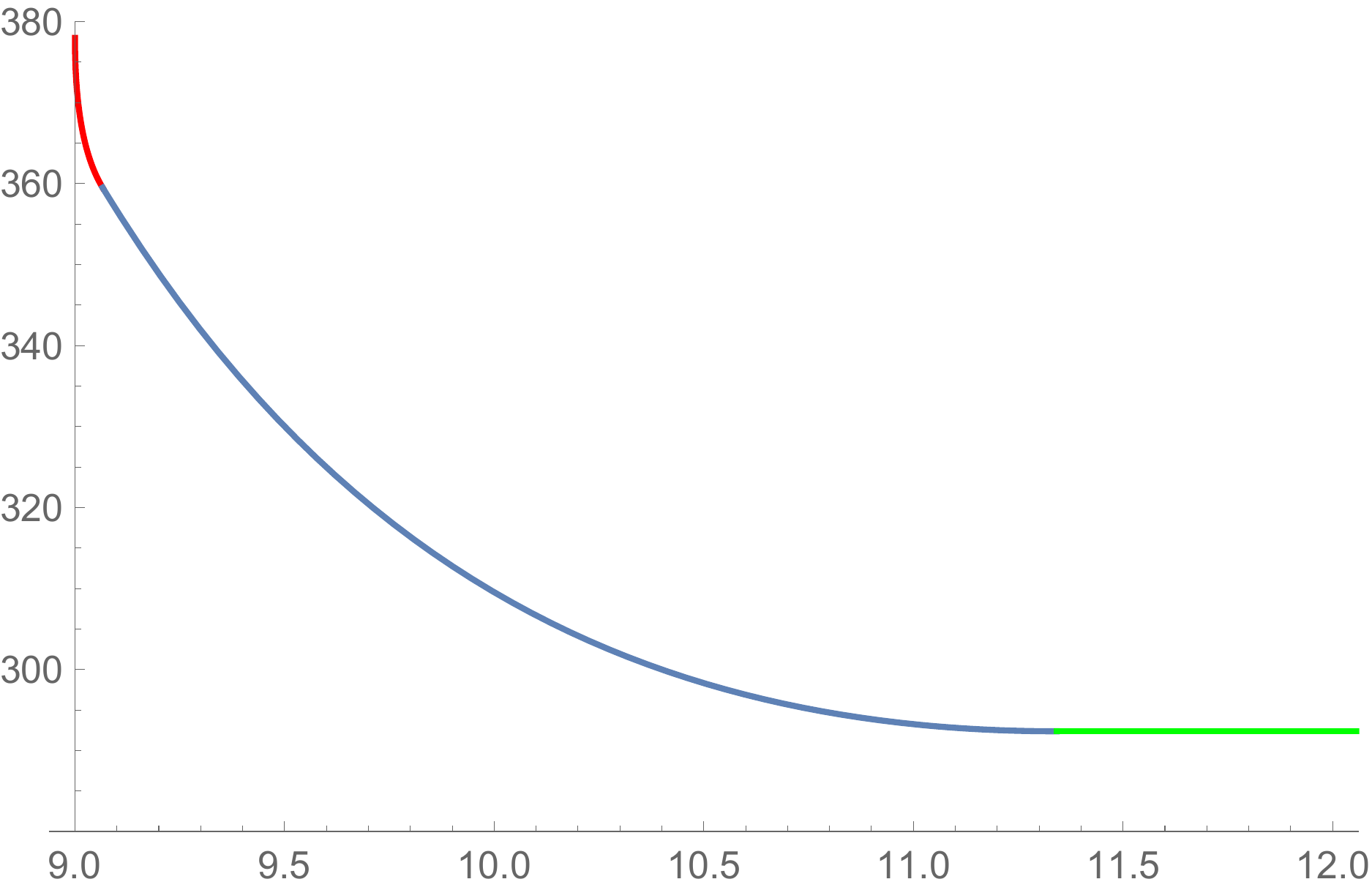}
\caption{
The competitive ratio of algorithm 
\naive{s,r,k} (vertical axis) for the entire spectrum of $cb\geq 9$ (horizontal axis). Red curve corresponds to the case $cb\leq \gamma_1$, blue curve to the case $cb \in (\gamma_1,\gamma_2)$ and green curve to the case $cb\geq \gamma_2$. The curve is continuous and differentiable for all $cb\geq 9$. 
}
\label{fig: compratiof2f}
\end{figure}
\vline
\end{minipage}
~~~~~
\begin{minipage}[t]{0.54\textwidth}
  \vspace{0pt}  
\begin{figure}[H]
  \centering
  \includegraphics[width=0.65\linewidth]{figgs/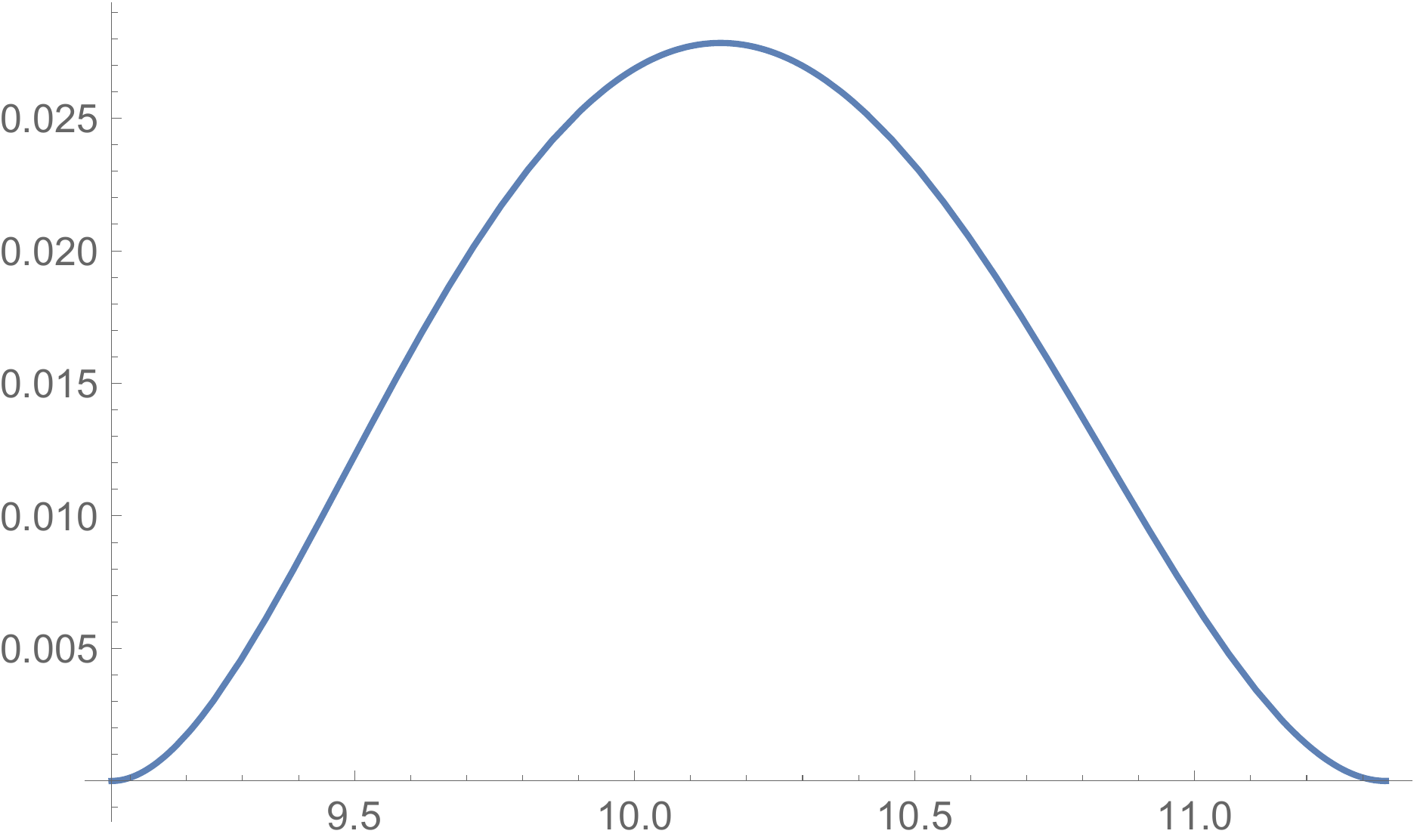}
\caption{
Comparison between the competitive ratio achieved by using \textit{optimal} speed parameters to $\nlp{c}{b}$ of Theorem~\ref{thm: naive as NLP} (calculated numerically using software) and the competitive ratio achieved by the choices of Theorem~\ref{thm: (sub)optimal speeds naive bounded speeds}. 
The vertical axis is the difference of the competitive ratios, and the horizontal axis corresponds to the values of $cb\in (\gamma_1,\gamma_2)$ (for all other values of $cb$ the difference is provably 0). 
}
\label{fig: comparisonVSoptimal-additive}
\end{figure}
\end{minipage}
}

The proof of Theorem~\ref{thm: (sub)optimal speeds naive bounded speeds} is given by Lemma~\ref{lem: optimizers for c>9/b, speed bound b} (the case $cb\leq \gamma_1$), Lemma~\ref{lem: optimal speeds naive bounded speeds b>=c>=r1} (the case $cb\geq \gamma_2$), and Lemma~\ref{lem: naive case between gamma1 gamma2} (the case $\gamma_1 < cb < \gamma_2$). Next we state these Lemmata, and we sketch their proofs.

\begin{lemma}[Proof on page~\pageref{sec: optimizers for c>9/b, speed bound b}]
\label{lem: optimizers for c>9/b, speed bound b}
For every $c\in (9/b, \gamma_1/b]$, where $\gamma_1\approx 9.06609$, the optimizers to~$\nlp{c}{b}$ are
$k=r=b$, and $s_b=\frac{-\sqrt{(bc)^2-10 bc+9}+bc-3}{2 c}$. 
The induced 
competitive ratio is 
$f(cb)$, (see definition of $f(x)$ for $x\leq \gamma_1$ in statement of Theorem~\ref{thm: (sub)optimal speeds naive bounded speeds}), and the energy consumption, for instances $d$, is $f(cb)\frac{2d}{c^2}$.
\ignore{
The induced energy is
then $\frac{b \left(b c \left(b c-\sqrt{(b c-9) (b c-1)}\right)+\sqrt{(b c-9) (b c-1)}+3\right)}{c}$, and the competitive ratio is 
$
\frac{1}{2} b c \left(b c \left(b c-\sqrt{(b c-9) (b c-1)}\right)+\sqrt{(b c-9) (b c-1)}+3\right)
$.
}
\ignore{
Later I call $t(x) = \sqrt{(x-9) (x-1)}$, and with that notation, competitive ratio becomes 
$$
\frac{1}{2} bc (bc (bc-t(bc))+t(bc)+3)
$$
(recall here $bc = 9 .. 9.06$). 
}
\end{lemma}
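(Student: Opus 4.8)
The plan is to solve the program $\nlp{c}{b}$ of Theorem~\ref{thm: naive as NLP} in closed form over the range $cb\in(9,\gamma_1]$; recall that for $cb=9$ the feasible region of $\nlp{c}{b}$ collapses, by Lemma~\ref{lem: bounded speeds spectrum of r}, to the single point $(b/3,b,b)$, which is exactly where our formula for $s_b$ specializes. For $cb>9$ I would first observe, exactly as in Lemma~\ref{lem: tight constraint}, that the time constraint $\ttime{s,r,k}\le c$ holds with equality at every optimizer: were it slack, shrinking $k$ slightly keeps all constraints satisfied (in particular $k\le b$) and strictly lowers $\eenergy{s,r,k}$, which is increasing in $k$. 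Moreover every optimizer has $s,r,k>0$ and $r>s$ (otherwise $\ttime{s,r,k}=\infty$), and $s=b$ is impossible (it would force $r=b=s$). Hence, besides $\ttime{s,r,k}=c$, the only constraints that can be active are $r\le b$ and $k\le b$, and I would enumerate the possibilities.

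The core step is to rule out every active set except $\{\ttime{s,r,k}=c,\, r=b,\, k=b\}$. If neither box constraint is active, the KKT system is precisely that of the unbounded problem $\nlp{c}{\infty}$, whose unique nonnegative solution is $(\sigma/c,\rho/c,\kappa/c)$ by Theorem~\ref{thm: optimal speeds naive unbounded speeds}; this is infeasible for $\nlp{c}{b}$ since its largest coordinate $\rho/c$ exceeds $b$ when $cb<\gamma_2=\rho$, and $cb\le\gamma_1<\gamma_2$. If exactly one box constraint is active, the reduced two-variable stationarity equations together with $\ttime{s,r,k}=c$ pin down a single candidate; I would show that for $cb\le\gamma_1$ this candidate either violates the inactive box bound or carries a negative multiplier. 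The decisive branch is the one where only $r\le b$ is active: there the free coordinate $k$ of the candidate stays $\ge b$ throughout $(9,\gamma_1]$ — equivalently, the multiplier $\mu_k$ of $k\le b$ at the point $(s_b,b,b)$ below remains nonnegative on $(9,\gamma_1]$ — and $\gamma_1\approx 9.06609$ is precisely the value at which $k=b$, i.e.\ $\mu_k=0$, which is exactly why the optimal structure changes there. Since, after appending redundant bounds as in Lemma~\ref{lem: optimizer exists}, the feasible region is compact and the objective continuous, a global minimizer exists and satisfies KKT, hence lies in the active set $\{\ttime{s,r,k}=c,\, r=b,\, k=b\}$.

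It then remains to minimize over that face. With $r=k=b$, the tight time constraint $\ttime{s,b,b}=c$ reads $\frac{4}{b-s}+\frac{1}{s}=c$, i.e.\ $cs^{2}-(bc-3)s+b=0$, with discriminant $(bc)^{2}-10bc+9=(bc-1)(bc-9)\ge 0$; its two roots straddle $b/3$ (the minimizer of $s\mapsto\ttime{s,b,b}$), so the feasible values of $s$ form the interval between them. Since $\eenergy{s,b,b}=\frac{b(s^{2}+3b^{2})}{b-s}$ is strictly increasing in $s$ on $(0,b)$, the minimum is attained at the smaller root $s_b=\frac{-\sqrt{(bc)^{2}-10bc+9}+bc-3}{2c}\le b/3$, as claimed. (Consistently, $\frac{\partial}{\partial s}\ttime{s,b,b}<0$ and $\frac{\partial}{\partial s}\eenergy{s,b,b}>0$ at $s_b$, so the multiplier of the time constraint is positive there, whereas it would be negative at the larger root.)

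Finally, using $b-s_b=\frac{\sqrt{(bc)^{2}-10bc+9}+bc+3}{2c}$, I would substitute $s_b$ into $\eenergy{s_b,b,b}=\frac{b(s_b^{2}+3b^{2})}{b-s_b}$ and simplify to the competitive ratio $c^{2}\eenergy{s_b,b,b}=f(cb)$, whence by Lemma~\ref{lem: energy time formulas} the energy on instance $d$ is $2d\,\eenergy{s_b,b,b}=\frac{2d}{c^{2}}f(cb)$; a check at $cb=9$ gives $s_b=b/3$, competitive ratio $378$ and energy $28d/3$, matching the Corollary. The main obstacle is the active-set enumeration of the second paragraph — in particular analyzing the branch where only $r\le b$ is active and showing its $k$-coordinate does not drop below $b$ before $cb=\gamma_1$ — which is where the non-closed-form constant $\gamma_1$ is forced upon us; the rest is monotonicity plus solving a quadratic and routine algebra.
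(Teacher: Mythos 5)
Your overall route is the same as the paper's --- a KKT/active-set analysis that ends with minimizing $\eenergy{s,b,b}$ over the face $r=k=b$ --- and the parts you actually execute are correct and, in one respect, cleaner than the paper: you pick the smaller root of $cs^2-(bc-3)s+b=0$ by observing that $\eenergy{s,b,b}=\frac{b(s^2+3b^2)}{b-s}$ is increasing on $(0,b)$, whereas the paper discards the larger root by showing its time-constraint multiplier is negative. The closing algebra (discriminant $(bc-1)(bc-9)$, the check at $cb=9$, the competitive ratio $c^2\eenergy{s_b,b,b}=f(cb)$) is all fine.

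The gap is that the lemma's actual content --- the interval $(9/b,\gamma_1/b]$ and the value $\gamma_1\approx 9.06609$ --- is never derived. Your second paragraph asserts that the only surviving active set is $\{\ttime{s,r,k}=c,\ r=b,\ k=b\}$ and that $\gamma_1$ is ``precisely the value at which $\mu_k=0$,'' but you do not compute any multiplier, do not analyze the branch where only one box constraint is active, and do not verify second-order conditions; you flag all of this yourself as ``the main obstacle.'' This is exactly where the work lies: in the paper one solves the linear system for the three multipliers at $(s_b,b,b)$, shows $\lambda_1(s_2)>0$ for all $cb>9$, and shows that the multiplier of $k\le b$ is nonnegative precisely when $cb\le\gamma_1$, with $\gamma_1$ emerging as the relevant root of the cubic $3x\bigl(2(x-9)x-3\bigr)+49$; one then checks $\nabla^2\eenergy{s_b,b,b}\succ 0$. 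Without some version of these computations the threshold $\gamma_1$ is unexplained and optimality of $(s_b,b,b)$ on $(9/b,\gamma_1/b]$ is unproved, so the proposal as written establishes only the restricted claim ``$(s_b,b,b)$ is the best point on the face $r=k=b$,'' not the lemma.
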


\ignore{
\begin{proof}[Proof of Lemma~\ref{lem: optimizers for c>9/b, speed bound b}]

An immediate corollary from the proof of Lemma~\ref{lem: bounded speeds spectrum of r} is the following
\begin{corollary}
The unique solution to~$\nlp{c}{b}$ when $c=9/b$ is given by 
$$
r=k=b, s=\frac{b}3,
$$
inducing energy $\frac{28 b^2d}{3}$, and competitive ratio $\frac{14 (cb)^2}{3}=378$.
\end{corollary}

Next we find solutions for $c>9/b$ so that $r,k\leq b$ remain tight.
Since, when $c=9/b$, there is only one optimizer $s=b/3, r=b,k=b$, two inequality constraints are tight. The next calculations investigate the spectrum of $c$ for which the same constraints remain tight for the optimizer. 

We write 1st order necessary optimality conditions for $\nlp{c}{b}$, given that the candidate optimizer satisfies the time constraint, and the two $r,k\leq b$ speed bound constraints tightly
\ignore{
\begin{align}
& \frac{2 (k+r)}{k (r-s)}+\frac{1}{s} \leq c 		\label{time c bound b} \\
& r\leq b															\label{r bound bound b}\\
& k\leq b															\label{k bound bound b}
\end{align}
imply that 
}
\begin{align*}
-\nabla \eenergy{s,r,k} & = 
\lambda_1
\nabla \ttime{s,r,k}
+
\lambda_2 
\left(
\begin{array}{c}
0\\
1\\
0
\end{array}
\right) 
+
\lambda_3 
\left(
\begin{array}{c}
0\\
0\\
1
\end{array}
\right) 
 \\
\ttime{s,r,k}&=c\\
r&=b \\
k&=b\\
\lambda_1,\lambda_2,\lambda_3 &\geq 0
\end{align*}

From the tight time constraint, and solving for $s$ we obtain that 
$$
s_{1,2}=\frac{\pm \sqrt{b^2 c^2-10 b c+9}+b c-3}{2 c}
$$

For each $s\in \{s_1, s_2\}$, the first gradient equality defines a linear system over $\lambda_1, \lambda_2, \lambda_3$ whose solutions are
$$
\lambda_1 =\frac{(s-3) s^2}{3 s-1}, ~~
\lambda_2 = -\frac{-5 s^3-9 s+2}{(s-1) (3 s-1)}, ~~
\lambda_3=-\frac{2 \left(s^3-3 s^2-6 s+2\right)}{(s-1) (3 s-1)}.
$$
$$
\lambda_1 =\frac{b s^2 (3 b-s)}{b-3 s}, ~~
\lambda_2 =-\frac{2 b^3-9 b^2 s-5 s^3}{(b-3 s) (b-s)}, ~~
\lambda_3=-\frac{2 \left(2 b^3-6 b^2 s-3 b s^2+s^3\right)}{(b-3 s) (b-s)}
$$
respectively. 
As long as all dual multiplies $\lambda_i=\lambda_i(s)$ are positive, corresponding solution $(s,b,b)$ is optimal to $R_c'$, provided that $\nabla^2 f(s,b,b)\succ 0$ .

First we claim that $s_1$ cannot be part of an optimizer. 
Indeed, 
$$\lambda_1(s_1)=
-\frac{b \left(5 b c-\sqrt{(b c-9) (b c-1)}+3\right) \left(b c+\sqrt{(b c-9) (b c-1)}-3\right)^2}{4 c^2 \left(b c+3 \sqrt{(b c-9) (b c-1)}-9\right)}
$$
Recall that $bc>9$, and hence the denominator of $\lambda_1(s_1)$ as well as $b c+\sqrt{(b c-9) (b c-1)}-3$ are strictly positive. But then, the sign of $\lambda_1(s_1)$ is exactly the opposite of $5 b c-\sqrt{(b c-9) (b c-1)}+3$. 
Define function $h(x):=5 x-\sqrt{(x-9) (x-1)}+3$ over the domain $x>9$. It is easy to verify that $h(x)$ preserves positive sign (in fact $\min_{x\geq 9} h(x) = h\left( \frac{5}{3} \left(\sqrt{6}+3\right) \right) = 8 \sqrt{6}+28>0$
Hence, $\lambda_1(s_1)<0$ that concludes our claim. 

Next we investigate the spectrum of $c$ for which all $\lambda_i(s_2)$ remain non-negative. 

Our next claim is that for all $bc>9$ we have that $s_2(c)<b/3$. Indeed, consider function 
$$d(x):=3 \sqrt{x^2-10 x+9}-b x+9.
$$
It is easy to see that $d(bc)=6c\left(  \frac{b}3 - s_2(c)  \right)$. 
But then, elementary calculations show that 
$\min_{x\geq 9} d(x) = d(9) =0$, proving that $s_2(c)<b/3$ as claimed. 

Next we investigate the sign of $\lambda_1(s_2), \lambda_2(s_2), \lambda_3(s_2)$. For this, introduce function $t(x) = \sqrt{(x-9) (x-1)}$, and note that 
\begin{align*}
\lambda_1(s_2)&=
-\frac{b \left(-b c+t(bc)+3\right)^2 \left(5 b c+t(bc)+3\right)}{4 c^2 \left(b c-3 \left(t(bc)+3\right)\right)},
\\
\lambda_2(s_2)&=
\frac{30 \left(t(bc)+3\right)+b c \left(b c \left(-3 b c+3 t(bc)+32\right)-5 \left(5 t(bc)+11\right)\right)}{4 c (b c-9)},
\\
\lambda_3(s_2)&=
\frac{b c \left(-23 t(bc)+b c \left(-3 b c+3 t(bc)+22\right)+49\right)-12 \left(t(bc)+3\right)}{4 c (b c-9)}.
\end{align*}

Claim 1: $\lambda_1(s_2) >0$ for all $c>9/b$. \\
Define $d_1(x)=x - 3 (3 + t (x))$ and $d_2(x)=3+5x +t(x)$. Note that $sign\left(\lambda_1(s_2) \right)= - sign( d_1(bc)) \cdot sign(d_2(bc))$. 
Simple calculus shows that $d_1(x)$ is strictly decreasing in $x\geq 9$, and $d_1(9)=0$, and therefore  $d_1(bc)<0$ for all $c>9/b$. Similarly, it is easy to see that $d_2(x)$ is strictly increasing in $x>9$, and $d_2(9)=45$. Therefore $d_2(bc)>0$ for all $c>9/b$. Overall this implies that $\lambda_1(s_2)$ is positive for all $c>9/b$. 

Claim 2: $\lambda_2(s_2) >0$ for all $c\in ( 9/b, 9.72307/b)$. \\
First we observe that the denominator of $\lambda_2(s_2)$ preserves positive sign for $c>9/b$. So we focus on the sign of the numerator we we abbreviate by 
$d_3(x)=30 (3 + t(x)) + x (x (32 - 3 x + 3 t(x)) - 5 (11 + 5 t(x)))$.
Note that $d_3(x) = 0$ is equivalent to that 
\begin{align*}
 & \left( 3 x^2-25 x+30 \right) t(x)-\left( 3 x^3-32 x^2+55 x-90\right) = 0  \\
\Leftrightarrow &
\left( 3 x^2-25 x+30 \right)^2 t^2(x)-\left( 3 x^3-32 x^2+55 x-90\right)^2 = 0\\
\Leftrightarrow &
-8 (x-9) x (3 x (x (2 x-25)+60)-175)=0
\end{align*}
Degree-3 polynomial $3 x (x (2 x-25)+60)-175$ has only one real root, which is 
$$
\frac{1}{18} \left(3 \sqrt[3]{5 \left(192 \sqrt{10}+1055\right)}+\sqrt[3]{142425-25920 \sqrt{10}}+75\right)
\approx 9.72307
$$
Hence, $\lambda_2(s_2) >0$ for all $c\in ( 9/b, 9.72307/b)$

Claim 3: $\lambda_3(s_2) >0$ for all $c\in ( 9/b, 9.06609/b)$. \\
First we observe that the denominator of $\lambda_3(s_2)$ preserves positive sign for $c>9/b$. So we focus on the sign of the numerator we we abbreviate by 
$d_4(x)=x (x (3 t(x)-3 x+22)-23 t(x)+49)-12 (t(x)+3)$.
Note that $d_4(x) = 0$ is equivalent to that 
\begin{align*}
 & \left( 3 x^2-23 x-12 \right) t(x)-\left( 3 x^3-22 x^2-49 x+36\right) = 0  \\
\Leftrightarrow &
\left( 3 x^2-23 x-12 \right)^2 t^2(x)-\left( 3 x^3-22 x^2-49 x+36\right)^2= 0\\
\Leftrightarrow &
-16 (x-9) x (3 x (2 (x-9) x-3)+49)=0
\end{align*}
The roots of degree-3 polynomial $3 x (2 (x-9) x-3)+49$ are 
\begin{align*}
\gamma_1&= 3+\sqrt{38} \cos \left(\frac{1}{3} \tan ^{-1}\left(\frac{127}{151 \sqrt{2}}\right)\right)\approx 9.06609 \\
\gamma'&= 3+\sqrt{\frac{57}{2}} \sin \left(\frac{1}{3} \tan ^{-1}\left(\frac{127}{151 \sqrt{2}}\right)\right)+\sqrt{\frac{19}{2}} \left(-\cos \left(\frac{1}{3} \tan ^{-1}\left(\frac{127}{151 \sqrt{2}}\right)\right)\right) \approx 0.916629 \\
\gamma''&= 3-\sqrt{\frac{57}{2}} \sin \left(\frac{1}{3} \tan ^{-1}\left(\frac{127}{151 \sqrt{2}}\right)\right)+\sqrt{\frac{19}{2}} \left(-\cos \left(\frac{1}{3} \tan ^{-1}\left(\frac{127}{151 \sqrt{2}}\right)\right)\right) \approx -0.982723\\
\end{align*}
We conclude that 
$\lambda_3(s_2)$ preserves positive sign for all $c\in ( 9/b, \gamma_1/b)$.

Overall, we have shown that feasible solution 
$s_0=\frac{- \sqrt{b^2 c^2-10 b c+9}+b c-3}{2 c}, r_0=k_0=b$
satisfies necessary 1st order optimality conditions. We proceed by checking that $s_0, r_0, k_0$ satisfy 2nd order sufficient conditions, which amounts to showing that  
$\nabla^2 f(s_0,b,b)\succ 0$. 
Indeed, 
$$
\nabla^2 f(s_0,b,b)=
\frac{b^3}{(b-s_0)^3}
\left(
\begin{array}{ccc}
 8 & -\frac{\left(b+s_0\right) \left(b^2+4 s_0 b-s_0^2\right)}{b^3} & 4-\frac{4 s_0}{b} \\
 -\frac{\left(b+s_0\right) \left(b^2+4 s_0 b-s_0^2\right)}{b^3} & \frac{2 \left(b^3-s_0 b^2+3 s_0^2 b+s_0^3\right)}{b^3} & \frac{4 s_0 \left(s_0-b\right)}{b^2} \\
 4-\frac{4 s_0}{b} & \frac{4 s_0 \left(s_0-b\right)}{b^2} & \frac{4 \left(b-s_0\right){}^2}{b^2} \\
\end{array}
\right)$$
By setting $q:=s_0/b = \frac{-\sqrt{b^2 c^2-10 b c+9}+b c-3}{2 b c}$, we obtain the simpler form 
\begin{equation}
\nabla^2 f(s_0,b,b)=
\frac{b^3}{(b-s_0)^3}
\left(
\begin{array}{ccc}
 8 & (-q-1) \left(-q^2+4 q+1\right) & 4-4 q \\
 (-q-1) \left(-q^2+4 q+1\right) & 2 \left(q^3+3 q^2-q+1\right) & 4 (q-1) q \\
 4-4 q & 4 (q-1) q & 4 (1-q)^2 \\
\end{array}
\right)
\label{eq: simplified matrix}
\end{equation}
When $bc>9$ we have that $q<1/3$, $q$ is decreasing in the product of $bc>9$, and it remains positive. 
The eigenvalues of the matrix that depends only on $q$ and for any $q\in (0,1/3]$ can be obtained using a closed formula (they are real roots of a degree-3 polynomial). In Figure~\ref{fig: eigen} we depict their behavior. 
Since all eigenvalues are all positive, the candidate optimizer is indeed a minimizer. 
\begin{figure}[h!]
  \centering
  \includegraphics[width=0.5\linewidth]{figgs/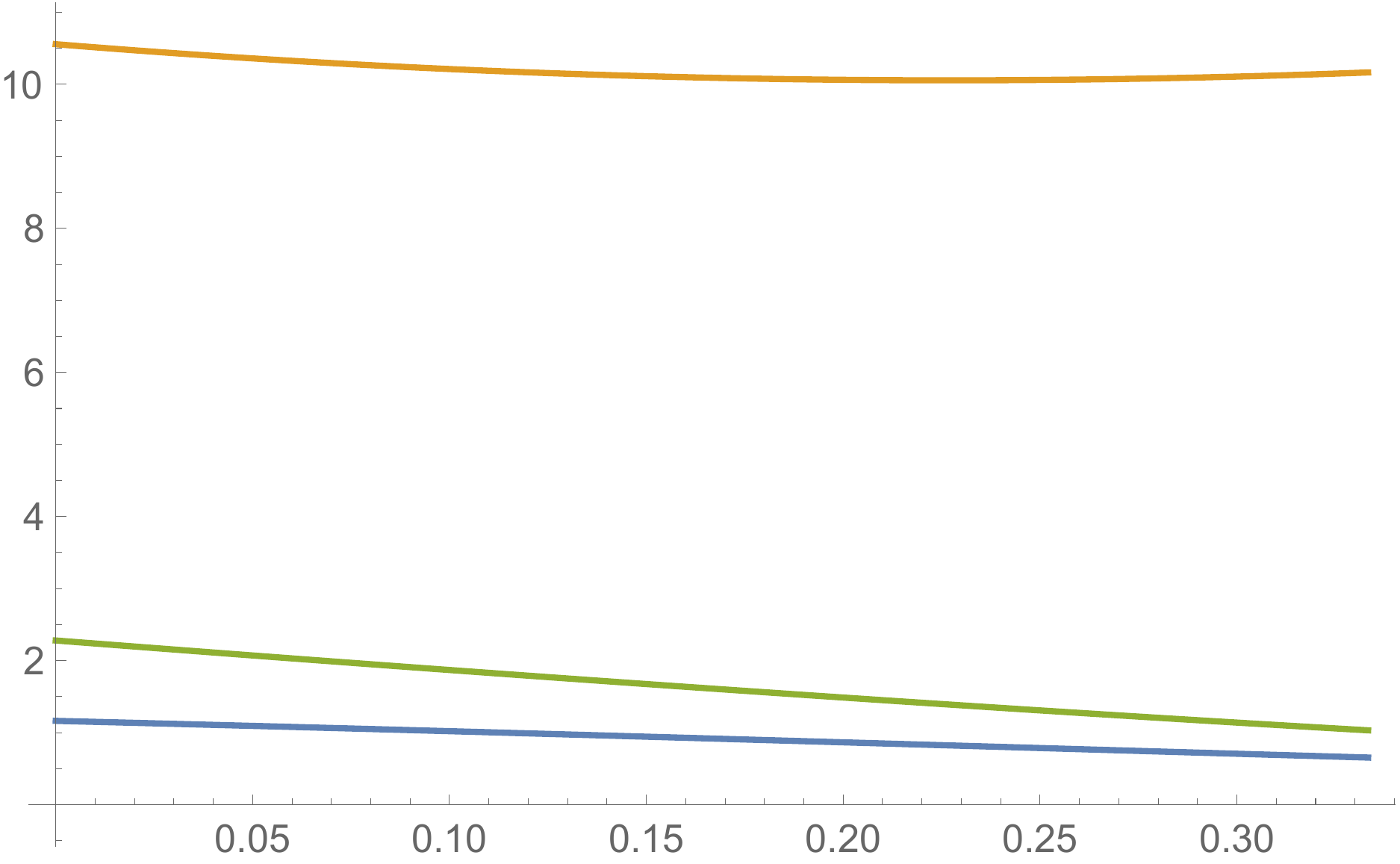}
\caption{
The eigenvalues of matrix~\eqref{eq: simplified matrix} as a function of $q\in (0,1/3]$ (and scaled by $b^3/(b-s_0)^3$). }
\label{fig: eigen}
\end{figure}

\end{proof}
}

For proving Lemma~\ref{lem: optimizers for c>9/b, speed bound b}, first we recall the known optimizer for the special case $cb=9$ (see Corollary~\ref{cor: solution when c=9/b} within the Proof of Lemma~\ref{lem: optimizers for c>9/b, speed bound b} on page~\pageref{sec: optimizers for c>9/b, speed bound b}), and we identify the tight constraints. Requiring that the exact same inequality constraints to $\nlp{c}{b}$ remain tight, we ask how large can the product $cb$ be so as to have KKT condition hold true. From the corresponding algebraic system, we obtain the answer $cb\leq \gamma_1 \approx 9.06609$. 

Similarly, from Theorem~\ref{thm: optimal speeds naive unbounded speeds} we know the optimizers to $\nlp{c}{b}$ for large enough values of $cb$, and the corresponding tight constraints to the NLP. Again, using KKT conditions, we show that the same constraints remain tight for the optimizers as long as $cb\geq \gamma_2 \approx 11.3414$. This way we obtain the following Lemma. 

\begin{lemma}[Proof on page~\pageref{sec: optimal speeds naive bounded speeds b>=c>=r1}]
\label{lem: optimal speeds naive bounded speeds b>=c>=r1}
For every $c>\rho/b\approx 11.3414/b$, the optimal speeds of \naive{s,r,k} for \ecb{c}{b}
are 
$
s=\sigma/c,  ~
r=\rho/c, ~
k=\kappa/c,
$
i.e. they are the same as for \ecb{c}{\infty}. 
If the target is placed at distance $d$ from the origin, then the induced energy equals 
$584.738 \frac{d}{c^2}$. 
Moreover, the induced competitive ratio is $292.369$, and is independent of $b,c$. 
\end{lemma}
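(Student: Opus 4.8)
To establish Lemma~\ref{lem: optimal speeds naive bounded speeds b>=c>=r1}, the plan is to exploit the fact that $\nlp{c}{b}$ is nothing but $\nlp{c}{\infty}$ with the extra box constraint $s,r,k\le b$ adjoined, so that when $c>\rho/b$ the optimizer of the unbounded problem already satisfies this box and hence stays optimal for the bounded one.

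First I would check that the unbounded optimizer lies strictly inside the box. By Theorem~\ref{thm: optimal speeds naive unbounded speeds}, for every $c>0$ the (unique) optimizer of $\nlp{c}{\infty}$ is $p:=(\sigma/c,\rho/c,\kappa/c)$, and from the numerical values in~\eqref{equa: values sigma, rhow, kappa} we have $0<\sigma<\kappa<\rho$, so $\max\{\sigma/c,\rho/c,\kappa/c\}=\rho/c$. Thus $p$ satisfies $0<s,r,k<b$ precisely when $\rho/c<b$, i.e. when $c>\rho/b$; since $p$ also meets the time constraint tightly, namely $\ttime{\sigma/c,\rho/c,\kappa/c}=c$, and $r>s$, it is feasible for $\nlp{c}{b}$ throughout this range of $c$.

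Next I would upgrade this to optimality for the bounded problem. Since every feasible point of $\nlp{c}{b}$ is feasible for $\nlp{c}{\infty}$, for any feasible $q$ of $\nlp{c}{b}$ we get $\eenergy{q}\ge\eenergy{p}$, with equality only at $q=p$ by uniqueness of the minimizer of $\nlp{c}{\infty}$ (established in the proof of Theorem~\ref{thm: optimal speeds naive unbounded speeds}). Hence $p$ is the unique minimizer of $\nlp{c}{b}$, and by Theorem~\ref{thm: naive as NLP} the optimal speeds of \naive{s,r,k} for \ecb{c}{b} are $s=\sigma/c$, $r=\rho/c$, $k=\kappa/c$, the same as for \ecb{c}{\infty}. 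Equivalently, one could run KKT directly on $\nlp{c}{b}$: at $p$ none of the constraints $s,r,k\le b$ is active when $c>\rho/b$, so the KKT system for $\nlp{c}{b}$ reduces verbatim to the one already solved for $\nlp{c}{\infty}$, and the second-order sufficient condition is the same positive-definite Hessian; the containment of feasible regions is still what turns local optimality into global optimality. Finally I would read off the numerics: by Lemma~\ref{lem: energy time formulas}, on instance $d$ the energy is $2d\cdot\eenergy{\sigma/c,\rho/c,\kappa/c}=\frac{2d}{c^2}\cdot\frac{\rho(2\kappa^2+\rho^2+\sigma^2)}{\rho-\sigma}\approx 584.738\,d/c^2$, and by the competitive-ratio clause of Theorem~\ref{thm: naive as NLP} the competitive ratio equals $c^2\cdot\eenergy{\sigma/c,\rho/c,\kappa/c}=\frac{\rho(2\kappa^2+\rho^2+\sigma^2)}{\rho-\sigma}\approx 292.369$, which depends on neither $b$ nor $c$.

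I do not expect a serious obstacle. The single delicate point is the global-optimality claim: a bare KKT check at $p$ only certifies a local minimum, so the argument genuinely needs both the uniqueness of the optimizer of $\nlp{c}{\infty}$ and the containment of the feasible region of $\nlp{c}{b}$ in that of $\nlp{c}{\infty}$, which together force $p$ to be globally optimal for $\nlp{c}{b}$. Everything else is direct substitution into the formulas of Lemma~\ref{lem: energy time formulas} and Theorem~\ref{thm: naive as NLP}.
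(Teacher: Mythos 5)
Your proposal is correct and follows essentially the same route as the paper: the paper's proof simply observes that the optimizers of $\nlp{c}{\infty}$ from Theorem~\ref{thm: optimal speeds naive unbounded speeds} satisfy the speed bounds $s,r,k\le b$ exactly when $\rho/c\le b$, so the two NLPs share their optimizers in that range, and the numerics follow by substitution. Your additional remarks on feasible-region containment and uniqueness just make explicit the standard argument the paper leaves implicit.
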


\ignore{
\begin{proof}[Proof of Lemma~\ref{lem: optimal speeds naive bounded speeds b>=c>=r1}]
By Theorem~\ref{thm: optimal speeds naive unbounded speeds}, we know the optimizers to $\nlp{c}{\infty}$; 
$
s=\sigma/c, 
r=\rho/c, 
k=\kappa/c,
$.
These optimizers satisfy the speed bound constraints $s,r,k\leq b$ as long as $\max\{s,r,k\}\leq b$, i.e. $\rho/c\leq b$. Hence, when $c\geq \rho/b$, Non Linear Programs $\nlp{c}{\infty}$, $\nlp{c}{b}$ have the same optimizers. 
\end{proof}
}

The case $\gamma_1 < cb < \gamma_2$ can be solved optimally only numerically, since the best speed values are obtained by roots to a high degree polynomial. Nevertheless, the following lemma proposes a heuristic choice of speeds (that of Theorem~\ref{thm: (sub)optimal speeds naive bounded speeds}) which is surprisingly close to the optimal (as suggested by Theorem~\ref{thm: (sub)optimal speeds naive bounded speeds}, see also Figure~\ref{fig: comparisonVSoptimal-additive}). 

\begin{lemma}[Proof on page~\pageref{sec: naive case between gamma1 gamma2}]
\label{lem: naive case between gamma1 gamma2}
The choices of $s,r,k$ of Theorem~\ref{thm: (sub)optimal speeds naive bounded speeds} when $\gamma_1<cb<\gamma_2$ are feasible. Moreover, the induced competitive ratio is at most 0.03 additively off from the competitive ratio induced by the optimal choices of speeds 
(evaluated numerically). 
\end{lemma}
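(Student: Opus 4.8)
The plan is to prove feasibility analytically, and then to establish the near-optimality by a (certified) numerical comparison after reducing \emph{both} competitive ratios to functions of the single parameter $x:=cb$.

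\emph{Feasibility.} First I would substitute $s=bu$ (so that $u=0.532412-0.0262661\,x$) and $r=b$ into the stated value of $k$ and simplify, obtaining $k/b=\frac{2u}{x\,u(1-u)-1-u}$; hence all three ratios $s/b,r/b,k/b$ depend only on $x$. I would then note that this value of $k$ is exactly the root of $\ttime{s,b,k}=c$ solved for $k$ — from $\tfrac{2(k+r)}{k(r-s)}+\tfrac1s=c$ with $r=b$ one gets $k=\tfrac{2bs}{(cs-1)(b-s)-2s}=\tfrac{2bs}{bcs-b-cs^2-s}$ — so the time constraint of $\nlp{c}{b}$ holds with equality. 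It then remains to verify $0\le s$, $r\ge s$, $0<k\le b$. Under the substitution above each becomes a statement about a low-degree polynomial in $x$ on the interval $[\gamma_1,\gamma_2]$: $s\ge0\iff x\le 0.532412/0.0262661\approx 20.27$ (true since $\gamma_2\approx11.34$); $r\ge s$ is immediate since $u<0.533<1$; $k>0\iff x\,u(1-u)-1-u>0$ and $k\le b\iff x\,u(1-u)-1-3u\ge0$. I would check the last two by writing out the corresponding polynomial in $x$ and verifying positivity on $[\gamma_1,\gamma_2]$ via monotonicity together with the endpoint values — which by construction agree with the adjacent branches of Theorem~\ref{thm: (sub)optimal speeds naive bounded speeds}, since the heuristic $u$ is the affine function of $x$ matching $s_b/b$ at $x=\gamma_1$ and $\sigma/\gamma_2$ at $x=\gamma_2$.

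\emph{Reduction to one parameter.} By homogeneity of $\ttime{\cdot}$ (degree $-1$) and $\eenergy{\cdot}$ (degree $2$), substituting $(s,r,k)=b(u,v,w)$ turns $\nlp{c}{b}$ into: minimize $\eenergy{u,v,w}$ subject to $v\ge u$, $0\le u,v,w\le1$, $\ttime{u,v,w}\le cb$; hence the induced competitive ratio $c^2\eenergy{s,r,k}=(cb)^2\eenergy{u,v,w}$ depends only on $x=cb$, both for the optimum and for the heuristic. Moreover, as in the proofs of Lemmas~\ref{lem: optimizers for c>9/b, speed bound b} and~\ref{lem: optimal speeds naive bounded speeds b>=c>=r1}, at any optimizer the time constraint is tight (otherwise decrease $w$ and strictly decrease the energy), and for $\gamma_1<x<\gamma_2$ — the range strictly between the product at which the multiplier for $k\le b$ changes sign (cf. the proof of Lemma~\ref{lem: optimizers for c>9/b, speed bound b}) and $\gamma_2=\rho$, the product at which the unbounded-speed optimizer of Theorem~\ref{thm: optimal speeds naive unbounded speeds} first satisfies $r\le b$ — the constraint $v\le1$ is tight while $w<1$. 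Thus in this range $\mathrm{CR}^{*}(x)=\min_{u}(cb)^2\eenergy{u,1,w(u,x)}$ with $w(u,x)$ forced by the tight time equation, a one-variable minimization; the heuristic uses the same $v=1$ and the same $w(u,x)$, only with $u$ replaced by the affine interpolant.

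\emph{Numerical comparison and main obstacle.} Both $\mathrm{CR}_{\mathrm{heur}}(x)=f(x)$ (the explicit rational function in the middle branch of $f$) and $\mathrm{CR}^{*}(x)$ are continuously differentiable on the compact interval $[\gamma_1,\gamma_2]$ — for $\mathrm{CR}^{*}$ this follows from the envelope theorem, its derivative being $\partial_x[(cb)^2\eenergy{u,1,w(u,x)}]$ evaluated at the minimizer, which is bounded uniformly on the interval. Hence it suffices to evaluate $\delta(x):=\mathrm{CR}_{\mathrm{heur}}(x)-\mathrm{CR}^{*}(x)$ on a sufficiently fine grid: at each grid point $\mathrm{CR}^{*}(x)$ is the minimum of one smooth one-variable function whose stationary points are roots of a polynomial in $u$ (computable to arbitrary precision), and an explicit derivative bound on $\delta$ upgrades the finitely many evaluations to all of $[\gamma_1,\gamma_2]$. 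Since the heuristic is feasible, $\delta\ge0$; the computation (also depicted in Figure~\ref{fig: comparisonVSoptimal-additive}) yields $\max_{x\in[\gamma_1,\gamma_2]}\delta(x)<0.03$, which is the claim. The genuinely delicate step is this last one — turning the numerical evidence into a rigorous inequality: one must either (a) prove an explicit Lipschitz/derivative bound on $\delta$ over $[\gamma_1,\gamma_2]$ so that a finite grid certifies the maximum, or (b) avoid trusting a numerical minimizer for $\mathrm{CR}^{*}$ by producing, at each sample $x$, a KKT-multiplier (weak-duality) certificate lower-bounding $\mathrm{CR}^{*}(x)$; since $u=s/b$ is confined to a narrow sub-interval of $(0,1)$ and $\eenergy{u,1,w(u,x)}$ is strictly convex near its minimizer, either route is conceptually routine but computationally heavy.
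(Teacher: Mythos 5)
Your proposal follows essentially the same route as the paper: fix $r=b$, force the time constraint tight to solve for $k=\frac{2bs}{bcs-b-cs^2-s}$, take $s$ affine in $c$ by interpolating the boundary optimizers at $cb=\gamma_1,\gamma_2$, reduce feasibility to the sign of a low-degree polynomial in $x=cb$ on $(\gamma_1,\gamma_2)$, and settle the $0.03$ gap by numerical comparison with the numerically computed optimum. Your added discussion of how one would certify that last numerical step (Lipschitz bounds or KKT certificates) goes beyond the paper, which — as the lemma statement itself signals with ``evaluated numerically'' — simply accepts the numerical evaluation.
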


\ignore{
\begin{proof}[Proof of Lemma~\ref{lem: naive case between gamma1 gamma2}]
First, we observe that constraint $r\leq b$ is tight for the provable optimizers for all $c,b$ when $cb \in [9, \gamma_1] \cup [\gamma_2, \infty)$. As the only other constraint that switches from being tight to non-tight in the same interval is $k\leq b$, we are motivated to maintain tightness for constraints $r\leq b$ and the time constraint. 

Given that speed $s$ is chosen (to be determined later), we fix $r=b$, and we set $k=k_2(c,b)$ where 
$$
k_2(c,b) := \frac{2 b s}{b c s-b-c s^2-s}
$$
so as to satisfy the time constraint tightly (by solving $\ttime{s,r,b}=c$ for $k$). 
It remains to determine values for speed $s$. To this end, we heuristically require that $s=s_2(c,b)$ where 
$$
s_2(c,b):=\alpha \cdot c + \beta
$$
for some constants $\alpha, \beta$ that we allow to depend on $b$. 
In what follows we abbreviate $s_2(c,b)$ by $s_2(c)$. 
Let $s_1(c), s_3$ be the chosen values for speed $s$ as summarized by the statement of Theorem~\ref{thm: (sub)optimal speeds naive bounded speeds} when $cb\leq \gamma_1$ and $cb \geq \gamma_2$, respectively. 
We require that 
$$
s_2(\gamma_1/b) = s_1(\gamma_1/b), ~~s_2(\gamma_2/b) = s_3(\gamma_2/b)
$$
inducing a linear system on $\alpha, \beta$. By solving the linear system, we obtain 
\begin{align*}
\alpha &= 	\frac{b^2 \left(-2 \gamma _1 \sigma +\gamma _2 \gamma _1-\sqrt{\gamma _1^2-10 \gamma _1+9} \gamma _2-3 \gamma _2\right)}{2 \gamma _1 \left(\gamma _1-\gamma _2\right) \gamma _2} \\
\beta &=\frac{b \left(-2 \gamma _1^2 \sigma +\gamma _2^2 \gamma _1-\sqrt{\gamma _1^2-10 \gamma _1+9} \gamma _2^2-3 \gamma _2^2\right)}{2 \gamma _1 \gamma _2 \left(\gamma _2-\gamma _1\right)}
\end{align*}
Using the known values for $\gamma_1, \gamma_2, \sigma$, we obtain $s_2(c)=0.532412 b - 0.0262661 b^2 c$, as promised. 
It remains to argue that $s_2(c,b)$, together with $r=b$, and $k_2(c,b)$ are feasible when $\gamma_1 < cb < \gamma_2$. 

The fact that $s_2(c)$ complies with bounds $0\leq s\leq b$ follows immediately, since $s_2(c)$ is a linear strictly decreasing function  in $c$, and both $s_2(\gamma_1/b), s_2(\gamma_2/b)$ satisfy the bounds by construction. We are therefore left with checking 
that $0 \leq k_2(c,b) \leq b$ which is equivalent to that 
\begin{align*}
& ~~bcs-b-cs^2 - 3s \geq 0 \\
\Leftrightarrow 
& ~~b (b c (b c (0.00170268\, -0.000689908 b c)+0.327748)-1.59724) \geq 0
\end{align*}
Define degree-2 polynomial function $g(x) = x (x (0.00170268\, -0.000689908 x)+0.327748)-1.59724$ and observe that it sufficies to prove that $g(x)\geq 0$ for all $x \in (\gamma_1, \gamma_2)$. The roots of $g$ can be numerically computed as $-22.8094, , 5.0074,  20.2699$, proving that $g$ preserves positive sign in $(\gamma_1, \gamma_2)$ as wanted. 

Finally, the claims regarding the induced energy and competitive ratio is implied by Theorem~\ref{thm: naive as NLP} and obtained by evaluating the given choices of $s,r,k$ in $\eenergy{s,r,k}$. 
\end{proof}
}

The trick in order to find ``good enough'' optimizers to $\nlp{c}{b}$ is to guess the subset of inequality constraints that remain tight when $\gamma_1 < cb < \gamma_2$. 
First, we observe that constraint $r\leq b$ is tight for the provable optimizers for all $c,b$ when $cb \in [9, \gamma_1] \cup [\gamma_2, \infty)$. As the only other constraint that switches from being tight to non-tight in the same interval is $k\leq b$, we are motivated to maintain tightness for constraints $r\leq b$ and the time constraint. 
Still the algebraic system associated with the corresponding KKT conditions cannot be solved analytically. To bypass this difficulty, and assuming we know (optimal) speed $s$, we use the tight time constraint to find speed $k$ as a function of $c,b,s$. From numerical calculations, we see that optimal speed $s$ is nearly optimal in $c$, and so we heuristically set $s=\alpha c + \beta$. We choose $\alpha, \beta$ so as to have $s$ satisfy optimality conditions for the boundary values $cb=\gamma_1, \gamma_2$. After we identify all parameters to our solution, we compare the value of our solution to the optimal one (obtained numerically), and we verify (using numerical calculations) that our heuristic solution is only by at most 0.03 additively off. The advantage of our analysis is that we obtain closed formulas for the speed parameters for all values of $cb\geq 9$.


\section{Solving \ecb{c}{b} with Unbounded-Memory Robots}
\label{sec: better algo}

In this section we prove Theorem~\ref{thm: CR non-primitive bounded speeds}, that is we solve \ecb{c}{b} by assuming that the two robots have unbounded memory, and in particular that  they can perform time and state dependent calculations and tasks.  Note that, by scaling, our results hold for all $b,c$ for which $cb=9$. For simplicity our exposition is for the natural case $c=9$ and $b=1$. Also, as before, $d$ will denote the unknown distance of the exit from the origin, still the exit is assumed, for the purposes of performance analysis, to be at least 2 away from the origin. 

Throughout the execution of our evacuation algorithm, robots can be in 3 different states (similar to the case of constant-memory robots). First, both robots start with the \textit{Exploration State} and they remain in this until the exit is located. 
While in the exploration state, robots execute an elaborate exploration that requires synchronous movements in which robots, at a high level, stay in good proximity, still they expand the searched space relatively fast. 
Then, the exit finder enters the \textit{Chasing State} in which the robot, depending on its distance from the origin, calculates a speed, at which to move in order to catch and notify the other robot. 
Lastly, when the two robots meet, they both enter the \textit{Exit State} in which both robots move toward the exit with the smallest possible speed while meeting the time constraint. 

Our algorithm takes as input the values of $c=9,b=1$, and use a speed value $s\leq b$, that will be chosen later. When the exit finder switches its state from Exploration to Chasing, it remembers the distance $d$ of the exit to the origin, as well as the value $k$ of a counter that was used while in the Exploration State. 
When the exit finder catches the other robot, they both switch to the Exit State, and they remember their distance $p$ from the origin, as well as the value of time $t$ that their rendezvous was realized. The speed of their Exit State will be determined as a function of $p,d,t$ (and hence of $s,c,b$ as well).

\subsection{A Critical Component: $l$-Phase Explorations}
\label{sec: kphase explore}

We adopt the language of~\cite{chrobak2015group} in order to discuss a structural property that any feasible evacuation algorithm for \ecb{9}{1} 
satisfies. 
As a result, the purpose of this section is to 
provide high level intuition for our evacuation algorithm that is presented in subsequent sections. 

We refer to the two robots (starting exploration from the origin) as $L$ and $R$, intended to explore to the left and to the right of the origin, respectively. The robot trajectories can be drawn on the Cartesian plane where point-location $(x, -t)$ will correspond to point $x$ on the line being visited by some robot at time $t$. 
The following Theorem is due to~\cite{chrobak2015group} and was originally phrased for the time-evacuation unit-speed robots' problem. We adopt the language of our problem. 
\begin{theorem}
\label{thm: boundary cone}
For any feasible solution to \ecb{9}{1}, the point-location of any robot lies within the cone spanned by vectors $\binom{-1}{-3}, \binom{1}{-3}$. 
\end{theorem}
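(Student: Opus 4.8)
The plan is to prove Theorem~\ref{thm: boundary cone} by a simple lower-bound argument on the search time, exactly in the spirit of~\cite{chrobak2015group}. Suppose, for contradiction, that at some time $t$ a robot is at point-location $(x,-t)$ with $(x,-t)$ \emph{outside} the cone spanned by $\binom{-1}{-3}$ and $\binom{1}{-3}$; by symmetry assume $x>t/3$ (the robot has travelled to the right faster than average speed $1/3$ would permit over the whole interval). The key observation is that, since speeds are bounded by $b=1$, at time $t$ no robot can have visited any point to the left of $-(t-|x|)$, i.e. the portion of the line in $(-\infty, -(t-x))$ is still unexplored at time $t$. Indeed, a robot starting at the origin and reaching $x$ by time $t$ has "used up" at least $x$ of its distance budget, leaving at most $t-x$ to have been spent going left, and the other robot, travelling at speed at most $1$, can have reached at most $-t$; but we need the stronger claim that \emph{some} point near $-(t-x)$ — slightly to the left — is unvisited.

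Concretely, I would argue as follows. Place the exit at $-D$ for $D = t-x+\epsilon$ for a small $\epsilon>0$ (legitimate as an input instance provided $D\ge 2$, which holds for $t$ large). At time $t$, the exit has not yet been found: the right-exploring behaviour forces the robot at $(x,-t)$ to be too far right, and I must check that the \emph{other} robot also cannot have reached $-D$ by time $t$. Since the second robot moves at speed at most $1$ starting from the origin, it reaches distance at most $t$ from the origin, so it could in principle be at $-D$ if $D\le t$; here $D=t-x+\epsilon < t$, so this naive bound does not immediately suffice, and this is the main obstacle. The fix, following~\cite{chrobak2015group}, is to use the combined distance budget: once the exit is found at some time $t_0\ge$ (time to reach $-D$), the finder must still chase and meet the partner and both return, so the \emph{total} evacuation time is at least $t_0$ plus a return contribution, and one shows $9D$ is violated. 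More precisely, I would invoke the known $9d-o(d)$ lower bound machinery: if the exit at $-D$ is found at time $t_0$, then evacuation time is at least $t_0 + (\text{distance of the non-finder from }{-D}\text{ at time }t_0)$, and track these quantities to force $> 9D$ when the excursion to $(x,-t)$ happened.

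The cleanest route, which I would actually take, is to reduce directly to Theorem stated earlier (the $9d$ lower bound from~\cite{chrobak2015group}) rather than re-deriving it: assume a robot exits the cone at $(x,-t)$ with $x > t/3$, and consider the adversary that places the exit at $-D$ where $D$ is chosen so that the detour to $x$ provably costs enough. One defines $D$ so that $t > 3D$ is forced — e.g. $D = (t-x)$ — which ensures that up to time $t$ the robot making the detour contributes nothing useful to exploring the left side, and then the standard potential/charging argument of~\cite{chrobak2015group}, applied to the left ray with this $D$, yields evacuation time $> 9D$, contradicting feasibility of the solution to \ecb{9}{1}. I would spell out the charging: let $f$ denote the time the exit $-D$ is first visited and let $g$ be the position of the other robot at that time; feasibility demands $f + |g + D| \le 9D$ (the finder chases, they meet, return — actually a lower bound $\max(f, \text{stuff}) $), and combining with $f \ge$ the time wasted on the rightward detour gives the contradiction. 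The main obstacle is getting the constants in this charging argument exactly right so that the detour beyond the $1/3$-slope line is precisely what breaks the $9D$ budget; I expect this to mirror line-for-line the corresponding argument in~\cite{chrobak2015group}, with the only new ingredient being the bookkeeping that the speed cap $b=1$ (rather than instantaneous speed) is what makes the cone have slope exactly $\pm 3$.
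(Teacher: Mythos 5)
First, a point of comparison: the paper does not prove Theorem~\ref{thm: boundary cone} at all --- it is stated as being ``due to~\cite{chrobak2015group}'' and imported wholesale, merely rephrased from the unit-speed evacuation-time setting into the language of \ecb{9}{1}. So there is no in-paper proof to measure your argument against; the only question is whether your sketch would stand on its own as a replacement for that citation.

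As written, it would not. The overall strategy is the right one --- assume a robot sits at $(x,-t)$ with $x>t/3$, place the exit at a suitable $-D$ on the opposite side, and show the $9D$ budget is broken --- and this is indeed the shape of the argument in~\cite{chrobak2015group}. But two concrete gaps remain. (i) Your ``key observation'' that no point of $(-\infty,-(t-x))$ has been visited by time $t$ is false: the \emph{other} robot may have travelled left at full speed and explored down to $-t\le-(t-x)$, so your candidate $-D=-(t-x)-\epsilon$ may already have been swept; in that case the execution with the exit there diverges from the exit-free one before time $t$, and the robot need not be at $x$ at time $t$ at all. You flag this yourself as ``the main obstacle,'' but none of the proposed repairs is carried out. (ii) The decisive charging inequality is never written down. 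The only inequality you actually have available is the single-robot reachability bound $t+x+D\le 9D$ for every still-unexplored $-D$, i.e.\ $D\ge(t+x)/8$; combined with the fact that the robot now at $x$ can itself have explored only down to $-(t-x)/2$, this yields at best $x\le 3t/5$ (and yields nothing at all if the left frontier was swept by the other robot). Closing the gap from $3/5$ to $1/3$ requires the full face-to-face accounting --- time to first visit the exit, plus chase time to the detouring robot, plus the joint return --- which your proposal explicitly defers to ``the corresponding argument in~\cite{chrobak2015group}.'' So the proposal bottoms out in exactly the citation the paper already uses; if that citation is acceptable, the intermediate (partly incorrect) observations should be deleted, and if it is not, the essential step is still missing. (A minor further caveat you correctly note: under the paper's convention that $d\ge 2$, the adversarial placement is only legitimate once $t$ is large enough that $D\ge 2$ can be chosen, so the cone statement should be read asymptotically.)
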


Next we present some preliminaries toward describing our $k$-phase exploration algorithms.
A \emph{phase} is a pair $(s, r)$ where $s \in [0,1]$ is a speed and $r \in \reals$ is a \textit{distance ratio}, possibly negative.  
An \emph{$l$-phase algorithm} is determined by a position $p_0$ on the line and a sequence $S = (s_1, r_1), \ldots, (s_k, r_l)$ of $l$ phases (movement instructions). 
Whenever $r_ix < 0$, movement will be to the left, whereas $r_ix > 0$ will correspond to movement to the right. 
\begin{algorithm}
       		\textsc{$l$-phase Exploration: given $p_0$ and $S = (s_1, r_1), \ldots, (s_l, r_l)$} \\
		Go to $p_0$ at speed $1/3$ \\
		\Repeat{
			$x \leftarrow $ current position\\
			\For{$i = 1, \ldots, l$}{
				Travel at speed $s_i$ for a distance of $r_i \cdot x$
			}
		}
    \end{algorithm}
We will make sure that each time the loop is executed, position $x$ and corresponding time induce point-locations of the robots that lie in the boundary of the cone of Theorem~\ref{thm: boundary cone}. If a loop starts at location $x$, then it takes additional time $\sum_{i \in [l]} \frac{|r_i| |x|}{s_i}$ to complete one iteration. We will be referring to quantity $1+ \sum_{i \in [l]}\frac{|r_i|}{3s_i}$  as the expansion factor of Exploration $S$.

\subsection{Algorithm $\algo{s}$: The Exploration, Chasing and Exit States}
\label{sec: 3phase explore}

In this section we give a formal description of our evacuation algorithm. The most elaborate part of it is when robots are in Exploration States, in which they will perform $3$-phase exploration. 
It can be shown that $3$-phase exploration based evacuation algorithms that do not violate the constraints of problem \ecb{9}{1} have expansion factor at most 4. Moreover, among those, the ones who minimize the induced energy consumption energy consumption make robots move at speed 1 in the first and third phase\footnote{The proof of these facts is lengthy and technical, and is not required for the correctness of our algorithm, rather it only justifies some parameter choices}. Robot's speed in the second phase will be denoted by $s$. 

We now present a specific $3$-phase exploration algorithm, that we denote by $\algo{s}$, complying with the above conditions, with phases $(-1, 1), (4s/(1-s), s)$ and $(4 - 4s/(1-s), 1)$, where $s$ is an exploration speed to be determined later.  Robot $L$ will execute the 3-phase exploration with starting position -1, while robot $R$ with starting position $2$. 
When subroutine $travel(v, p)$ is invoked, the robot sets its speed to $v$ and, from its current position, goes toward position $p$ on the line until it reaches it. We depict the trajectories of the robots while in the Exploration State in Figure~\ref{fig:cone2018}.
\begin{center}
\fbox{
\begin{minipage}[t]{0.4\textwidth}
  \vspace{0pt}  
            \begin{algorithm}[H]
			\underline{\textsc{Exploration State of $L$}} \\
			$travel(1/3, -1)$ \\
			$k \leftarrow 0$ \\
           	     	\Repeat{
		                	$travel(1, 0)$ \\
		                	$travel(s, -4^{k+1} \cdot \frac{s}{1-s})$ \\
		                	$travel(1, -4^{k+1})$ \\
				$k \leftarrow k + 1$
			}
            \end{algorithm}
\end{minipage}
\begin{minipage}[t]{0.4\textwidth}
  \vspace{0pt}  
            \begin{algorithm}[H]
			\underline{\textsc{Exploration State of $R$}} \\
			$travel(1/3, 2)$ \\
			$k \leftarrow 0$ \\
           	     	\Repeat{
		                	$travel(1, 0)$ \\
		                	$travel(s, 2 \cdot 4^{k+1} \cdot \frac{s}{1-s})$ \\
		                	$travel(1, 2 \cdot 4^{k+1})$ \\
				$k \leftarrow k + 1$
			}
            \end{algorithm}
\end{minipage}
}
\end{center}
A complete execution of one repeat loop within the Exploration State will be referred to as a \textit{round}. 
Variable $k$ counts the number of completed rounds. 
\ignore{
\begin{minipage}[t]{0.30\textwidth}
  \vspace{0pt}  
\begin{figure}[H]
\centering
\includegraphics[width=0.685\linewidth]{figgs/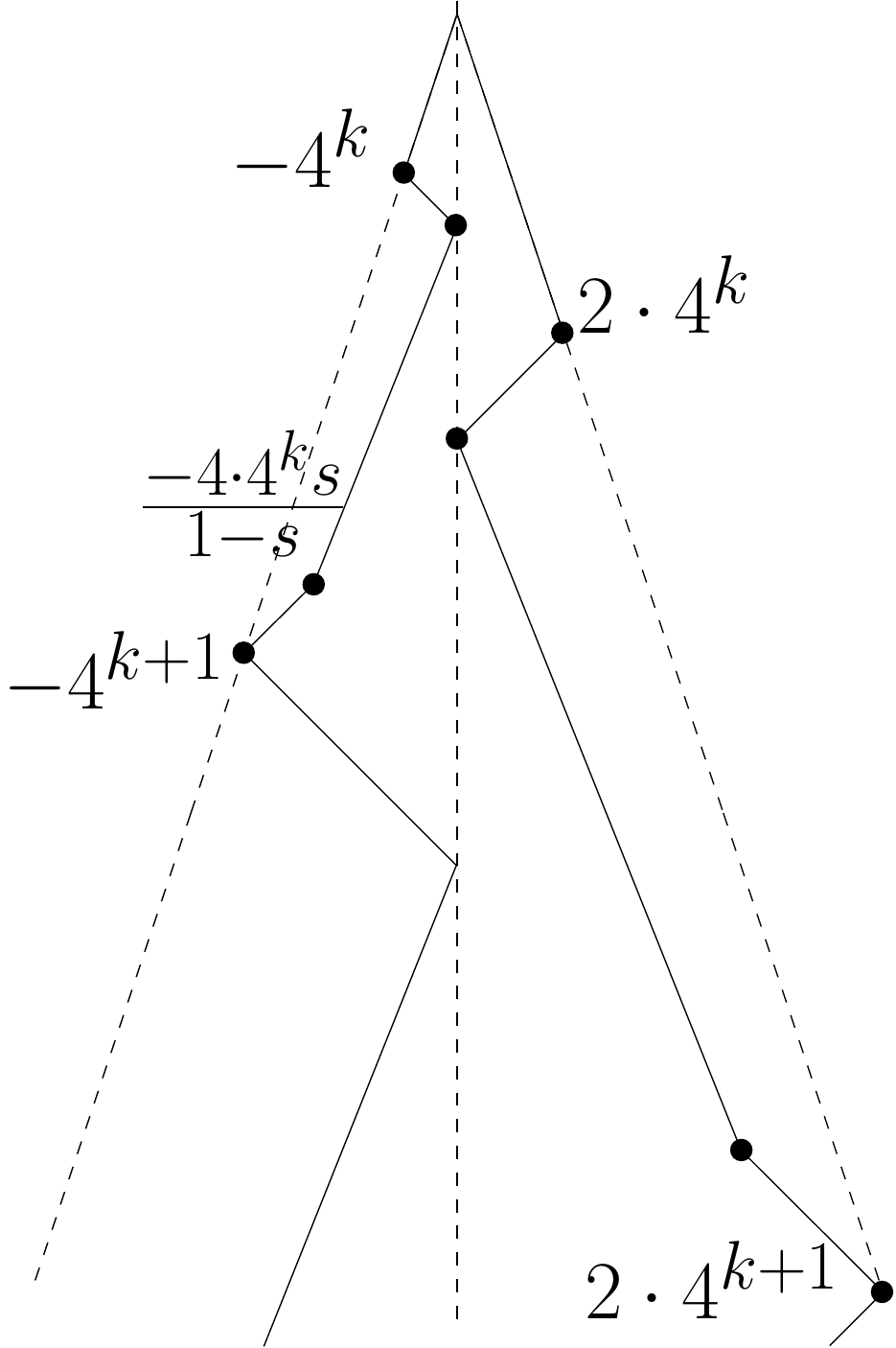}
\caption{A representation of position (x-axis, vertical dashed line is $0$) and time (y-axis), and the trajectory followed by the two robots (solid lines).  The two diagonal dashed lines form the ``1/3 cone'' of Theorem~\ref{thm: boundary cone}.}
\label{fig:cone2018}
\end{figure}
\vline
\end{minipage}
~~~~
\begin{minipage}[t]{0.59\textwidth}
  \vspace{0pt}  
\begin{figure}[H]
\centering
\includegraphics[width=.8\linewidth]{figgs/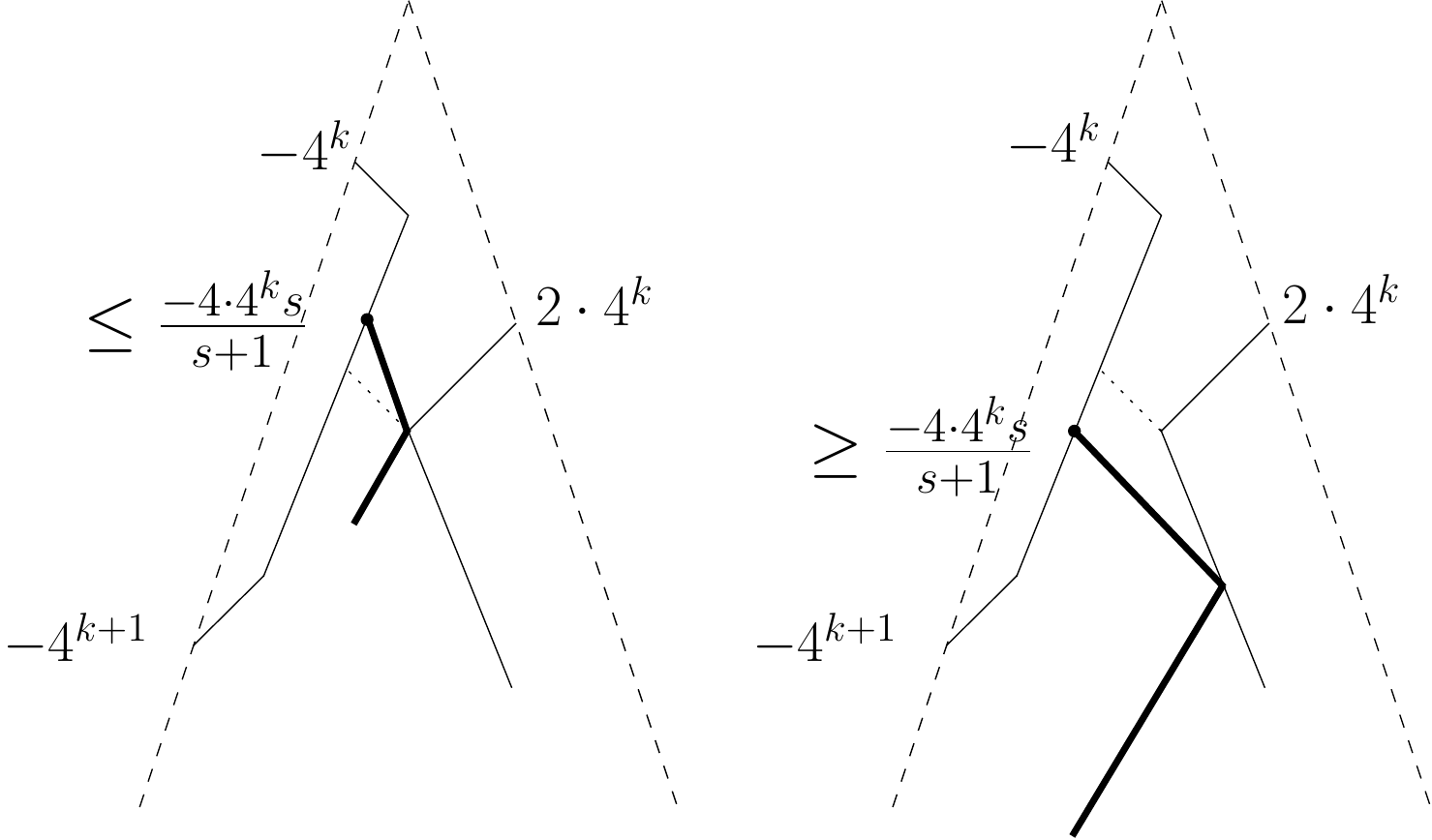}
\caption{The robots' behavior when the exit is found by $L$ is indicated by the bold line. In the first case (left), the catch-up speed is slower than $1$ (and the rendezvous is realized at the turning point of the non-finder), whereas it is $1$ in the second case (right).}
\label{fig:cone-find}
\end{figure}
\end{minipage}
}
Each robot stays in the Exploration State till the exit it found. When switching to the Chasing state (which happens only for the exit finder), robot remembers its current value of counter $k$, as well as the distance $d$ of the exit to the origin. 
Based on these values (as well as $s$) it calculates the most efficient trajectory in order to catch the other robot (predicting, when applicable, that the rendezvous can be realized while the other robot is approaching the exit finder).  
When the rendezvous is realized, robots store their current distance $p$ to the origin, as well as the time $t$ that has already passed. 
Then, robots need to travel distance $p + d$ to reach the exit.  Knowing they have time $9d - t$ remaining, they go to the exit together as slow as possible to reach the exit in time exactly $9d$.
Figure~\ref{fig:cone-find} provides an illustration of the behavior of the robots after finding the exit. 

\begin{center}
\fbox{
\begin{minipage}[t]{0.67\textwidth}
  \vspace{0pt}  
            \begin{algorithm}[H]
			\underline{\textsc{Chasing State}} \\
		$K \leftarrow 4^k$ \\
		\If{I am $R$}{
			$K \leftarrow 2 \cdot 4^k$
		}
		$s' \leftarrow \min
						\left\{
						\dfrac{d}{4K - d/s},1
						\right\}$ 	\\
		Travel toward the other robot at speed $s'$ until meeting it at distance $p$ from the origin, and at time $t$. \\                	
            \end{algorithm}
\end{minipage}
\begin{minipage}[t]{0.28\textwidth}
  \vspace{0pt}  
            \begin{algorithm}[H]
			\underline{\textsc{Exit State}} \\
			$\bar s \leftarrow \frac{p + d}{9d - t}$ \\
			Go toward the exit with speed $\bar s$. 
            \end{algorithm}
\end{minipage}
}
\end{center}

\subsection{Performance Analysis \& an Optimal Choice for Parameter $s$}
\label{sec: analysis}

In this section we are ready to provide the details for proving Theorem~\ref{thm: CR non-primitive bounded speeds}. 
Evacuation algorithm $\algo{s}$ is not feasible to \ecb{c}{b} for all values of speed parameter $s$ (of the Exploration States). We will show later that trajectories induce evacuation time at most $9d$ only if $s\in [1/3,1/2]$. In what follows, and even though we have not fixed the value of $s$ yet, we will assume that $s$ has some value between 1/3 and 1/2. 
The purpose of this section is to fix a value for parameter $s$, show that 
$\algo{s}$ is feasible to \ecb{9}{1}, and subsequently compute the induced energy consumption and competitive ratio. 
As a reminder, each iteration of the repeat loop of the Exploration States is called a round, and $k$ is a counter for these rounds.
\begin{proposition}[Proof on page~\pageref{sec:time-round}]
\label{prop:time-round}
For every $k\geq 0$, and at the start of its $k$-th round, \\
$\bullet$ robot $L$ is at position $-4^{k}$ at time $3 \cdot 4^k$, and \\ 
$\bullet$ robot $R$, is at position $2 \cdot 4^{k}$ at time $6 \cdot 4^k$.
\end{proposition}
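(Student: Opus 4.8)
The plan is to prove Proposition~\ref{prop:time-round} by induction on $k$, tracking position and elapsed time simultaneously for each robot. The base case $k=0$ is immediate from the preamble lines of the Exploration States: robot $L$ first executes $travel(1/3,-1)$, which moves it from the origin to position $-1 = -4^0$ at speed $1/3$, consuming time $3 = 3\cdot 4^0$; likewise robot $R$ executes $travel(1/3,2)$, reaching position $2 = 2\cdot 4^0$ at time $6 = 6\cdot 4^0$. So at the start of the $0$-th round the claimed invariant holds.

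For the inductive step, I would assume that at the start of round $k$ robot $L$ is at $-4^k$ at time $3\cdot 4^k$, and compute the effect of one full iteration of the repeat loop. The three $travel$ commands move $L$ from $-4^k$ to $0$ (distance $4^k$ at speed $1$, time $4^k$), then from $0$ to $-4^{k+1}\cdot\frac{s}{1-s}$ (distance $4^{k+1}\frac{s}{1-s}$ at speed $s$, time $\frac{4^{k+1}}{1-s}$), then from $-4^{k+1}\frac{s}{1-s}$ to $-4^{k+1}$ (distance $4^{k+1}(1-\frac{s}{1-s}) = 4^{k+1}\frac{1-2s}{1-s}$ at speed $1$, time $4^{k+1}\frac{1-2s}{1-s}$); here I would note $s\in[1/3,1/2]$ guarantees this last distance is nonnegative so the final position $-4^{k+1}$ is reached by continued leftward motion. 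Summing the three times gives $4^k + \frac{4^{k+1}}{1-s} + 4^{k+1}\frac{1-2s}{1-s} = 4^k + 4^{k+1}\cdot\frac{1 + (1-2s)}{1-s} = 4^k + 4^{k+1}\cdot\frac{2-2s}{1-s} = 4^k + 2\cdot 4^{k+1} = 4^k(1+8) = 9\cdot 4^k$. Adding this to the start time $3\cdot 4^k$ yields $12\cdot 4^k = 3\cdot 4^{k+1}$, and the end position is $-4^{k+1}$, which is exactly the invariant for round $k+1$. The computation for robot $R$ is entirely analogous: start position $2\cdot 4^k$, start time $6\cdot 4^k$; the three legs cover distances $2\cdot 4^k$, $2\cdot 4^{k+1}\frac{s}{1-s}$, and $2\cdot 4^{k+1}\frac{1-2s}{1-s}$ at speeds $1,s,1$, so the total round time is again $9\cdot 4^k$ (the factor $2$ cancels after dividing by the same speeds), giving end time $15\cdot 4^k = \ldots$ — wait, $6\cdot 4^k + 9\cdot 4^k = 15\cdot 4^k \ne 6\cdot 4^{k+1}=24\cdot 4^k$, so in fact the round time for $R$ must be $18\cdot 4^k$; indeed every distance for $R$ is exactly twice the corresponding distance for $L$ at the same speeds, so $R$'s round takes $2\cdot 9\cdot 4^k = 18\cdot 4^k$, giving end time $6\cdot 4^k + 18\cdot 4^k = 24\cdot 4^k = 6\cdot 4^{k+1}$ and end position $2\cdot 4^{k+1}$, completing the induction.

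The only subtle point — hardly an obstacle — is checking that the third $travel$ of each iteration is genuinely a leftward (resp. rightward) continuation rather than a reversal, which is where the hypothesis $s\le 1/2$ enters (it forces $\frac{s}{1-s}\le 1$, so the turning point of the second phase lies strictly between $0$ and the target $-4^{k+1}$), and that $s\ge 1/3$ is what later guarantees feasibility of the time bound though it is not needed for this proposition. Everything else is bookkeeping of a geometric series with ratio $4$, and the key algebraic identity throughout is $\frac{s}{1-s} + \frac{1-2s}{1-s} = \frac{1-s}{1-s} = 1$, which is what makes the per-round displacement clean and the per-round time equal to $9$ times (resp. $18$ times) the appropriate power of $4$.
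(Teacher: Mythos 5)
Your proof is correct and follows essentially the same route as the paper's: induction on the round counter, with the base case given by the initial $travel(1/3,\cdot)$ commands and the inductive step obtained by summing the durations of the three $travel$ legs of one loop iteration (the key identity $\frac{s}{1-s}+\frac{1-2s}{1-s}=1$ making the per-round time $9\cdot 4^k$ for $L$ and twice that for $R$). The momentary arithmetic slip for $R$ is self-corrected in the text, and your extra observation that $s\le 1/2$ is what guarantees the third leg is a genuine continuation rather than a reversal is a worthwhile detail the paper leaves implicit.
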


\ignore{
\begin{proof}[Proof of Proposition~\ref{prop:time-round}]
The exploration algorithm explicitly ensures that round $k-1$ ends (and hence that round $k$ begins) at the claimed position, so only the time needs to be proved.
Note that the statement is true for both robots when $k = 0$. 
Suppose that when $L$ starts its $(k-1)$-th round, it is at position $-4^{k-1}$ at time
$3 \cdot 4^{k-1}$.  The round ends at time 
$3 \cdot 4^{k-1} + 4^{k-1} + \frac{1}{s} \cdot 4^{k} \cdot \frac{s}{1-s} + (4^k - 4^{k} \cdot \frac{s}{1-s}) = 
3 \cdot 4^k$, and the $k$-th round will start at the claimed time.  The proof is identical for $R$.
\end{proof}
}

Let $X \in \{L, R\}$ be one of the robots.  
We define $K(X, k) = 4^k$ if $X = L$, and $K(X, k) = 2 \cdot 4^k$ if $X = R$, i.e. the position of $X$ at the start of round $k$.
We will often analyze 3 cases for the distance $d$ of the exit with respect to $K := K(X, k)$ (as it also appears in the description of the Chasing State), associated with the following closed intervals
$$
D_1(K) := [K, 4Ks/(s+1)], ~ 
D_2(K) := [4Ks/(s+1), 4Ks/(1-s)], ~
D_3(K) := [4Ks/(1-s), 4K].
$$
We may simply write $D_1, D_2$ and $D_3$ if $K$ is clear from the context.
Note that during the second phase of round $K$, robot $L$ explores $D_1$ and $D_2$,
whereas $D_3$ is explored during the third phase.  The same statement holds for $R$.
The following lemma will be useful in analyzing the worst case evacuation time and energy consumption of our algorithm. 
\begin{lemma}[Proof on page~\pageref{sec:meeting-cases}]
\label{lem:meeting-cases}
Suppose that robot $X \in \{L, R\}$ finds the exit at distance $d$ when its round counter has value $k$.  
Let $p$ and $t$ be, respectively, the position and time at which $X$ first meets with the other robot after having found the exit, 
and set $K := K(X, k)$.  Then the following hold:
\begin{enumerate}
\item
If $d \in D_1$, then $p = 0$ and $t = 8K$.
\item 
If $d \in D_2$, then $|p| = \frac{d + ds - 4Ks}{1-s}$ and $t = 8K + \frac{d + d/s - 4K}{1 - s}$.
\item
If $d \in D_3$, then $|p| = 2ds/(1-s)$ and $t =  8K + 2d + 2ds/(1-s)$.
\end{enumerate}
\end{lemma}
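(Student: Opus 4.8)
The plan is to follow both robots from the instant the exit is found until they meet, reading the schedule off Proposition~\ref{prop:time-round} and the Exploration/Chasing State descriptions. Everything will be phrased through $K=K(X,k)$; since the non-finder's round parameter always turns out to equal $2K$, the analysis for $X=L$ and $X=R$ is the same, so I describe it with $X=L$ and call the non-finder $Y=R$. The exit is at position $-d$, and because $Y$ explores the opposite (nonnegative) side it never finds the exit and stays in the Exploration State, nonnegative on the line, throughout. From Proposition~\ref{prop:time-round}, round $k$ of $L$ begins at position $-K$ at time $3K$, reaches the origin at time $4K$, turns at $-4Ks/(1-s)$ at time $4K+4K/(1-s)$, and ends at $-4K$ at time $12K$. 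Hence if $d\in D_1\cup D_2=[K,4Ks/(1-s)]$ the exit is found during the speed-$s$ second phase, at time $t_0=4K+d/s$, and if $d\in D_3$ it is found during the speed-$1$ third phase, at time $t_0=8K+d$. On entering the Chasing State, $L$ reverses and moves toward $Y$ at speed $s'=\min\{d/(4K-d/s),1\}$; an elementary check gives $s'=d/(4K-d/s)<1$ precisely on the interior of $D_1$, and $s'=1$ on $D_2\cup D_3$.

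The second ingredient is the non-finder's trajectory on the relevant window. The crucial fact is that $Y$ is at the origin at time exactly $8K$ and is then executing its own speed-$s$ phase, so for $\tau\in[8K,\,8K+8K/(1-s)]$ its distance to the origin is $s(\tau-8K)$, reaching $8Ks/(1-s)$ at time $8K+8K/(1-s)$. This comes from Proposition~\ref{prop:time-round}: $R$ is at the origin exactly at the times $8\cdot 4^{j}$, and $8K=8\cdot 4^{k}$ identifies round $k$ of $R$ (in the mirror situation $X=R$ one instead has $8K=4\cdot 4^{k+1}$, identifying round $k+1$ of $L$); in both cases the non-finder's round parameter is $2K$, and its second-phase speed, read from the algorithm, is $s$. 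Moreover $L$ and $R$ are never simultaneously at the origin while both explore (the visit-time sets $\{4\cdot 4^{j}\}$ and $\{8\cdot 4^{j}\}$ are disjoint), $Y$ is nonnegative, and $L$ is negative until it reverses at $t_0$ --- so no meeting can happen before $t_0$ nor before $Y$ reaches the origin.

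It remains to intersect the two trajectories. If $d\in D_1$: substituting $s'=d/(4K-d/s)$ into $t_0+d/s'$ shows $L$ is back at the origin at time exactly $8K$, which is when $Y$ is there, so $p=0$ and $t=8K$. If $d\in D_2$: $L$ moves rightward at speed $1$ and is still strictly left of the origin at time $8K$ (because $d\in D_2$ forces $d+d/s>4K$), after which $L$ chases $Y$ with closing speed $1-s$; equating positions yields the stated $p$ and $t$ (the bookkeeping splits according to whether $t_0\le 8K$ or $t_0>8K$, but both produce the same formulas). If $d\in D_3$: at time $t_0=8K+d$ the finder is at $-d$ and $Y$ is at $+sd$, and again $L$ closes at rate $1-s$, giving the stated $p$ and $t$. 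In each case I would check that the rendezvous falls inside $Y$'s speed-$s$ phase --- equivalently $t\le 8K+8K/(1-s)$, which reduces to $s\le \tfrac12$ when $d\in D_2$ and to $d\le 4K$ when $d\in D_3$ --- and that after $L$'s final reversal the gap shrinks monotonically, so this is genuinely the first meeting.

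The step I expect to be the real obstacle is the second one: locating $Y$ over a window that straddles time $8K$. This is exactly where the exploration's expansion factor $4$ and the asymmetric start positions $-1$ and $2$ conspire to make ``$8K$'' the instant at which the non-finder sits at the origin; getting this alignment right, and uniform over the two choices of finder, is the heart of the argument. Once it is in place, the case analysis is routine algebra whose only genuine content is the inequalities above --- and it is precisely the standing assumption $s\in[\tfrac13,\tfrac12]$ that makes them hold, guaranteeing that every rendezvous happens while the non-finder is in its slow phase, so that the closing speed is $1-s$ and the displayed formulas result.
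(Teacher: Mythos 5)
Your proposal is correct and follows essentially the same route as the paper: use Proposition~\ref{prop:time-round} to pin down the exit-finding time in each of the three distance ranges, observe that the non-finder is at the origin at time $8K$ beginning a speed-$s$ phase (so its distance from the origin is $s(\tau-8K)$ on the relevant window), and intersect the two trajectories, checking via $s\le\tfrac12$ (resp.\ $d\le 4K$) that the rendezvous lands inside that slow phase. Your unification of the $X=L$ and $X=R$ cases through the observation that the non-finder's round parameter is $2K$, and your explicit justification that this is the \emph{first} meeting, are slightly more careful than the paper's, but the argument is the same.
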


\ignore{
\begin{proof}[Proof of Lemma~\ref{lem:meeting-cases}]
\noindent
\emph{Case 1:} $d \in D_1$.
Let $Y = \{L,R\} \setminus \{X\}$ be the robot other than $X$.
Observe that by Proposition~\ref{prop:time-round}, $X$ finds the exit at time $3K + K + d/s = 4K + d/s$.
Then $X$ goes towards $Y$ at speed $s' = \frac{d}{4K - d/s}$, and so it will reach position $0$ at time 
$4K + d/s + \frac{4K - d/s}{d}d = 8K$.
We know that at time $6K$, robot $Y$ is starting a round at position $2K$ or $-2K$, then goes towards $0$ at full speed.  
Hence $Y$ is at position $0$ at time $8K$, where it meets $X$.  

\noindent
\emph{Case 2:} $d \in D_2 = [4Ks/(s + 1), 4Ks/(1-s)]$.
As before, the exit is found at time $4K + d/s$.
Assume for simplicity that $X = L$ (the case $X = R$ is identical by symmetry).  After finding the exit at position $-d$,
$L$ goes full speed to the right.  Thus at time $t = 4K + d/s + d + \frac{d + ds - 4Ks}{1-s}$, 
it arrives at position $\frac{d + ds - 4Ks}{1-s}$.  We show that at this time, $R$ is in its second phase and is at this position.
Notice that 

\begin{align*}
t &= 4K + d/s + d + \frac{d + ds - 4Ks}{1-s} \\
&= 8K + d/s + d - 4K + \frac{d + ds - 4Ks}{1-s} \\
&= 8K + \frac{d/s + d - 4K}{1 - s} \\
&\leq 8K(1 + s/(1-s)^2)
\end{align*}

the latter inequality being obtained from $d \leq 4Ks/(1-s)$.
Now, $R$ enters its second $travel$ phase when at position $0$ at time $8K$, and the phase ends a time 
$8K + 1/s \cdot 8Ks/(1-s) = 8K(1 + 1/(1-s))$.  Since $s \leq 1/2$, we get $8K(1 + 1/(1-s)) \geq 8K(1 + s/(1-s)^2) \geq t$.
Therefore $R$ is still in its second phase at time $t$, and it follows that its position at this time is 
$s(t - 8K) = \frac{d + ds - 4Ks}{1-s}$.  Hence $L$ and $R$ meet.
It is straightforward to see that $L$ and $R$ could not have met before time $t$, and thus $L$ and $R$ meet at the claimed time and position.
\noindent
\emph{Case 3:} $d \in D_3 = [4Ks/(1-s),4K]$.
Again assume that $X = L$.  This time $L$ finds the exit at position $-d$ at time 
$3K + K + \frac{1}{s} \cdot \frac{4Ks}{1-s} + d - \frac{4Ks}{1-s} = 8K + d$.  
Going full speed to the right, at time $t = 8K + 2d + 2ds/(1-s)$, 
it reaches position $2ds/(1-s)$.
Since $d \leq 4K$, we have $t \leq 8K(2 + s/(1-s)) = 8K(1 + 1/(1-s))$. 
As in the previous case, the second phase of $R$ ends at time $8K(1 + 1/(1-s)) \geq t$.
Thus at time $t$, $R$ is at position $s(t - 8K) = 2ds + 2ds^2/(1-s) = 2ds/(1-s)$.
Again, one can check that $L$ and $R$ could not have met before, which concludes the proof.
\end{proof}
}

Using the lemma above, we can now prove that $\algo{s}$ meets the speed bound and the evacuation time bound. 
\begin{lemma}[Proof on page~\pageref{sec: algo is feasible}]
\label{lem: algo is feasible}
For any $s \in [1/3, 1/2]$, evacuation algorithm $\algo{s}$ is feasible to \ecb{9}{1}. \end{lemma}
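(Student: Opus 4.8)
The plan is to verify the two conditions defining feasibility of $\algo{s}$ for \ecb{9}{1} separately: that every speed ever used is at most $b=1$, and that on every instance $d$ the evacuation time is at most $cd=9d$ (I will also note in passing that both robots eventually sweep the entire half-lines $(-\infty,0]$ and $[0,\infty)$, so the exit is always found and the Chasing/Exit States are reached). The Exploration-State speeds are $1/3$, $s$, and $1$, all in $[0,1]$ since $s\le 1/2$. In the Chasing State the speed is $s'=\min\{d/(4K-d/s),1\}\le 1$ by definition; the only thing to check is $s'\ge 0$, i.e. that the denominator $4K-d/s$ is positive whenever $d\in D_1(K)$ (so that the $\min$ is not just $1$). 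But $4K-d/s\ge d$ is equivalent to $d\le 4Ks/(s+1)$, which is exactly the defining inequality of $D_1(K)$; hence $s'\in(0,1]$ there and $s'=1$ for $d\notin D_1(K)$. Finally, the Exit-State speed is $\bar s=(p+d)/(9d-t)$, which is nonnegative once $t<9d$, and satisfies $\bar s\le 1$ if and only if $t+p\le 8d$.

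The observation that couples the time bound to the speed bound is that, as long as $\bar s\le 1$, the robots reach the exit at time exactly $t+(p+d)/\bar s=t+(9d-t)=9d$, so the evacuation-time constraint is automatically met (with equality). Thus the entire lemma reduces to the single inequality $t+p\le 8d$ for every instance. Since the finder $X$ discovers the exit during the round $k$ for which $K:=K(X,k)$ satisfies $d\in[K,4K]=D_1(K)\cup D_2(K)\cup D_3(K)$, it suffices to establish $t+p\le 8d$ in each of the three cases, substituting the closed-form expressions for $p$ and $t$ supplied by Lemma~\ref{lem:meeting-cases} (which in turn rely on Proposition~\ref{prop:time-round}).

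In the case $d\in D_1(K)$ this is immediate: $p=0$ and $t=8K$, and $8K\le 8d$ because $d\ge K$. For $d\in D_2(K)$ and $d\in D_3(K)$ I would substitute the stated formulas, clear the positive factor $1-s$, and normalize by $K$, writing $x:=d/K$, which ranges over $[4s/(s+1),4s/(1-s)]$ and $[4s/(1-s),4]$ respectively. In both cases the inequality becomes affine in $x$, and the coefficient of $x$ has a sign that is constant for $s\in[1/3,1/2]$, so it suffices to test it at the appropriate endpoint of the $x$-interval. There it collapses to a quadratic in $s$ that factors as $(3s-1)(s-1)\le 0$ in the $D_2$ case and $(3s-1)(2s-1)\le 0$ in the $D_3$ case, each of which holds precisely for $s\in[1/3,1/2]$; this also accounts for the hypothesized range of $s$.

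The computations are routine algebra, so I expect no serious obstacle. The two places needing care are (i) confirming that the Chasing-State speed is well-defined and nonnegative, which is what forces the case split along $D_1(K)$ together with the short sign argument for $4K-d/s$, and (ii) the bookkeeping that pins down the correct round $k$ (hence $K$) and places $d$ in one of $D_1,D_2,D_3$; beyond that, the main risk is simply losing track of the normalization by $K$ and of which interval endpoint is the worst case, so I would be careful to reduce each case cleanly to a one-variable quadratic in $s$.
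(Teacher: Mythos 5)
Your proposal follows essentially the same route as the paper's proof: the time bound is automatic once the Exit-State speed $\bar s=(p+d)/(9d-t)$ is shown to lie in $(0,1]$, and your reduction to $t+p\le 8d$ in each of the three cases of Lemma~\ref{lem:meeting-cases} is just a cleaner repackaging of the paper's argument that ``travelling at speed $1$ from the meeting point would arrive before time $9d$.'' One small arithmetic slip: the $D_2$ endpoint computation also collapses to $(3s-1)(2s-1)\le 0$ (the same quadratic as in $D_3$), not $(3s-1)(s-1)\le 0$, though this does not affect the conclusion since the inequality still holds exactly on $[1/3,1/2]$.
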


\ignore{
\begin{proof}[Proof of Lemma~\ref{lem: algo is feasible}]
Let $X$ be the robot that finds the exit at distance $d$, and let $K := K(X, k)$.
We show that all speeds are at most 1, as well as that the evacuation time is at most $9d$. There are three cases to consider.

\noindent
\emph{Case 1:} $d \in D_1$.
By Lemma~\ref{lem:meeting-cases}, after meeting, both robots need to travel distance $d$ and have $9d - 8K$ time remaining.
By the last line of the exit phase algorithm, they go back at speed $s_{b1} := \frac{d}{9d - 8K}$ and make it in time, provided that speed $s_{b1}$ is achievable, i.e. $0 < s_{b1} \leq 1$.
Clearly $s_{b1} > 0$ since $9d - 8K > 0$.  If we assume $\frac{d}{9d - 8K} > 1$, we get $d > 9d - 8K$, leading to $d < K$, a contradiction.  

\noindent
\emph{Case 2:} $d \in D_2$.
By Lemma~\ref{lem:meeting-cases}, the robots meet at position $p$ such that $|p| = \frac{d + ds - 4Ks}{1-s}$
at time $t = 8K + \frac{d + d/s - 4K}{1 - s}$.  
The robots use the smallest speed $s_{b2} := \frac{p + d}{t - 9d}$
that allows the them to reach the exit in time $9d$.
We must check that $0 < s_{b2} \leq 1$.  
We argue that if the two robots used speed $1$ to get to the exit after meeting, they would make it before time $9d$.  
Since $s_{b2}$ allows the robots to reach the exit in time exactly $9d$, it follows that $0 < s_{b2} \leq 1$.  

First note that since $d \leq 4Ks/(1-s)$, we have $4K \geq d(1-s)/s$.
Using speed $1$ from the point they meet, the robots would reach the exit at time 

\begin{align*}
8K + \frac{d + d/s - 4K}{1-s} + d + \frac{d + ds - 4Ks}{1-s} &= 4K\left(2 - \frac{1 + s}{1-s}\right) + d \left(1 + \frac{2 + s + 1/s}{1-s}\right) \\
&= 4K \cdot \frac{1-3s}{1-s} + d \cdot \frac{3s + 1}{s(1 - s)} \\
&\leq d \cdot \frac{1-s}{s} \cdot \frac{1-3s}{1-s} + d \cdot \frac{3s + 1}{s(1 - s)}\\
&= d \left( \frac{1-3s}{s} +  \frac{3s + 1}{s(1 - s)} \right)
\end{align*}

where we have used the fact that $1-3s \leq 0$ in the inequality.
It is straighforward to show that $d \left( \frac{1-3s}{1-s} +  \frac{3s + 1}{s(1 - s)} \right) \leq 9d$ 
when $1/3 \leq s \leq 1/2$, proving our claim.

\noindent
\emph{Case 3:} $d \in D_3$.
Again according to Lemma~\ref{lem:meeting-cases}, the robots meet at position $p$ satisfying $|p| = 2ds/(1-s)$ at time 
$t = 8K + 2d + 2ds/(1-s)$. 
The robots go towards the exit at speed $s_{c3} := \frac{p + d}{9d - t}$.
As in the previous case, we show that $s_{c3}$ is a valid speed by arguing that the robots have enough time if they used their full speed.
If they do use speed $1$ after they meet, they reach the exit at time 
$t' = 8K + 3d + 4ds/(1-s)$.  Since $d \geq 4Ks/(1-s)$, we have $K \leq d(1-s)/(4s)$.
Therefore $t' \leq 8 \cdot (d(1-s)/(4s)) + 3d + 4ds/(1-s) = d \cdot \frac{3s^2 - s + 2}{s(1-s)}$.  
One can check that this is $9d$ or less whenever $1/3 \leq s \leq 1/2$. 
\end{proof}
}

Lemma~\ref{lem:meeting-cases} allows us to derive the speed $s_{b1}, s_{b2}$ and $s_{b3}$ at which both robots go toward the exit after meeting for the cases $d \in D_1, d\in D_2$ and $d \in D_3$, respectively.  We also know the speed $s_{c1}$ at which the exit-finder catches up to the other robot when $d \in D_1$.
We define 
\ignore{
\begin{align*}
s_{b1} &= \frac{d}{9d - 8K}  \quad \quad \quad s_{c1} = \frac{d}{4K - d/s}\\
s_{b2} &= \frac{2d - 4Ks}{d(8 - 9s - 1/s) + 4K(2s-1)} \\
s_{b3} &= \frac{d(1 + s)}{d(7 - 9s) + 8K(s - 1)}
\end{align*}
}
$$
s_{b1} := \tfrac{d}{9d - 8K},~~
s_{c1} = \tfrac{d}{4K - d/s},~~ 
s_{b2} := \tfrac{2d - 4Ks}{d(8 - 9s - 1/s) + 4K(2s-1)},~~ 
s_{b3} := \tfrac{d(1 + s)}{d(7 - 9s) + 8K(s - 1)}
$$
The speed $s_{b2}$ is a simple rearrangement of the speed $\frac{d + qs}{9d - (8K + q)}, \mbox{ where } q = \frac{d + d/s - 4K}{1-s}$, and $s_{b3}$ is obtained by rearranging $\frac{d + 2ds/(1-s)}{9d - (8K + 2d + 2ds/(1-s))}$.

Next compute the energy consumption.
For given $K, d$ and $s$, denote by $E_L(K, d, s)$ the energy spent by robot $L$ from time $3$ to time $9d$ when it exits.
Similarly, $E_R(K, d, s)$ is the energy spent by $R$ from time $6$ to time $9d$.  
Then, then energy consumption is $E(K,d, s) := \tfrac13 + E_L(K,d,s) + E_R(K,d,s)$.
For any $K$ and $s$, we also define 
$F(K,s) := (K - 1)(5 - 4s(s+1))$. 

\begin{lemma}[Proof on page~\pageref{sec:energies}]
\label{lem:energies}
Suppose that robot $X \in \{L, R\}$ finds the exit at distance $d$ when its round counter has value $k$, and let $K := K(X, k)$.  Then 
$$
E(K,d,s) = 
\frac13 + \begin{cases} 
      F(K,s) + 3K + d(s^2 + s_{c1}^2 + 2s_{b1}^2) & \mbox{if } d \in D_1\\
      F(K,s) + 3K + \left( \dfrac{2d - 4Ks}{1-s}\right)(1 + s^2 + 2s_{b2}^2) & \mbox{if } d \in D_2\\
      F(K,s) + 3K - 4Ks(s + 1) + \dfrac{2d}{1-s}(s^3 + s_{b3}^2(s + 1) + 1) & \mbox{if } d \in D_3.
   \end{cases}
$$
\end{lemma}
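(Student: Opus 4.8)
The plan is to compute, separately for each robot, the energy it spends between the moment it leaves the origin and the moment it reaches the exit, and to organize the computation by decomposing every trajectory into four kinds of legs: (i) the initial move from the origin to the starting position $p_0$; (ii) the Exploration-State rounds that are \emph{completed} before the exit is found; (iii) the round during which the exit is found — its first $travel$ leg together with the partial exploration leg up to the rendezvous (and, for the finder only, the Chasing-State leg); and (iv) the common Exit-State walk back to the exit. I would use Proposition~\ref{prop:time-round} to fix the times and positions at which each round begins, and Lemma~\ref{lem:meeting-cases} to fix the rendezvous position $p$ and time $t$, which determine the lengths of the chasing and Exit-State legs in each of the three cases $d\in D_1,D_2,D_3$.

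For (i), robot $L$ walks distance $1$ and robot $R$ distance $2$, both at speed $1/3$, contributing $\tfrac19+\tfrac29=\tfrac13$ — the additive $\tfrac13$ in the statement. For (ii), I would first evaluate the energy of one complete exploration round. A round of $L$ that begins at position $-4^j$ costs $4^j\cdot1^2$ (first leg), $\tfrac{4^{j+1}s}{1-s}\cdot s^2$ (second leg), and $\bigl(4^{j+1}-\tfrac{4^{j+1}s}{1-s}\bigr)\cdot1^2$ (third leg); via the factorization $s^3-2s+1=(s-1)(s^2+s-1)$ this collapses to $4^j\bigl(5-4s(s+1)\bigr)$, and the corresponding round of $R$ costs exactly twice as much. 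A geometric sum then shows that the completed rounds of $L$ and $R$ together contribute exactly $(K-1)\bigl(5-4s(s+1)\bigr)=F(K,s)$, \emph{regardless} of which robot is the finder: if $X=L$ then both robots have completed rounds $0,\dots,k-1$ with $K=4^k$, whereas if $X=R$ the finder has completed $0,\dots,k-1$ and the non-finder $0,\dots,k$ with $K=2\cdot4^k$, and in both cases the two geometric series add to $F(K,s)$. This cancellation is exactly why the final formula depends only on $K$ and not on the individual round counters.

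For (iii)--(iv) I would plug the rendezvous data of Lemma~\ref{lem:meeting-cases} into the accounting. In every case the finder spends $K$ on the first leg of its active round and the non-finder $2K$ on the first leg of its (possibly later-indexed) active round, producing the $3K$ term. The remainder is case-specific. In $D_1$: the finder adds $s^2d$ (partial second leg) and $s_{c1}^2d$ (chasing, at speed $s_{c1}$), the non-finder adds nothing before the rendezvous at position $0$, and both robots add $s_{b1}^2d$ in the Exit State, assembling to $d(s^2+s_{c1}^2+2s_{b1}^2)$. In $D_2$: the finder adds $s^2d$ (partial second leg) then chases at speed $1$ over distance $d+|p|=\tfrac{2d-4Ks}{1-s}$; the non-finder covers a partial second leg of length $|p|=\tfrac{d+ds-4Ks}{1-s}$ at speed $s$; both robots cover $|p|+d=\tfrac{2d-4Ks}{1-s}$ in the Exit State at speed $s_{b2}$; merging $s^2d+s^2|p|=s^2\tfrac{2d-4Ks}{1-s}$ gives $\tfrac{2d-4Ks}{1-s}\bigl(1+s^2+2s_{b2}^2\bigr)$. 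In $D_3$: the finder completes its full second leg of length $\tfrac{4Ks}{1-s}$ at speed $s$, a partial third leg of length $d-\tfrac{4Ks}{1-s}$ at speed $1$, and then chases at speed $1$ over distance $\tfrac{d(1+s)}{1-s}$; the non-finder covers a partial second leg of length $|p|=\tfrac{2ds}{1-s}$ at speed $s$; both robots cover $\tfrac{d(1+s)}{1-s}$ in the Exit State at speed $s_{b3}$; using $\tfrac{4Ks^3}{1-s}-\tfrac{4Ks}{1-s}=-4Ks(s+1)$ and $d(1-s)+d(1+s)=2d$ this collapses to $3K-4Ks(s+1)+\tfrac{2d}{1-s}\bigl(s^3+s_{b3}^2(s+1)+1\bigr)$. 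Adding back $\tfrac13$ and $F(K,s)$ yields the three cases of the lemma.

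I expect the main obstacle to be the bookkeeping rather than any single hard step: keeping track of which phase each robot occupies at the rendezvous time, of the fact that the finder and the non-finder may carry different round counters although their $K$-values coincide, and of the signs of the positions (the finder sits at $\mp d$ while $p$ has the opposite sign, so the chasing and Exit-State legs have length $|p|+d$). The only genuinely non-mechanical bits are the factorization $s^3-2s+1=(s-1)(s^2+s-1)$ and the merging of the two partial-second-leg energies, and both are short.
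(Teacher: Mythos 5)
Your proposal is correct and follows essentially the same route as the paper's proof: the per-round energy $4^j(5-4s(s+1))$ summed geometrically to give $F(K,s)$ (the paper's $B_X(K,s)+B_Y(2K,s)$ identity, which absorbs the extra completed round of the non-finder when $X=R$ exactly as you describe), the $K+2K$ first legs of the active rounds, and the case-by-case accounting of the partial exploration, chase, and Exit-State legs via Lemma~\ref{lem:meeting-cases}. All the case-specific simplifications you flag match the paper's computation.
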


\ignore{
\begin{proof}[Proof of Lemma~\ref{lem:energies}]
For any $K' := 4^i$ power of $4$ with $i \geq 1$, define $B_L(K',s)$ as the energy spent by $L$ after reaching position $-K'$ for the first time without having found the exit, ignoring the initial $1/9$ energy spent to get to position $1$.
The quantity $B_L(K',s)$ is the sum of energy spent in each of the first $i - 1$ rounds, and so

\begin{align*}
B_L(K',s) &= \sum_{j = 0}^{i-1} \left(4^j + s^2(4^{j+1}s/(1-s)) + 4^{j+1} - 4^{j+1}s/(1-s)    \right) \\
&= \sum_{j = 0}^{i-1} \left(4^{j+1}(5/4+(s^3 - s)/(1-s)) \right) \\
&= 4/3 (4^i - 1)(5/4 - s(s+1)) \\
&= 1/3 (K' - 1)(5 - 4s(s+1))
\end{align*}

We define $B_R(2K',s)$ similarly for $R$, i.e. $B_R(2K',s)$ is the energy spent by $R$ when its $(i-1)$-th round is finished and it reached position 
$2K'$ for the first time, ignoring the initial $2/9$ energy to get at position $2$.  We get 

\begin{align*}
B_R(2K',s) &= \sum_{j = 0}^{i-1} \left(2 \cdot 4^j + s^2(2 \cdot 4^{j+1}s/(1-s)) + 2 \cdot 4^{j+1} - 2 \cdot 4^{j+1}s/(1-s)    \right) \\
&= 2B_L(K',s) = 2/3 (K' - 1)(5 - 4s(s+1))
\end{align*}

We may now calculate the three possible cases of energy.  Assume that $X \in \{L,R\}$ finds the exit
and $Y = \{L,R\} \setminus \{X\}$.
Observe that 
$$
B_X(K,s) + B_Y(2K,s) = (K - 1)(5 - 4s(s+1)) = F(K,s)
$$

We implictly use Lemma~\ref{lem:meeting-cases} for the distance traveled by $X$ to catch up to $Y$ after
finding the exit, and the distance traveled back by both robots.
In the $E(K,d,s)$ expressions that follow, for clarity we partition the terms into 3 brackets, which respectively represent the energy spent by $X$ to find the exit and catch up to $Y$, the energy spent by $Y$ before being caught, 
and the energy spent by both robots to go to the exit.

\noindent
\emph{Case 1: $d \in D_1$}.
The total energy spent is 
\begin{align*}
E(K,d,s) &= \left[ B_X(K,s) + K + s^2d + s_{c1}^2 d \right] + \left[ B_Y(2K,s) + 2K \right] + \left[ 2 s_{b1}^2 d \right] \\
\\&= F(K,s) + 3K + d(s^2 + s_{c1}^2 + 2s_{b1}^2)
\end{align*}

\noindent
\emph{Case 2: $d \in D_2$}.
In this case, the energy spent is 

\begin{align*}
E(K,d,s) &= \left[ B_X(K,s) + K + s^2 d + d + \frac{d + ds - 4Ks}{1-s} \right]  + \left[ B_Y(2K,s) + 2K + s^2 \left( \frac{d + ds - 4Ks}{1-s} \right)  \right] +  \\
 & \quad  \left[2 s_{b2}^2 \left( d + \frac{d + ds - 4Ks}{1-s} \right)\right] \\
&= F(K,s) + 3K + \left( \frac{2d - 4Ks}{1-s}\right)(1 + s^2 + 2s_{b2}^2)
\end{align*}

\noindent
\emph{Case 3: $d \in D_3$}.
The energy spent is

\begin{align*}
E(K,d,s) &= \left[B_X(K,s) + K + s^2 4Ks/(1-s) + d - 4Ks/(1-s) + d +  2ds/(1-s) \right] + \\
& \quad \left[B_Y(2K,s) + 2K + s^2(2ds/(1-s)) \right] + \left[2s^2_{b3}\left(d + 2ds/(1-s)  \right)    \right] \\
&= F(K,s) + 3K  + 4Ks/(1-s) (s^2 - 1) + d \left( \frac{s}{1-s}(2 + 2s^2 + 2s_{b3}^2) + 2 + 2s_{b3}^2 \right) \\
&= F(K,s) + 3K - 4Ks(s + 1) + \frac{2d}{1-s}(s^3 + s_{b3}^2(s + 1) + 1)
\end{align*}

\end{proof}
}

Denote by $E_i(k,d,s)$ the value of $E(K,d,s)$ when $d\in D_i$, $i=1,2,3$. 
Our intension now is to fix speed value $s$ that solves the following Nonlinear Program 
$$\min_{s\in [1/3,1/2]}
 \left\{ 
	\max
		\left\{
		\sup_{d\in D_1, k\geq 1, X} \frac{E_1(K,d,s)}d,
		\sup_{d\in D_2, k\geq 1, X} \frac{E_2(K,d,s)}d,
		\sup_{d\in D_3, k\geq 1, X} \frac{E_3(K,d,s)}d
		\right\}
\right\}.$$
For every $s\in [1/3,1/2]$ we show in Lemma~\ref{lem: monotonicity of algo s} 
that $\frac{E_1(K,d,s)}d$ is decreasing in $d\in D_1$, 
that $\frac{E_2(K,d,s)}d$ is increasing in $d\in D_2$,
and that $\frac{E_3(K,d,s)}d$ is decreasing in $d\in D_3$. Then, the best parameter $s$ can be chosen so as to make all worst case valued $\frac{E_i(K,d,s)}d$ equal (if possible) when $i=1,2,3$. The optimal $s$ can be found by numerically finding the roots of a high degree polynomial, and accordingly, we heuristically set $s = 0.39403$, inducing the best possible energy consumption for algorithm $\algo{s}$. All relevant formal arguments are within the proof of the next lemma. 


\begin{lemma}[Proof on Page~\pageref{sec: monotonicity of algo s}]
\label{lem: monotonicity of algo s}
On instance $d$ of \ecb{9}{1}, algorithm $\algo{s}$ induces energy consumption at most $8.42588d$, when $s = 0.39403$. 
\end{lemma}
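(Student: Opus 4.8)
The plan is to turn the minimax optimization displayed right before the lemma into a tractable one-variable problem by first establishing the monotonicity claims that the text attributes to this lemma, and then reducing each supremum over $d$ to an evaluation at an endpoint of the corresponding interval $D_i(K)$. Concretely, I would first fix $s\in[1/3,1/2]$ and a round counter value $k$ (hence $K=K(X,k)\in\{4^k,2\cdot 4^k\}$), and study each of the three functions $d\mapsto E_i(K,d,s)/d$ on its interval using the closed forms from Lemma~\ref{lem:energies} together with the speed expressions $s_{b1},s_{c1},s_{b2},s_{b3}$. Writing $E_i(K,d,s)/d = 1/(3d) + F(K,s)/d + 3K/d + (\text{terms})$, one sees the dependence on $d$ comes through the ratio $K/d$ and through the speed terms, which are themselves rational functions of $K/d$; so it is natural to substitute $u = K/d$ (equivalently $d = K/u$) and check that $E_1/d$ is decreasing in $d$ (increasing in $u$), $E_2/d$ increasing in $d$, and $E_3/d$ decreasing in $d$, by a derivative sign computation. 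I expect these sign computations to be the technical heart: the speeds $s_{b1}^2, s_{b2}^2, s_{b3}^2$ are squares of rational functions, so the derivatives are ratios of polynomials whose numerators must be shown to have a fixed sign for all admissible $(s,u)$ with $s\in[1/3,1/2]$ and $u$ ranging over the interval endpoints $\{1, (s+1)/(4s)\}$, $\{(s+1)/(4s), (1-s)/(4s)\}$, $\{(1-s)/(4s), 1/4\}$ respectively.

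Once monotonicity is in hand, the worst case over $d$ for fixed $k$ occurs at: the left endpoint $d=K$ for $D_1$, the right endpoint $d=4Ks/(1-s)$ for $D_2$, and the left endpoint $d=4Ks/(1-s)$ for $D_3$ — and note $D_2$ and $D_3$ share that endpoint, giving a continuity check. The next step is to take the supremum over $k$ (equivalently over $K$). Since the $F(K,s)$ and $3K$ terms scale linearly in $K$ and the worst-case $d$ values are all proportional to $K$, each ratio $E_i/d$ at the worst-case $d$ is actually a constant independent of $K$ (for $K\ge 1$); the only subtlety is the additive $1/(3d)$ and the "$-1$" inside $F(K,s)=(K-1)(5-4s(s+1))$, which are lower-order and make the ratio slightly larger for small $K$. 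So I would verify that the supremum over $k\ge 1$ is attained in the limit (or at $k=1$, whichever the algebra dictates) and write down the resulting three explicit functions $g_1(s), g_2(s), g_3(s)$ of $s$ alone.

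Finally, the minimization over $s\in[1/3,1/2]$ of $\max\{g_1(s),g_2(s),g_3(s)\}$ is a univariate problem. I would argue (as the surrounding text indicates) that $g_1$ is decreasing, $g_2$ is increasing, and $g_3$ behaves so that the optimum is a point where the active functions cross; locating it requires finding a root of a high-degree polynomial, so I would simply exhibit $s=0.39403$ as a near-optimal choice, evaluate $g_1,g_2,g_3$ at this value, and check numerically that each is at most $8.42588$, with the bound being essentially tight for the active cases. The clean way to present this is: (i) prove the three monotonicity facts in $d$ with explicit derivative-sign arguments, (ii) reduce to the three endpoint functions and handle the $k$-supremum, (iii) plug in $s=0.39403$ and bound. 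The main obstacle is genuinely step (i) — controlling the signs of the derivatives of the squared-speed terms uniformly over the parameter ranges — since everything downstream is either substitution or a finite numerical check; a secondary nuisance is making sure the $k=1$ versus $k\to\infty$ comparison is done correctly because of the $(K-1)$ and $1/(3d)$ lower-order terms, which is where a careless estimate could break the claimed constant $8.42588$.
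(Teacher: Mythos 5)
Your plan follows the paper's proof essentially verbatim: establish that $E_1/d$ is decreasing on $D_1$, $E_2/d$ increasing on $D_2$, and $E_3/d$ decreasing on $D_3$ via derivative-sign computations, reduce to the endpoint evaluations $d=K$ and $d=4Ks/(1-s)$, control the lower-order $1/(3d)$ and $(K-1)$ terms using $K\geq 4$, and verify numerically at $s=0.39403$. The only caveat is that the derivative-sign arguments you correctly identify as the technical heart are left unexecuted, whereas the paper carries them out (e.g., by showing the partial derivatives are themselves increasing in $d$ and evaluating at the interval endpoints); but the structure and all key reductions match.
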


By Lemma~\ref{lem: monotonicity of algo s},  we conclude that for the specific value of $s$, algorithm $\algo{s}$ has competitive ratio
$
\frac{9^2}2 8.42588 \approx 341.24814,
$
concluding the proof of Theorem~\ref{thm: CR non-primitive bounded speeds}.

\ignore{
\begin{proof}[Proof of Lemma~\ref{lem: monotonicity of algo s}]
We analyze each case separately.

\noindent
\emph{Case 1: } $d \in D_1 = [K, 4Ks/(s + 1)]$.  
According to Lemma~\ref{lem:energies}, we have

\begin{align*}
E(K,d,s)/d &= 1/d \cdot ((K - 1)(5 - 4s(s+1)) + 3K + d(s^2 + s_{c1}^2 + 2s_{b1}^2)) \\
&= \dfrac{(K - 1)(5 - 4s(s+1)) + 3K}{d} + s^2 + \left( \frac{d}{4K - d/s} \right)^2 + 2\left( \frac{d}{9d - 8K} \right)^2
\end{align*}

Consider the case $d = K$.  Plugging in $s = 0.39403$, the above evaluates to $8.4258714091 - 2.8028414364/K$.  
We claim that $E(K,d,s)/d$ is a decreasing function over interval $D_1$, and therefore attains its maximum when $d=K$.  Assuming this is true, adding the initialization energy of $1/3$ omitted so far and given that $d \geq K$, the energy ratio is at most  

$$
8.4258714091 - 2.8028414364/K + 1/(3K) \leq 8.42588
$$


We now prove that $E(K,d,s)/d$ is decreasing over the interval $D_1$.
Let 
\begin{align*}
f_1(K,d,s):=&\frac{d^2 }{(9 d-8 K)^2} \\
f_2(K,d,s):=&\frac{d^2 }{(d-4 K s)^2}\\
f_3(K,d,s):=&\frac{-4 K \left(s^2+s-2\right)+4 s (s+1)-5}{d},
\end{align*}
and observe that 
$$
\frac{E(K,d,s)}{d} =2 f_1(K,d,s)+s^2 f_2(K,d,s)+f_3(K,d,s)+s^2.
$$

The plan is to prove that 
$$
\frac{\partial}{\partial d} E(K,d,s)/d <0.
$$

For this we calculate 
\begin{align*}
\frac{\partial}{\partial d} f_1(K,d,s):= 
& -\frac{16 d K}{(9 d-8 K)^3} \\
\frac{\partial}{\partial d} f_2(K,d,s):=& 
-\frac{8 d K s}{(d-4 K s)^3}\\
\frac{\partial}{\partial d} f_3(K,d,s):=&
\frac{4 K \left(s^2+s-2\right)-4 s (s+1)+5}{d^2},
\end{align*}

Now we claim that all $\frac{\partial}{\partial d} f_i(K,d,s)$ are increasing functions in $d$, for $i=1,2,3$. 
Indeed, first, 
$$
\frac{\partial^2}{\partial d^2} f_1(K,d,s)
=
\frac{32 K (9 d+4 K)}{(9 d-8 K)^4}>0
$$
since $d\geq K$. Hence $\frac{\partial}{\partial d} f_1(K,d,s)$ is increasing in $d$.

Second, 
$$
\frac{\partial^2}{\partial d^2} f_2(K,d,s)
=
\frac{16 K s (d+2 K s)}{(d-4 K s)^4}
$$
is positive (and well defined), since $d\leq 4ks/(1+s)$. Hence $\frac{\partial}{\partial d} f_2(K,d,s)$ is increasing in $d$.

Third, we show that $\frac{\partial}{\partial d} f_3(K,d,s)$ is increasing in $d$. For this it is enough to prove that 
$
4 K \left(s^2+s-2\right)-4 s (s+1)+5 <0.
$
For $s=0.39403$ (and in fact for all $s\in (-2,1)$) the strict inequality can be written as 
$
K\geq \frac{4 s^2+4 s-5}{4 s^2+4 s-8},
$
which we show next it is satisfied. 
Indeed, it is easy to see that 
$\frac{4 s^2+4 s-5}{4 s^2+4 s-8} \leq 5/8$ (which is attained for $s=0$), while $K\geq 4$, hence the claim follows. 

To resume, we showed that $\frac{\partial}{\partial d} f_i(K,d,s)$ are increasing functions in $d$, for $i=1,2,3$. Recalling that $s=0.39403$, and since $d\leq 4ks/(1+s)$, we obtain that 
\begin{align*}
&\frac{\partial}{\partial d} E(K,d,s)/d \\
& \leq 
2 
\frac{\partial}{\partial d} f_1(K,4Ks/(1+s),s)
+
s^2 
\frac{\partial}{\partial d}
f_2(K,4Ks/(1+s),s)
+
\frac{\partial}{\partial d}
f_3(K,4Ks/(1+s),s) \\
&= \frac{(s+1)^2 (s (4 K (s (s+1) (49 s (7 s-6)+76)-8)-s (7 s (28 s (7 s+1)-365)+1774)+452)-40)}{16 K^2 s^2 (7 s-2)^3}\\
&= \frac{2.19262\, -1.79465 K}{K^2}.
\end{align*}
Since $K\geq 4$, the latter quantity is clearly negative. This shows that 
$\frac{\partial}{\partial d} E(K,d,s)/d$ is negative (in the given domain), hence $E(K,d,s)/d$ is decreasing in $d$.


\vspace{4mm}

\noindent
\emph{Case 2: } $d \in D_2 = [4Ks/(s + 1), 4Ks/(1-s)]$.
In this case,  the energy ratio $E(K,d,s)/d$ is

\begin{align*}
& \quad 1/d \cdot \left( 1/3 + (K - 1)(5 - 4s(s+1)) + 3K + \left( \frac{2d - 4Ks}{1-s}\right)(1 + s^2 + 2s_{b2}^2) \right)\\
&= \frac{(K - 1)(5 - 4s(s+1)) + 3K}{d} +  \left( \frac{2d - 4Ks}{d(1-s)} \right) \left(1 + s^2 + 2\left( \frac{2d - 4Ks}{d(8 - 9s - 1/s) + 4K(2s-1)} \right)^2 \right)
\end{align*}

We will show that this expression achieves its maximum at $d = 4Ks/(1-s)$.
When $s = 0.39403$, then above yields $8.425786060 - 1.0776069241/K$.  Given that $1/(3d) \leq 1/(3.39186 K)$, this implies that the energy ratio is at most 

$$
8.425786060 - 1.0776069241/K + 1/(3.39186 K) \leq 8.42588
$$

We prove that $E(K,d,s)/d$ is an increasing function over interval $D_2$.
First we compute 
$
\frac{\partial }{\partial d} \frac{E(K,d,s)}{d}
$
and we substitute $s=0.39403$ to find $
\frac{1}{d^2 (d-0.442471 K)^3} g(K,d)
$, where

\begin{align*}
g(K,d):= &d^3 (7.8438 K+2.80279)+d^2 K (-15.5674 K-3.72046) \\
&+d K^2 (8.91957 K+1.6462)+K^3 (-1.31555 K-0.242798).
\end{align*}
Note that $d\geq 4Ks/(1+s) \approx 1.13064 K$, and hence $d^2 (d-0.442471 K)^3 >0$ for all values of $d$ under consideration. Therefore the lemma will follow if we show that $g(K,d)\geq 0$ as well. 

$g(K,d)$ is a degree-3 polynomial with positive leading coefficient. It attains a local minimum at the largest real root of 
$$\frac{\partial }{\partial d} g(K,d)
=3 d^2 (7.8438 K+2.80279)+2 d K (-15.5674 K-3.72046)+K^2 (8.91957 K+1.6462)
$$
which is
$$
d_0(K) := 
\frac{K \left(0.661559 K + 0.0212482 \sqrt{K (129.818 K+8.39821)}+0.158106\right)}{ K+0.357325}
$$
Now we observe that for all $K>0$, we have 
$d_0(K) < 4Ks/(1+s) \approx 1.13064 K$.

From the above, it follows that $g(K,d)$ is monotonically increasing for $d\geq 4Ks/(1+s)$, and therefore 
$$
g(K,d) \geq 
g(K, 4Ks/(1+s))
= (0.205742 K+0.913416) K^3
\geq 0
$$
as wanted.


\vspace{4mm}

\noindent
\emph{Case 3: } $d \in D_3 = [4Ks/(1-s),4K]$.
The energy ratio $E(K,d,s)/d$ is 

\begin{align*}
&\quad  1/d \cdot \left( 1/3 + (K - 1)(5 - 4s(s+1)) + 3K - 4Ks(s + 1) + \frac{2d}{1-s}(s^3 + s_{b3}^2(s + 1) + 1) \right) \\
&= \frac{1/3 + (K - 1)(5 - 4s(s+1)) + 3K - 4Ks(s+1)}{d} +  \frac{2s^3 + 2}{1-s} + \frac{2s+2}{1-s} \left( \frac{d(1 + s)}{d(7 - 9s) + 8K(s - 1)} \right)^2 
\end{align*}

In this case, we claim that this expression is decreasing over $D_3$ and achieves its maximum at $d = 4Ks/(1-s)$.
When $s = 0.39403$, the above gives $8.425786060 - 1.0776069241/K$ (which is the same as in case 2, as one should expect).  Given that $1/(3d) \leq 1/(7.80296 K)$, we get that the energy ratio is at most 

$$
8.425786060 - 1.0776069241/K + 1/(7.80296K) \leq 8.42588
$$

Let us prove that $E(K,d,s)/d$ is indeed decreasing.
Note that $E(K,d,s)/d$ equals

$$
\frac{2 (s+1)^3}{1-s}
g(K,d,s)
+ 
h(K,d,s)
+\frac{2 \left(s^3+1\right)}{1-s}
$$

where 
$$
g(K,d,s):=\frac{d^2}{(d (7-9 s)+8 K (s-1))^2}, 
~~
h(K,d,s):=\frac{8 K \left(-s^2-s+1\right)+4 s (s+1)-5}
{d}.
$$
In what follows we prove that both $g(K,d,s), h(K,d,s) $ are strictly decreasing when $d\geq 4Ks/(1-s)$, implying the claim of the lemma. 

First we show that $h(K,d,s)$ is decreasing. For that note that, using the fixed value of $s = 0.39403$, we have
$
h(K,d,s) = (3.60558 K-2.80279)/d
$, and the latter expression (in $d$) is clearly strictly decreasing for all constants $K>1$. 

Now we show that $g(K,d,s)$ is strictly decreasing for all $d\geq 4Ks/(1-s)$. 
For that observe that for the specific constant $s$, and since $d\geq 4Ks/(1-s)\approx 2.60107 K$, we have that 
$$
|
d (7-9 s)+8 K (s-1)
|
=d (7-9 s)+8 K (s-1) >0.
$$
Hence, to show that $g(K,d,s)$ is strictly decreasing, it is enough to prove that 
$$
q(K,d,s):=\frac{d}{(d (7-9 s)+8 K (s-1))}
$$
is strictly decreasing in $d\geq 4Ks/(1-s)$. First observe that the rational function is well defined for these values of $d$, since the denominator becomes 0 only when $d= \frac{8 K (s-1)}{9 s-7} < 4Ks/(1-s)$ (the last inequality is easy to verify). To that end, we compute 
$$
\frac{\partial }{\partial d} q(K,d,s)
= 
\frac{8 K (s-1)}{(d (7-9 s)+8 K (s-1))^2}
$$
which is of course negative for the given value of $s<1$. 
\end{proof}
}

\begin{theorem}
\label{thm: CR non-primitive bounded speeds}
For every $c,b>0$ with $cb= 9$, there is an evacuation algorithm for unbounded-memory autonomous robots solving \ecb{c}{b} inducing energy consumption $8.42588 b^2 d$ for instances $d$, and competitive ratio 341.24814. 
\end{theorem}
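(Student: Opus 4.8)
The plan is to assemble the theorem from the machinery of Section~\ref{sec: analysis} for the canonical case $c=9,\ b=1$, and then lift it to every pair $(c,b)$ with $cb=9$ by a speed‑scaling argument. For the base case, fix the exploration speed $s=0.39403$ and consider the algorithm $\algo{s}$ of Section~\ref{sec: 3phase explore}. Since $0.39403\in[1/3,1/2]$, Lemma~\ref{lem: algo is feasible} applies and tells us that $\algo{s}$ is feasible for \ecb{9}{1}: every speed used by $L$ or $R$ in the Exploration, Chasing, and Exit states is at most $1$, and on every instance $d$ the second robot reaches the exit by time $9d$. Lemma~\ref{lem: monotonicity of algo s} then supplies the energy bound, namely that on any instance $d$ the two robots together consume at most $8.42588\,d$ energy. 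Plugging this into the competitive‑ratio definition of Section~\ref{sec: definition}, $\sup_{d>0}\frac{c^2}{2d}\,e(c,b,d)$, yields a competitive ratio of at most $\frac{9^2}{2}\cdot 8.42588\approx 341.24814$, which disposes of the case $c=9,\ b=1$.

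Next I would handle arbitrary $c,b>0$ with $cb=9$ by scaling only the speeds. Given such a pair, run $\algo{s}$ with $s=0.39403$ but multiply every prescribed speed by $b$, keeping the geometry (starting positions $-1$ and $2$, the turning points $\pm 4^k$ and $\pm 2\cdot 4^k$, the Chasing and Exit computations expressed in terms of positions) unchanged; hence the correctness of the evacuation is unaffected and the assumption that the exit is at distance at least $2$ is preserved, since it concerns the instance and not the robots' speeds. Because $\algo{s}$ uses speeds at most $1$, the scaled algorithm uses speeds at most $b$. A segment of length $x$ formerly traversed at speed $v$ is now traversed at speed $bv$, hence in time $x/(bv)=(1/b)(x/v)$; summing over all segments, the evacuation time of the scaled algorithm is $1/b$ times that of $\algo{s}$, so at most $\frac{1}{b}\cdot 9d=cd$ using $cb=9$. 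The energy of that segment becomes $(bv)^2x=b^2 v^2 x$, so the total energy is at most $8.42588\,b^2 d$. Finally the competitive ratio of the scaled algorithm is $\sup_{d>0}\frac{c^2}{2d}\cdot 8.42588\,b^2 d=\frac{(cb)^2}{2}\cdot 8.42588=\frac{81}{2}\cdot 8.42588\approx 341.24814$, independent of how the product $9$ is split into $c$ and $b$.

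The substantive difficulty lies entirely upstream, in Lemmas~\ref{lem:meeting-cases}, \ref{lem: algo is feasible}, \ref{lem:energies}, and above all in the monotonicity statements of Lemma~\ref{lem: monotonicity of algo s} that pin down the value $s=0.39403$ and the constant $8.42588$ (which in turn rest on the case analysis over $D_1,D_2,D_3$ and on the cone constraint of Theorem~\ref{thm: boundary cone}). Relative to those, the proof of Theorem~\ref{thm: CR non-primitive bounded speeds} is a short bookkeeping step: check that $0.39403$ lies in $[1/3,1/2]$ so the feasibility lemma can be invoked, feed the energy bound into the competitive‑ratio definition, and verify that the map $v\mapsto bv$ carries feasible solutions of \ecb{9}{1} to feasible solutions of \ecb{c}{b} while multiplying time by $1/b$ and energy by $b^2$. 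The one point I would be careful to spell out is precisely this last scaling correspondence, since it is what makes both the $cd$ time bound and the $8.42588\,b^2d$ energy bound fall out simultaneously.
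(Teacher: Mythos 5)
Your proposal is correct and follows essentially the same route as the paper: the case $c=9,\ b=1$ is exactly the paper's assembly of Lemma~\ref{lem: algo is feasible} and Lemma~\ref{lem: monotonicity of algo s} into the ratio $\tfrac{9^2}{2}\cdot 8.42588$, and the extension to general $cb=9$ is the speed-scaling remark the paper makes at the start of Section~\ref{sec: better algo}. Your explicit verification that scaling speeds by $b$ multiplies time by $1/b$ and energy by $b^2$ merely fills in a step the paper leaves implicit.
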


\ignore{
\section{Discussion}
The main contribution of our paper was to introduce an energy
consumption model appropriate to linear search and investigate how the
F2F  communication model affects time/energy trade-offs until
completion of the search by two robots, considering two different computational capabilities for the robots. 
Our approach inspired new
algorithms that take better account of the impact of the change in the
speed of the robots during the course of the search and leads to better
understanding through evaluation of trade-offs of the overall
performance of the algorithm. This raises several interesting problems
worth investigating. In addition to improving the trade-offs in the
algorithms proposed, one may wish to pursue new avenues for research
by examining additional search domains, like the unit disk, in the spirit
of~\cite{CGGKMP}. It would also be natural to consider more realistic
models of search with multiple agents some of which may be
faulty~\cite{isaacCzyzowiczGKKNOS16,PODC16} in such search domains.
}

\section*{Acknowledgements}

Research supported by NSERC discovery grants, NSERC graduate scholarship, and NSF.



\bibliographystyle{plain}
\bibliography{refs}


\appendix

\section{Figures}
\label{sec: omitted figures}

\begin{minipage}[t]{0.39\textwidth}
  \vspace{0pt}  
\begin{figure}[H]
  \centering
  \includegraphics[width=0.8\linewidth]{compratiof2f.pdf}
\caption{
The competitive ratio of algorithm 
\naive{s,r,k} (vertical axis) for the entire spectrum of $cb\geq 9$ (horizontal axis). Red curve corresponds to the case $cb\leq \gamma_1$, blue curve to the case $cb \in (\gamma_1,\gamma_2)$ and green curve to the case $cb\geq \gamma_2$. The curve is continuous and differentiable for all $cb\geq 9$. 
}
\label{fig: compratiof2f}
\end{figure}
\vline
\end{minipage}
~~~~~
\begin{minipage}[t]{0.54\textwidth}
  \vspace{0pt}  
\begin{figure}[H]
  \centering
  \includegraphics[width=0.65\linewidth]{comparisonVSoptimal-additive.pdf}
\caption{
Comparison between the competitive ratio achieved by using \textit{optimal} speed parameters to $\nlp{c}{b}$ of Theorem~\ref{thm: naive as NLP} (calculated numerically using software) and the competitive ratio achieved by the choices of Theorem~\ref{thm: (sub)optimal speeds naive bounded speeds}. 
The vertical axis is the difference of the competitive ratios, and the horizontal axis corresponds to the values of $cb\in (\gamma_1,\gamma_2)$ (for all other values of $cb$ the difference is provably 0). 
}
\label{fig: comparisonVSoptimal-additive}
\end{figure}
\end{minipage}

\begin{minipage}[t]{0.30\textwidth}
  \vspace{0pt}  
\begin{figure}[H]
\centering
\includegraphics[width=0.685\linewidth]{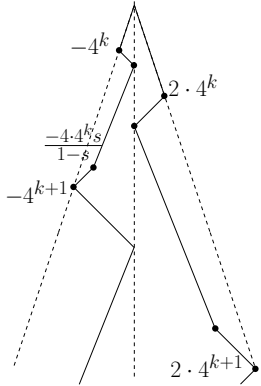}
\caption{A representation of position (x-axis, vertical dashed line is $0$) and time (y-axis), and the trajectory followed by the two robots (solid lines).  The two diagonal dashed lines form the ``1/3 cone'' of Theorem~\ref{thm: boundary cone}.}
\label{fig:cone2018}
\end{figure}
\vline
\end{minipage}
~~~~
\begin{minipage}[t]{0.59\textwidth}
  \vspace{0pt}  
\begin{figure}[H]
\centering
\includegraphics[width=.8\linewidth]{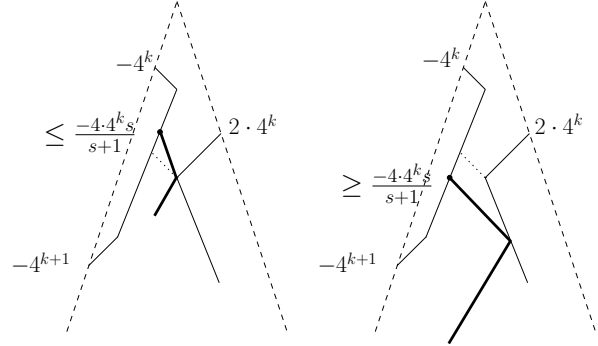}
\caption{The robots' behavior when the exit is found by $L$ is indicated by the bold line. In the first case (left), the catch-up speed is slower than $1$ (and the rendezvous is realized at the turning point of the non-finder), whereas it is $1$ in the second case (right).}
\label{fig:cone-find}
\end{figure}
\end{minipage}

\section{Observation~\ref{obs: opt offline bounded speeds}}
\label{sec: opt offline bounded speeds}

\begin{observation}
\label{obs: opt offline bounded speeds}
\ecb{c}{b} is well defined for each $b,c>0$ with $bc\geq1$, and the optimal solution, given that instance $d$ is known, equals $\frac{2d}{c^2}$. 
\end{observation}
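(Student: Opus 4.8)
The plan is to handle the offline problem directly. Once $d$ and the exit location are revealed, a group search algorithm need only route both robots from the origin to the exit subject to the speed cap $b$ and the time cap $cd$; there is no communication or uncertainty left. I would first settle feasibility (which is what ``well defined'' asks for): since $bc\ge 1$ we have $1/c\le b$, so the algorithm in which both robots walk straight to the exit at constant speed $1/c$ is legal --- it terminates at time $d/(1/c)=cd$ and never exceeds speed $b$ --- and it consumes energy $2\cdot(1/c)^2\cdot d = 2d/c^2$. Hence the optimal offline energy is at most $2d/c^2$ and, in particular, \ecb{c}{b} has a nonempty feasible set.

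For the matching lower bound I would reason about one robot at a time. In any feasible algorithm each robot travels from the origin to a point at distance $d$, so if it uses speeds $s_1,\dots,s_t$ over segments of lengths $x_1,\dots,x_t>0$ (restricting, if necessary, to the prefix of the trajectory up to the first visit of the exit), then $\sum_i x_i\ge d$ while $\sum_i x_i/s_i\le cd$. Writing $x_i=(s_i^2x_i)^{1/3}(x_i/s_i)^{2/3}$ and applying H\"older's inequality with exponents $3$ and $3/2$,
$$\sum_i x_i\le\Bigl(\sum_i s_i^2 x_i\Bigr)^{1/3}\Bigl(\sum_i \tfrac{x_i}{s_i}\Bigr)^{2/3},$$
so the robot's energy satisfies $\sum_i s_i^2 x_i\ge(\sum_i x_i)^3/(\sum_i x_i/s_i)^2\ge d^3/(cd)^2 = d/c^2$. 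Adding the two robots' contributions gives total energy at least $2d/c^2$, matching the upper bound, so the optimal offline energy equals $2d/c^2$ and is attained.

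The step I would be most careful with is the lower-bound bookkeeping rather than any hard estimate. One needs to observe that backtracking cannot help --- the portion of the trajectory up to the exit has length at least the straight-line distance $d$, so the summed segment lengths are $\ge d$ regardless of direction changes --- that the energy model's value $\sum_i s_i^2 x_i$ is exactly the quantity H\"older bounds, and that the speed constraint $s_i\le b$ only shrinks the feasible region and hence cannot invalidate a lower bound proved without using it. The H\"older estimate itself, together with the substitutions $\sum_i x_i\ge d$ and $\sum_i x_i/s_i\le cd$, is routine; the ``exit at distance at least $2$'' convention plays no role here since $d$ is fixed.
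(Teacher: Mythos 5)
Your proposal is correct, and its lower-bound half is genuinely different from (and more rigorous than) the paper's argument. The paper's proof asserts, essentially by symmetry and without further justification, that in the offline optimum each robot moves straight to the exit at a single constant speed $s$, and then reduces the problem to the one-line optimization $\min\{s^2 : 1/c \leq s \leq b\}$, whose solution $s=1/c$ gives energy $2d/c^2$ and whose feasibility requires $bc\geq 1$. You prove the same upper bound with the same straight-line, speed-$1/c$ strategy, but you replace the implicit ``constant speed is optimal'' step with an explicit lower bound: for any feasible trajectory, writing $x_i=(s_i^2x_i)^{1/3}(x_i/s_i)^{2/3}$ and applying H\"older with exponents $3$ and $3/2$ gives
$$\sum_i s_i^2 x_i \;\geq\; \frac{\bigl(\sum_i x_i\bigr)^3}{\bigl(\sum_i x_i/s_i\bigr)^2} \;\geq\; \frac{d^3}{(cd)^2} \;=\; \frac{d}{c^2}$$
per robot, using only that the path to the exit has length at least $d$ and the elapsed time is at most $cd$. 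This correctly handles arbitrary piecewise-constant speed profiles and backtracking, which the paper's symmetry appeal quietly sweeps aside; the cost is a slightly longer argument. Your bookkeeping remarks (truncating to the prefix ending at the first exit visit, and noting that the cap $s_i\leq b$ only shrinks the feasible set and so cannot hurt the lower bound) are exactly the right precautions.
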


\begin{proof}[Proof of Observation~\ref{obs: opt offline bounded speeds}]
Given that the location of the exit is known, and by symmetry, it is immediate that both robots have the same optimal speed, call it $s$, and they move in the direction of the exit. The induced evacuation time is then $d/s$, and the induced evacuation 
energy is $2d\cdot s^2$. For a feasible solution we require that $d/s\leq c\cdot d$ and that $0<s\leq b$, and hence, the optimal offline solution is obtained as the solution to 
$\min_s \{ s^2: ~1/c \leq s\leq b \}.$
For a feasible solution to exist, we need $bc\geq1$. Moreover, it is immediate that the optimal choice is $s=1/c$, inducing energy consumption $2d\cdot s^2=2d/c^2$.
\end{proof}

\section{Proofs Omitted from Section~\ref{sec: naive algo}.}

\subsection{Lemma~\ref{lem: energy time formulas}}
\label{sec: energy time formulas}
\begin{proof}[Proof of Lemma~\ref{lem: energy time formulas}]
Consider the moment that the exit is located, after time $d/s$ time of searching. 
The robot that now chases the other speed-$s$ robot at constant speed $r>s$ will reach it after $2d/(r-s)$ time. To see this note that the configuration is equivalent to that the speed-$s$ robot is immobile and the other robot moves at speed $r-s$, having to traverse a total distance of $2d$. Moreover, the speed-$s$ robot traverses an additional length $2ds/(r-s)$ segment till it is caught, being a total of $2ds/(r-s)+2d$ away from the exit. Once robots meet, the walk to the exit at speed $k$, which takes additional time $(2ds/(r-s)+2d)/k$. Overall the evacuation time equals
$$
\frac{d}{s}+\frac{2d}{r-s}+\frac{2ds/(r-s)+2d}{k} =
d\left( \frac{2 (k+r)}{k (r-s)}+\frac{1}{s}\right).
$$

Similarly we compute the total energy till both robots reach the exit. The energy spent by the finder is
$$
d\cdot s^2 + \left( \frac{2ds}{r-s} + 2d \right) \cdot r^2 + \left( \frac{2ds}{r-s} + 2d \right) \cdot k^2,
$$
while the energy spent by the non finder is 
$$
\left(d + \frac{2ds}{r-s}\right) \cdot s^2  + \left( \frac{2ds}{r-s} + 2d \right) \cdot k^2.
$$
Adding the two quantities and simplifying gives the promised formula. 
\end{proof}

\subsection{Theorem~\ref{thm: naive as NLP}}
\label{sec: naive as NLP}
\begin{proof}[Proof of Theorem~\ref{thm: naive as NLP}]
Note that in $\nlp{c}{\infty}$ that aims to provide a solution to \ec{c}, constraints $s,r,k\leq \infty$ are simply omitted. 
In particular, the theorem above claims that when $b=\infty$, i.e. when speeds are unbounded, algorithm \naive{s,r,k} always admits some feasible solution. 
In what follows, we prove all claims of the theorem. 

By Lemma~\ref{lem: energy time formulas}, the energy performance of \naive{s,r,k} equals $2d\cdot \eenergy{s,r,k}$, and the induced evacuation time is $d\cdot \ttime{s,r,k}$. For the values of $s,r,k$ to be feasible, we need that 
$0< s,r,k \leq b$, 
that 
$
r>k
$
and that 
$
d\cdot \ttime{s,r,k}\leq cd
$. 
Clearly the latter time constraint simplifies to the time constraint of $\nlp{c}{b}$, while the objective value can be scaled by $d>0$ without affecting the optimizers to the NLP, if such optimizers exist. 
Finally note that even though the strict inequalities become non strict inequalities in the NLP, speeds evaluations for which any of $s,r,k$ is 0 or $r=k$ violates the time constraint (for any fixed $c>0$). Therefore, $\nlp{c}{b}$ correctly formulates the problem of choosing optimal values for \naive{s,r,k} for solving \ecb{c}{b}. 

The next two lemmas show that the Naive algorithm can solve problem \ecb{c}{b} for the entire spectrum of $c,b$ values for which the problem admits solutions, as per Lemma~\ref{lem: restrictions on cb f2f}. 

\begin{lemma}
\label{lem: optimizer exists}
For every $c$, problem $\nlp{c}{\infty}$ admits an optimal solution. 
\end{lemma}

\begin{proof}[Proof of Lemma~\ref{lem: optimizer exists}]
Consider the redundant constraints $s,r,k\geq 1/c$ that can be derived by the existing constraints of~$\nlp{c}{\infty}$ (note that if all speeds are not at least $1/c$ then clearly the time constraint is violated). For the same reason, it is also easy to see that $r-s\geq 1/c$, since again we would have a violation of the time constraint. 

Next, it is easy to check that $s=7/c, r=14/c, k=7/c$ is a feasible solution, hence the NLP is not infeasible. 
The value of the objective for this evaluation is $686/c^2$. But then, notice that the objective is bounded from below by $s^2+r^2+2k^2$. Hence, if an optimal solution exists, constraints 
$s,r,k\leq \sqrt{686}/c$ are valid for the optimizers. We may add these constraints to~$\nlp{c}{\infty}$, resulting into a compact (closed and bounded) feasible region. But then, note that the objective is continuous over the new compact feasible region, hence from the extreme value theorem it attains a minimum. 
\end{proof}


\begin{lemma}
\label{lem: bounded speeds spectrum of r}
There exist $s,r,k$ for which \naive{s,r,k} induces a feasible solution to \ecb{c}{b}\ if and only if $c\geq 9/b$. 
\end{lemma}

\begin{proof}[Proof of Lemma~\ref{lem: bounded speeds spectrum of r}]
Consider the problem of minimizing completion time of the Naive Algorithm, given that the speeds are all bounded above by $b$. 
The corresponding NLP that solves the problem reads as. 
\begin{align}
\min & \frac{2 (k+r)}{k (r-s)}+\frac{1}{s} \label{min time, bounded speeds} \\
s.t. ~~&
r \geq s \notag \\
&0 \leq s,r,k\leq b \notag
\end{align}
Note that it is enough to prove that the optimal value to \eqref{min time, bounded speeds} is $9/b$. 
Indeed, that would imply that no speeds exist that induce completion time less than $9/b$, making the corresponding feasible region of~$\nlp{c}{b}$ empty if $c<9/b$. 

Now we show that the optimal value to \eqref{min time, bounded speeds} is $9/b$, by showing that the unique optimizers to the NLP are $r=k=b$ and $s=b/3$. Indeed, note that 
$$
\frac{\partial}{\partial r} \left( \frac{2 (k+r)}{k (r-s)}+\frac{1}{s} \right)
= -\frac{2 (k+s)}{k (r-s)^2}
$$
which is strictly negative for all feasible $s,r,k$ with $r>s$. Hence, there is no optimal solution for which $r<b$, as otherwise by increasing $r$ one could improve the value of the objective. Similarly we observe that 
$$
\frac{\partial}{\partial k} \left( \frac{2 (k+r)}{k (r-s)}+\frac{1}{s} \right)
= -\frac{2 r}{k^2 (r-s)}
$$
which is again strictly negative for all feasible $s,r,k$ with $r>s$. Hence, there is no optimal solution for which $k<b$, as otherwise by increasing $k$ one could improve the value of the objective.

To conclude, in an optimal solution to~\eqref{min time, bounded speeds} we have that $r=k=b$, and hence one needs to find $s$ minimizing $g(s,b,b)=\frac{4}{b-s}+\frac{1}{s}$. For this we compute 
$$
\frac{\partial}{\partial s} g(s,b,b) = \frac{4}{(b-s)^2}-\frac{1}{s^2}
$$
and it is easy to see that $g(s,b,b)=0$ if and only if $s=b/3$ or $s=-b$ (and the latter is infeasible). At the same time, $g(s,b,b)$ is convex when $s\leq b$ because $\frac{\partial^2}{\partial s^2} g(s,b,b) = \frac{8}{(b-s)^3}+\frac{2}{s^3} >0$, hence $s=b/3$ corresponds to the unique minimizer. 
\end{proof}


The last component of Theorem~\ref{thm: naive as NLP} that requires justification pertains to the competitive ratio. 
Now fix $b,c>0$ for which $bc\geq 9$, and let $\eenergy{s_0,r_0,k_0}$ be the optimal solution to $\nlp{c}{b}$ (corresponding to the optimal choices of algorithm \naive{s,r,k}). By Lemma~\ref{lem: energy time formulas} the induced energy consumption is $2d\cdot \eenergy{s_0,r_0,k_0}$. Then,  the competitive ratio of the algorithm is 
$
\sup_{d>0} \frac{c^2}{2d}~ 2d\cdot \eenergy{s_0,r_0,k_0} = c^2 \cdot \eenergy{s_0,r_0,k_0}.
$
\end{proof}

\subsection{Theorem~\ref{thm: optimal speeds naive unbounded speeds}}
\label{sec: optimal speeds naive unbounded speeds}
\begin{proof}[Proof of Theorem~\ref{thm: optimal speeds naive unbounded speeds}]
First we observe that $s=\frac{\sigma}{c}, r=\frac{\rho}{c}, k=\frac{\kappa}{c}$ are indeed feasible to $\nlp{c}{\infty}$ (for every $c>0$), since
$$
\ttime{\frac{\sigma}{c}, \frac{\rho}{c}, \frac{\kappa}{c}}
=c \left(\frac{2 (\kappa +\rho )}{\kappa  (\rho -\sigma )}+\frac{1}{\sigma }\right)
$$
and in particular, for the values of $\sigma, \rho, \kappa$ described above we have 
$\left(\frac{2 (\kappa +\rho )}{\kappa  (\rho -\sigma )}+\frac{1}{\sigma }\right) \approx 1$ 
(from the formal definition of $\sigma, \rho, \kappa$ that appears later, it will be clear that expression will be exactly equal to 1). 
Moreover, by Theorem~\ref{thm: naive as NLP}, the competitive ratio of \naive{\frac{\sigma}{c}, \frac{\rho}{c}, \frac{\kappa}{c}} is 
$$
c^2 \cdot \eenergy{\frac{\sigma}{c}, \frac{\rho}{c}, \frac{\kappa}{c}} = \frac{\rho  \left(2 \kappa ^2+\rho ^2+\sigma ^2\right)}{\rho -\sigma },
$$
as claimed. 

In the remaining of the section we prove that the choices for $s,r,k$ of Theorem~\ref{thm: optimal speeds naive unbounded speeds} are indeed optimal for \naive{s,r,k}. 
First we establish a structural property of optimal speeds choices for \naive{s,r,k}. 

\begin{lemma}
\label{lem: tight constraint}
For any $c>0$, optimal solutions  to $\nlp{c}{\infty}$ 
satisfy constraint $\ttime{s,r,k} \leq c$ tightly.
\end{lemma}

\begin{proof}
Consider an optimal solution $\bar s, \bar r, \bar k$. As noted before, we must have 
$\bar s, \bar r, \bar k>0$ and $\bar r>\bar s$, as otherwise the values would be infeasible. 

Next note that the time constraint can be rewritten as 
$$
k\geq  \frac{2 r s}{c r s-c s^2-r-s}
$$
For the sake of contradiction, assume that the time constraint is not tight for $\bar s, \bar r, \bar k$. Then, there is $\epsilon>0$ so that $\bar s, \bar r, k'$ is a feasible solution, where $k'=\bar k - \epsilon>0$. But then, the objective value strictly decreases, a contradiction to optimality. 
\end{proof}

We will soon derive the optimizers to $\nlp{c}{\infty}$ using Karush-Kuhn-Tucker (KKT) conditions. Before that, we observe that solutions are scalable with respect to $c$, which will also allow us to simplify our calculations. 

\begin{lemma}
\label{lem: normalized reduction}
Let $s', r', k'$ be the optimizers to~$\nlp{1}{\infty}$ inducing optimal energy $E$. 
Then, for any $c$, the optimizers to~$\nlp{c}{\infty}$ are $\bar s = s'/c, \bar r=r'/c, \bar k = k'/c$, and the induced optimal energy is $\frac1{c^2}E$. 
\end{lemma}

\begin{proof}
Note that the triplet $(s,r,k)$ is feasible to~$\nlp{c}{\infty}$ (for a specific $c$) 
if and only if 
the triplet $(c\cdot s,c \cdot r,c \cdot k)$ is feasible to~$\nlp{1}{\infty}$.
Moreover, it is straightforward that when speeds are scaled by $c$, the induced energy is scaled by $c^2$. 
Hence, for every $c>0$ there is a bijection between feasible (and optimal) solutions to ~$\nlp{c}{\infty}$ and ~$\nlp{1}{\infty}$.
\end{proof}

We are therefore motivated to solve $\nlp{1}{\infty}$, and that will allow us to derive the optimizers for~$\nlp{c}{\infty}$, for any $c>0$.

\begin{lemma}
\label{lem:optimizers of pcprime}
The optimal solution to~$\nlp{1}{\infty}$ is obtained for 
$$
s = \sigma \approx  2.65976,~
r = \rho \approx 11.3414 ,~
k=\kappa \approx 6.63709
$$
and the optimal NLP value is $\frac{\rho  \left(2 \kappa ^2+\rho ^2+\sigma ^2\right)}{\rho -\sigma } \approx 292.37$.
\end{lemma}

\begin{proof}
By KKT conditions, we know that, necessarily, all minimizers of $\eenergy{s,r,k}$ satisfy the condition that 
$-\nabla \eenergy{s,r,k}$ is a conical combination of tight constraints (for the optimizers). Lemma~\ref{lem: tight constraint}
asserts that $\ttime{s,r,k} = 1$ has to be satisfied for all optimizers $s,r,k$.
At the same time, recall that, by the proof of Lemma~\ref{lem: optimizer exists}, none of the constraints  $r\geq s$ and $s,r,k\geq 0$ can be tight for an optimizer. Hence, KKT conditions imply that any optimizer $s,r,k$ satisfies, necessarily,  the following system of nonlinear constraints
\begin{align*}
-\nabla \eenergy{s,r,k} &= \lambda \nabla \ttime{s,r,k} \\
\ttime{s,r,k}&=1 \\
\lambda &\geq 0
\end{align*}
More explicitly, the first equality constraints is
\begin{align*}
\left(
\begin{array}{c}
r-\frac{2 r \left(k^2+r^2\right)}{(r-s)^2}\\
\frac{2 k^2 s-2 r^3+3 r^2 s+s^3}{(r-s)^2}\\
\frac{4 k r}{s-r}
\end{array}
\right)
&=
\lambda \left(
\begin{array}{c}
\frac{2 (k+r)}{k (r-s)^2}-\frac{1}{s^2}\\
-\frac{2 (k+s)}{k (r-s)^2}\\
\frac{2 r}{k^2 (s-r)}
\end{array}
\right)
\end{align*}
From the 3rd coordinates of the gradients, we obtain that $\lambda=2k^3$, which directly implies that the dual multiplier $\lambda$ preserves the correct sign for the necessary optimality conditions. 

Hence, the original system of nonlinear constraints is equivalent to that 
\begin{align*}
r-\frac{2 r \left(k^2+r^2\right)}{(r-s)^2}
&=2k^3
\left(
\frac{2 (k+r)}{k (r-s)^2}-\frac{1}{s^2}
\right) \\
-2 k^2 s+2 r^3-3 r^2 s-s^3
&= 4k^2(k+s) \\
\frac{2 (k+r)}{k (r-s)}+\frac{1}{s} &=1
\end{align*}
Using software numerical methods, we see that the above algebraic system admits the following 3 real roots for $(s,r,k)$: 
\begin{align*}
(2.659764883844293,~11.341425445393606,~6.637089776204052) & (\textrm{multiplicity 1}) \\
(-0.6115006613361799,~0.47813267995355124,~1.0972211311317337) & (\textrm{multiplicity 2})
\end{align*}

Since also all speeds are nonnegative, we obtain the unique candidate optimizer 
$$(\sigma,\rho,\kappa)=(2.65976, 11.3414, 6.63709).$$ 
To verify that indeed $(\sigma,\rho,\kappa)$ is a minimizer, we compute
$$
\nabla^2 \eenergy{s,r,k}=
\left(
\begin{array}{ccc}
 \frac{4 r \left(k^2+r^2\right)}{(r-s)^3} & \frac{(r+s) \left(-2 k^2+r^2+s^2-4 r s\right)}{(r-s)^3} & \frac{4 k r}{(r-s)^2} \\
 \frac{(r+s) \left(-2 k^2+r^2+s^2-4 r s\right)}{(r-s)^3} & \frac{2 \left(r^3-3 s r^2+3 s^2 r+s^3+2 k^2 s\right)}{(r-s)^3} & -\frac{4 k s}{(r-s)^2} \\
 \frac{4 k r}{(r-s)^2} & -\frac{4 k s}{(r-s)^2} & \frac{4 r}{r-s} \\
\end{array}
\right).
$$
\ignore{
MatrixForm[
 FullSimplify[
  D[f[s, r, k], {{s, r, k}, 2}]
  ]
 ]
}
Moreover, 
$$
\nabla^2 \eenergy{\sigma,\rho,\kappa}=
\left(
\begin{array}{ccc}
 11.9718 & -1.56333 & 3.99485 \\
 -1.56333 & 2.83125 & -0.936864 \\
 3.99485 & -0.936864 & 5.22546 \\
\end{array}
\right)
$$
which has eigenvalues $14.1183, 3.41098, 2.49927$, hence it is PSD. 
As a result, $f(s,r,k)$ is locally convex at $(\sigma,\rho,\kappa)$, and therefore $(\sigma,\rho,\kappa)$ is a local minimizer to~$\nlp{1}{\infty}$. 
As we showed earlier, $(\sigma,\rho,\kappa)$ is the only candidate optimizer, hence a global minimizer as well. 

\ignore{
s0 = 2.65976;
r0 = 11.3414;
k0 = 6.63709;
hessian[s_, r_, 
  k_] := {{(2 r)/(r - s) + (4 r s)/(r - s)^2 + (
    2 r (2 k^2 + r^2 + s^2))/(r - s)^3, (2 r^2)/(r - s)^2 - (
    2 r s)/(r - s)^2 + (2 s)/(r - s) - (
    2 r (2 k^2 + r^2 + s^2))/(r - s)^3 + (
    2 k^2 + r^2 + s^2)/(r - s)^2, (
   4 k r)/(r - s)^2}, {(2 r^2)/(r - s)^2 - (2 r s)/(r - s)^2 + (2 s)/(
    r - s) - (2 r (2 k^2 + r^2 + s^2))/(r - s)^3 + (
    2 k^2 + r^2 + s^2)/(r - s)^2, -((4 r^2)/(r - s)^2) + (6 r)/(
    r - s) + (2 r (2 k^2 + r^2 + s^2))/(r - s)^3 - (
    2 (2 k^2 + r^2 + s^2))/(r - s)^2, -((4 k r)/(r - s)^2) + (4 k)/(
    r - s)}, {(
   4 k r)/(r - s)^2, -((4 k r)/(r - s)^2) + (4 k)/(r - s), (4 r)/(
   r - s)}}
   MatrixForm[ hessian[s0, r0, k0] ]
   }

\ignore{
c1[s_, r_, k_] := 
  r - (2 r (k^2 + r^2))/(r - s)^2 - 
   2*k^3*((2 (k + r))/(k (r - s)^2) - 1/s^2);
(* c2[s_,r_,k_]:= (-2 r^3+2 k^2 s+3 r^2 s+s^3)/(r-s)^2-2*k^3*(-((2 \
(k+s))/(k (r-s)^2))) ;*)

 c2[s_, r_, k_] := (-2 r^3 + 2 k^2 s + 3 r^2 s + s^3)/1 + 
   2*k^2*((2 (k + s))/1 ) ;

c3[s_, r_, k_] := 2*(k + r)/k/(r - s) + 1/s - 1;

NSolve[ 
 {c1[s, r, k] == 0, c2[s, r, k] == 0, c3[s, r, k] == 0}, {s, r, k}]
}

\end{proof}

\ignore{
f[s_, r_, k_] := r/(r - s)*(s^2 + r^2 + 2*k^2);
g[s_, r_, k_] := 2*(k + r)/k/(r - s) + 1/s;

MatrixForm[
 FullSimplify[ -Grad[f[s, r, k], {s, r, k}] ]
 ]
}

\ignore{
time[s_, r_, k_] := (2 (k + r))/(k (r - s)) + 1/s;
energy[s_, r_, k_] := ( r (2 k^2 + r^2 + s^2))/(r - s)
Minimize[ {energy[s, r, k], 
  time[s, r, k] == 1. && s >= 0 && r >= 0 && k >= 0}, {s, r, k}] 
}

Lemma~\ref{lem:optimizers of pcprime} together with Lemma~\ref{lem: normalized reduction} imply that for any $c>0$ the optimal solution to $\nlp{c}{\infty}$ is exactly for 
$$
s=\frac{\sigma}{c}, ~r=\frac{\rho}{c}, ~k=\frac{\kappa}{c}
$$
and hence, the proof of Theorem~\ref{thm: optimal speeds naive unbounded speeds} follows. 
\end{proof}

\subsection{Lemma~\ref{lem: optimizers for c>9/b, speed bound b}}
\label{sec: optimizers for c>9/b, speed bound b}
\begin{proof}[Proof of Lemma~\ref{lem: optimizers for c>9/b, speed bound b}]

An immediate corollary from the proof of Lemma~\ref{lem: bounded speeds spectrum of r} (within the proof of Theorem~\ref{thm: naive as NLP}) is the following
\begin{corollary}
\label{cor: solution when c=9/b}
The unique solution to~$\nlp{c}{b}$ when $c=9/b$ is given by 
$$
r=k=b, s=\frac{b}3,
$$
inducing energy $\frac{28 b^2d}{3}$, and competitive ratio $\frac{14 (cb)^2}{3}=378$.
\end{corollary}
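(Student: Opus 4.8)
Since the Corollary is explicitly announced as ``an immediate corollary from the proof of Lemma~\ref{lem: bounded speeds spectrum of r}'', the plan is short. In that proof it is shown that the minimum of $\ttime{s,r,k}$ over the region $\{\,r\ge s,\ 0\le s,r,k\le b\,\}$ equals exactly $9/b$, and that this minimum is attained \emph{only} at $(s,r,k)=(b/3,\,b,\,b)$ (one first argues $r=k=b$ from the signs of the partial derivatives of $\ttime{s,r,k}$, then minimizes the resulting one-variable function $4/(b-s)+1/s$). Now set $c=9/b$. Every feasible triple then satisfies $\ttime{s,r,k}\ge 9/b=c$, so the time constraint $\ttime{s,r,k}\le c$ of $\nlp{c}{b}$ holds if and only if it holds with equality; by uniqueness of the minimizer, the only such triple is $(b/3,b,b)$. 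Hence the feasible region of $\nlp{c}{b}$ collapses to the singleton $\{(b/3,b,b)\}$, which is therefore its unique and trivially optimal solution. The only logical point worth double-checking is exactly this collapse, and it is immediate from the definition of minimum.

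It then remains to plug $(s,r,k)=(b/3,b,b)$ into the formulas of Lemma~\ref{lem: energy time formulas} and Theorem~\ref{thm: naive as NLP}. One computes $\eenergy{b/3,b,b}=\tfrac{b}{b-b/3}\bigl(\tfrac{b^2}{9}+b^2+2b^2\bigr)=\tfrac{3}{2}\cdot\tfrac{28b^2}{9}=\tfrac{14b^2}{3}$. By Lemma~\ref{lem: energy time formulas} the induced energy consumption of \naive{b/3,b,b} on an instance $d$ is $2d\cdot\eenergy{b/3,b,b}=\tfrac{28b^2d}{3}$, and by Theorem~\ref{thm: naive as NLP} its competitive ratio is $c^2\cdot\eenergy{b/3,b,b}=\tfrac{14(cb)^2}{3}$, which equals $\tfrac{14\cdot 81}{3}=378$ because $cb=9$. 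There is no genuine obstacle in this argument: essentially all the work has already been done inside the proof of Lemma~\ref{lem: bounded speeds spectrum of r}, and what remains is the routine arithmetic above; the value of the Corollary is that it isolates the base point from which the perturbation analysis of Lemma~\ref{lem: optimizers for c>9/b, speed bound b} (the regime $cb\le\gamma_1$) departs.
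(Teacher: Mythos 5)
Your proposal is correct and matches the paper's intent exactly: the paper presents this as an immediate consequence of the proof of Lemma~\ref{lem: bounded speeds spectrum of r}, and your argument simply spells out why — the unique minimizer $(b/3,b,b)$ of $\ttime{s,r,k}$ over the box achieves value $9/b$, so for $c=9/b$ the feasible region of $\nlp{c}{b}$ collapses to that singleton, after which the energy and competitive-ratio values follow by direct substitution into Lemma~\ref{lem: energy time formulas} and Theorem~\ref{thm: naive as NLP}. The arithmetic checks out.
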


Next we find solutions for $c>9/b$ so that $r,k\leq b$ remain tight.
Since, when $c=9/b$, there is only one optimizer $s=b/3, r=b,k=b$, two inequality constraints are tight. The next calculations investigate the spectrum of $c$ for which the same constraints remain tight for the optimizer. 

We write 1st order necessary optimality conditions for $\nlp{c}{b}$, given that the candidate optimizer satisfies the time constraint, and the two $r,k\leq b$ speed bound constraints tightly
\ignore{
\begin{align}
& \frac{2 (k+r)}{k (r-s)}+\frac{1}{s} \leq c 		\label{time c bound b} \\
& r\leq b															\label{r bound bound b}\\
& k\leq b															\label{k bound bound b}
\end{align}
imply that 
}
\begin{align*}
-\nabla \eenergy{s,r,k} & = 
\lambda_1
\nabla \ttime{s,r,k}
+
\lambda_2 
\left(
\begin{array}{c}
0\\
1\\
0
\end{array}
\right) 
+
\lambda_3 
\left(
\begin{array}{c}
0\\
0\\
1
\end{array}
\right) 
 \\
\ttime{s,r,k}&=c\\
r&=b \\
k&=b\\
\lambda_1,\lambda_2,\lambda_3 &\geq 0
\end{align*}

From the tight time constraint, and solving for $s$ we obtain that 
$$
s_{1,2}=\frac{\pm \sqrt{b^2 c^2-10 b c+9}+b c-3}{2 c}
$$

For each $s\in \{s_1, s_2\}$, the first gradient equality defines a linear system over $\lambda_1, \lambda_2, \lambda_3$ whose solutions are
$$
\lambda_1 =\frac{(s-3) s^2}{3 s-1}, ~~
\lambda_2 = -\frac{-5 s^3-9 s+2}{(s-1) (3 s-1)}, ~~
\lambda_3=-\frac{2 \left(s^3-3 s^2-6 s+2\right)}{(s-1) (3 s-1)}.
$$
$$
\lambda_1 =\frac{b s^2 (3 b-s)}{b-3 s}, ~~
\lambda_2 =-\frac{2 b^3-9 b^2 s-5 s^3}{(b-3 s) (b-s)}, ~~
\lambda_3=-\frac{2 \left(2 b^3-6 b^2 s-3 b s^2+s^3\right)}{(b-3 s) (b-s)}
$$
respectively. 
As long as all dual multiplies $\lambda_i=\lambda_i(s)$ are positive, corresponding solution $(s,b,b)$ is optimal to $R_c'$, provided that $\nabla^2 f(s,b,b)\succ 0$ .

First we claim that $s_1$ cannot be part of an optimizer. 
Indeed, 
$$\lambda_1(s_1)=
-\frac{b \left(5 b c-\sqrt{(b c-9) (b c-1)}+3\right) \left(b c+\sqrt{(b c-9) (b c-1)}-3\right)^2}{4 c^2 \left(b c+3 \sqrt{(b c-9) (b c-1)}-9\right)}
$$
Recall that $bc>9$, and hence the denominator of $\lambda_1(s_1)$ as well as $b c+\sqrt{(b c-9) (b c-1)}-3$ are strictly positive. But then, the sign of $\lambda_1(s_1)$ is exactly the opposite of $5 b c-\sqrt{(b c-9) (b c-1)}+3$. 
Define function $h(x):=5 x-\sqrt{(x-9) (x-1)}+3$ over the domain $x>9$. It is easy to verify that $h(x)$ preserves positive sign (in fact $\min_{x\geq 9} h(x) = h\left( \frac{5}{3} \left(\sqrt{6}+3\right) \right) = 8 \sqrt{6}+28>0$
Hence, $\lambda_1(s_1)<0$ that concludes our claim. 

Next we investigate the spectrum of $c$ for which all $\lambda_i(s_2)$ remain non-negative. 

Our next claim is that for all $bc>9$ we have that $s_2(c)<b/3$. Indeed, consider function 
$$d(x):=3 \sqrt{x^2-10 x+9}-b x+9.
$$
It is easy to see that $d(bc)=6c\left(  \frac{b}3 - s_2(c)  \right)$. 
But then, elementary calculations show that 
$\min_{x\geq 9} d(x) = d(9) =0$, proving that $s_2(c)<b/3$ as claimed. 

Next we investigate the sign of $\lambda_1(s_2), \lambda_2(s_2), \lambda_3(s_2)$. For this, introduce function $t(x) = \sqrt{(x-9) (x-1)}$, and note that 
\begin{align*}
\lambda_1(s_2)&=
-\frac{b \left(-b c+t(bc)+3\right)^2 \left(5 b c+t(bc)+3\right)}{4 c^2 \left(b c-3 \left(t(bc)+3\right)\right)},
\\
\lambda_2(s_2)&=
\frac{30 \left(t(bc)+3\right)+b c \left(b c \left(-3 b c+3 t(bc)+32\right)-5 \left(5 t(bc)+11\right)\right)}{4 c (b c-9)},
\\
\lambda_3(s_2)&=
\frac{b c \left(-23 t(bc)+b c \left(-3 b c+3 t(bc)+22\right)+49\right)-12 \left(t(bc)+3\right)}{4 c (b c-9)}.
\end{align*}

Claim 1: $\lambda_1(s_2) >0$ for all $c>9/b$. \\
Define $d_1(x)=x - 3 (3 + t (x))$ and $d_2(x)=3+5x +t(x)$. Note that $sign\left(\lambda_1(s_2) \right)= - sign( d_1(bc)) \cdot sign(d_2(bc))$. 
Simple calculus shows that $d_1(x)$ is strictly decreasing in $x\geq 9$, and $d_1(9)=0$, and therefore  $d_1(bc)<0$ for all $c>9/b$. Similarly, it is easy to see that $d_2(x)$ is strictly increasing in $x>9$, and $d_2(9)=45$. Therefore $d_2(bc)>0$ for all $c>9/b$. Overall this implies that $\lambda_1(s_2)$ is positive for all $c>9/b$. 

Claim 2: $\lambda_2(s_2) >0$ for all $c\in ( 9/b, 9.72307/b)$. \\
First we observe that the denominator of $\lambda_2(s_2)$ preserves positive sign for $c>9/b$. So we focus on the sign of the numerator we we abbreviate by 
$d_3(x)=30 (3 + t(x)) + x (x (32 - 3 x + 3 t(x)) - 5 (11 + 5 t(x)))$.
Note that $d_3(x) = 0$ is equivalent to that 
\begin{align*}
 & \left( 3 x^2-25 x+30 \right) t(x)-\left( 3 x^3-32 x^2+55 x-90\right) = 0  \\
\Leftrightarrow &
\left( 3 x^2-25 x+30 \right)^2 t^2(x)-\left( 3 x^3-32 x^2+55 x-90\right)^2 = 0\\
\Leftrightarrow &
-8 (x-9) x (3 x (x (2 x-25)+60)-175)=0
\end{align*}
Degree-3 polynomial $3 x (x (2 x-25)+60)-175$ has only one real root, which is 
$$
\frac{1}{18} \left(3 \sqrt[3]{5 \left(192 \sqrt{10}+1055\right)}+\sqrt[3]{142425-25920 \sqrt{10}}+75\right)
\approx 9.72307
$$
Hence, $\lambda_2(s_2) >0$ for all $c\in ( 9/b, 9.72307/b)$

Claim 3: $\lambda_3(s_2) >0$ for all $c\in ( 9/b, 9.06609/b)$. \\
First we observe that the denominator of $\lambda_3(s_2)$ preserves positive sign for $c>9/b$. So we focus on the sign of the numerator we we abbreviate by 
$d_4(x)=x (x (3 t(x)-3 x+22)-23 t(x)+49)-12 (t(x)+3)$.
Note that $d_4(x) = 0$ is equivalent to that 
\begin{align*}
 & \left( 3 x^2-23 x-12 \right) t(x)-\left( 3 x^3-22 x^2-49 x+36\right) = 0  \\
\Leftrightarrow &
\left( 3 x^2-23 x-12 \right)^2 t^2(x)-\left( 3 x^3-22 x^2-49 x+36\right)^2= 0\\
\Leftrightarrow &
-16 (x-9) x (3 x (2 (x-9) x-3)+49)=0
\end{align*}
The roots of degree-3 polynomial $3 x (2 (x-9) x-3)+49$ are 
\begin{align*}
\gamma_1&= 3+\sqrt{38} \cos \left(\frac{1}{3} \tan ^{-1}\left(\frac{127}{151 \sqrt{2}}\right)\right)\approx 9.06609 \\
\gamma'&= 3+\sqrt{\frac{57}{2}} \sin \left(\frac{1}{3} \tan ^{-1}\left(\frac{127}{151 \sqrt{2}}\right)\right)+\sqrt{\frac{19}{2}} \left(-\cos \left(\frac{1}{3} \tan ^{-1}\left(\frac{127}{151 \sqrt{2}}\right)\right)\right) \approx 0.916629 \\
\gamma''&= 3-\sqrt{\frac{57}{2}} \sin \left(\frac{1}{3} \tan ^{-1}\left(\frac{127}{151 \sqrt{2}}\right)\right)+\sqrt{\frac{19}{2}} \left(-\cos \left(\frac{1}{3} \tan ^{-1}\left(\frac{127}{151 \sqrt{2}}\right)\right)\right) \approx -0.982723\\
\end{align*}
We conclude that 
$\lambda_3(s_2)$ preserves positive sign for all $c\in ( 9/b, \gamma_1/b)$.

Overall, we have shown that feasible solution 
$s_0=\frac{- \sqrt{b^2 c^2-10 b c+9}+b c-3}{2 c}, r_0=k_0=b$
satisfies necessary 1st order optimality conditions. We proceed by checking that $s_0, r_0, k_0$ satisfy 2nd order sufficient conditions, which amounts to showing that  
$\nabla^2 f(s_0,b,b)\succ 0$. 
Indeed, 
$$
\nabla^2 f(s_0,b,b)=
\frac{b^3}{(b-s_0)^3}
\left(
\begin{array}{ccc}
 8 & -\frac{\left(b+s_0\right) \left(b^2+4 s_0 b-s_0^2\right)}{b^3} & 4-\frac{4 s_0}{b} \\
 -\frac{\left(b+s_0\right) \left(b^2+4 s_0 b-s_0^2\right)}{b^3} & \frac{2 \left(b^3-s_0 b^2+3 s_0^2 b+s_0^3\right)}{b^3} & \frac{4 s_0 \left(s_0-b\right)}{b^2} \\
 4-\frac{4 s_0}{b} & \frac{4 s_0 \left(s_0-b\right)}{b^2} & \frac{4 \left(b-s_0\right){}^2}{b^2} \\
\end{array}
\right)$$
By setting $q:=s_0/b = \frac{-\sqrt{b^2 c^2-10 b c+9}+b c-3}{2 b c}$, we obtain the simpler form 
\begin{equation}
\nabla^2 f(s_0,b,b)=
\frac{b^3}{(b-s_0)^3}
\left(
\begin{array}{ccc}
 8 & (-q-1) \left(-q^2+4 q+1\right) & 4-4 q \\
 (-q-1) \left(-q^2+4 q+1\right) & 2 \left(q^3+3 q^2-q+1\right) & 4 (q-1) q \\
 4-4 q & 4 (q-1) q & 4 (1-q)^2 \\
\end{array}
\right)
\label{eq: simplified matrix}
\end{equation}
When $bc>9$ we have that $q<1/3$, $q$ is decreasing in the product of $bc>9$, and it remains positive. 
The eigenvalues of the matrix that depends only on $q$ and for any $q\in (0,1/3]$ can be obtained using a closed formula (they are real roots of a degree-3 polynomial). In Figure~\ref{fig: eigen} we depict their behavior. 
Since all eigenvalues are all positive, the candidate optimizer is indeed a minimizer. 
\begin{figure}[h!]
  \centering
  \includegraphics[width=0.5\linewidth]{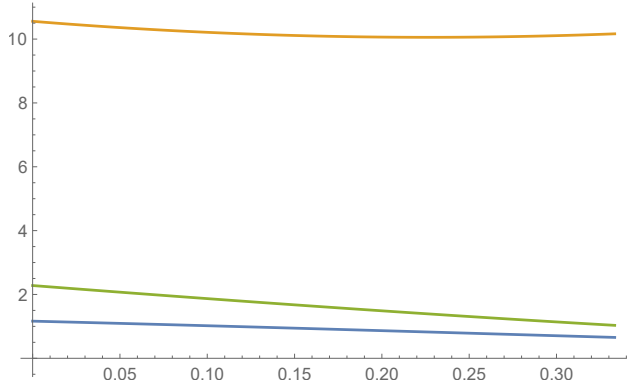}
\caption{
The eigenvalues of matrix~\eqref{eq: simplified matrix} as a function of $q\in (0,1/3]$ (and scaled by $b^3/(b-s_0)^3$). }
\label{fig: eigen}
\end{figure}

\end{proof}

\subsection{Lemma~\ref{lem: optimal speeds naive bounded speeds b>=c>=r1}}
\label{sec: optimal speeds naive bounded speeds b>=c>=r1}
\begin{proof}[Proof of Lemma~\ref{lem: optimal speeds naive bounded speeds b>=c>=r1}]
By Theorem~\ref{thm: optimal speeds naive unbounded speeds}, we know the optimizers to $\nlp{c}{\infty}$; 
$
s=\sigma/c, 
r=\rho/c, 
k=\kappa/c,
$.
These optimizers satisfy the speed bound constraints $s,r,k\leq b$ as long as $\max\{s,r,k\}\leq b$, i.e. $\rho/c\leq b$. Hence, when $c\geq \rho/b$, Non Linear Programs $\nlp{c}{\infty}$, $\nlp{c}{b}$ have the same optimizers. 
\end{proof}

\subsection{Lemma~\ref{lem: naive case between gamma1 gamma2}}
\label{sec: naive case between gamma1 gamma2}
\begin{proof}[Proof of Lemma~\ref{lem: naive case between gamma1 gamma2}]
First, we observe that constraint $r\leq b$ is tight for the provable optimizers for all $c,b$ when $cb \in [9, \gamma_1] \cup \{\gamma_2\}$. As the only other constraint that switches from being tight to non-tight in the same interval is $k\leq b$, we are motivated to maintain tightness for constraints $r\leq b$ and the time constraint. 

Given that speed $s$ is chosen (to be determined later), we fix $r=b$, and we set $k=k_2(c,b)$ where 
$$
k_2(c,b) := \frac{2 b s}{b c s-b-c s^2-s}
$$
so as to satisfy the time constraint tightly (by solving $\ttime{s,r,b}=c$ for $k$). 
It remains to determine values for speed $s$. To this end, we heuristically require that $s=s_2(c,b)$ where 
$$
s_2(c,b):=\alpha \cdot c + \beta
$$
for some constants $\alpha, \beta$ that we allow to depend on $b$. 
In what follows we abbreviate $s_2(c,b)$ by $s_2(c)$. 
Let $s_1(c), s_3$ be the chosen values for speed $s$ as summarized by the statement of Theorem~\ref{thm: (sub)optimal speeds naive bounded speeds} when $cb\leq \gamma_1$ and $cb \geq \gamma_2$, respectively. 
We require that 
$$
s_2(\gamma_1/b) = s_1(\gamma_1/b), ~~s_2(\gamma_2/b) = s_3(\gamma_2/b)
$$
inducing a linear system on $\alpha, \beta$. By solving the linear system, we obtain 
\begin{align*}
\alpha &= 	\frac{b^2 \left(-2 \gamma _1 \sigma +\gamma _2 \gamma _1-\sqrt{\gamma _1^2-10 \gamma _1+9} \gamma _2-3 \gamma _2\right)}{2 \gamma _1 \left(\gamma _1-\gamma _2\right) \gamma _2} \\
\beta &=\frac{b \left(-2 \gamma _1^2 \sigma +\gamma _2^2 \gamma _1-\sqrt{\gamma _1^2-10 \gamma _1+9} \gamma _2^2-3 \gamma _2^2\right)}{2 \gamma _1 \gamma _2 \left(\gamma _2-\gamma _1\right)}
\end{align*}
Using the known values for $\gamma_1, \gamma_2, \sigma$, we obtain $s_2(c)=0.532412 b - 0.0262661 b^2 c$, as promised. 
It remains to argue that $s_2(c,b)$, together with $r=b$, and $k_2(c,b)$ are feasible when $\gamma_1 < cb < \gamma_2$. 

The fact that $s_2(c)$ complies with bounds $0\leq s\leq b$ follows immediately, since $s_2(c)$ is a linear strictly decreasing function  in $c$, and both $s_2(\gamma_1/b), s_2(\gamma_2/b)$ satisfy the bounds by construction. We are therefore left with checking 
that $0 \leq k_2(c,b) \leq b$ which is equivalent to that 
\begin{align*}
& ~~bcs-b-cs^2 - 3s \geq 0 \\
\Leftrightarrow 
& ~~b (b c (b c (0.00170268\, -0.000689908 b c)+0.327748)-1.59724) \geq 0
\end{align*}
Define degree-2 polynomial function $g(x) = x (x (0.00170268\, -0.000689908 x)+0.327748)-1.59724$ and observe that it sufficies to prove that $g(x)\geq 0$ for all $x \in (\gamma_1, \gamma_2)$. The roots of $g$ can be numerically computed as $-22.8094, , 5.0074,  20.2699$, proving that $g$ preserves positive sign in $(\gamma_1, \gamma_2)$ as wanted. 

Finally, the claims regarding the induced energy and competitive ratio is implied by Theorem~\ref{thm: naive as NLP} and obtained by evaluating the given choices of $s,r,k$ in $\eenergy{s,r,k}$. 
\end{proof}

\section{Proofs Omitted from Section~\ref{sec: better algo}.}

\subsection{Proposition~\ref{prop:time-round}}
\label{sec:time-round}

\begin{proof}[Proof of Proposition~\ref{prop:time-round}]
The exploration algorithm explicitly ensures that round $k-1$ ends (and hence that round $k$ begins) at the claimed position, so only the time needs to be proved.
Note that the statement is true for both robots when $k = 0$. 
Suppose that when $L$ starts its $(k-1)$-th round, it is at position $-4^{k-1}$ at time
$3 \cdot 4^{k-1}$.  The round ends at time 
$3 \cdot 4^{k-1} + 4^{k-1} + \frac{1}{s} \cdot 4^{k} \cdot \frac{s}{1-s} + (4^k - 4^{k} \cdot \frac{s}{1-s}) = 
3 \cdot 4^k$, and the $k$-th round will start at the claimed time.  The proof is identical for $R$.
\end{proof}

\subsection{Lemma~\ref{lem:meeting-cases}}
\label{sec:meeting-cases}
\begin{proof}[Proof of Lemma~\ref{lem:meeting-cases}]
\noindent
\emph{Case 1:} $d \in D_1$.
Let $Y = \{L,R\} \setminus \{X\}$ be the robot other than $X$.
Observe that by Proposition~\ref{prop:time-round}, $X$ finds the exit at time $3K + K + d/s = 4K + d/s$.
Then $X$ goes towards $Y$ at speed $s' = \frac{d}{4K - d/s}$, and so it will reach position $0$ at time 
$4K + d/s + \frac{4K - d/s}{d}d = 8K$.
We know that at time $6K$, robot $Y$ is starting a round at position $2K$ or $-2K$, then goes towards $0$ at full speed.  
Hence $Y$ is at position $0$ at time $8K$, where it meets $X$.  

\noindent
\emph{Case 2:} $d \in D_2 = [4Ks/(s + 1), 4Ks/(1-s)]$.
As before, the exit is found at time $4K + d/s$.
Assume for simplicity that $X = L$ (the case $X = R$ is identical by symmetry).  After finding the exit at position $-d$,
$L$ goes full speed to the right.  Thus at time $t = 4K + d/s + d + \frac{d + ds - 4Ks}{1-s}$, 
it arrives at position $\frac{d + ds - 4Ks}{1-s}$.  We show that at this time, $R$ is in its second phase and is at this position.
Notice that 

\begin{align*}
t &= 4K + d/s + d + \frac{d + ds - 4Ks}{1-s} \\
&= 8K + d/s + d - 4K + \frac{d + ds - 4Ks}{1-s} \\
&= 8K + \frac{d/s + d - 4K}{1 - s} \\
&\leq 8K(1 + s/(1-s)^2)
\end{align*}

the latter inequality being obtained from $d \leq 4Ks/(1-s)$.
Now, $R$ enters its second $travel$ phase when at position $0$ at time $8K$, and the phase ends a time 
$8K + 1/s \cdot 8Ks/(1-s) = 8K(1 + 1/(1-s))$.  Since $s \leq 1/2$, we get $8K(1 + 1/(1-s)) \geq 8K(1 + s/(1-s)^2) \geq t$.
Therefore $R$ is still in its second phase at time $t$, and it follows that its position at this time is 
$s(t - 8K) = \frac{d + ds - 4Ks}{1-s}$.  Hence $L$ and $R$ meet.
It is straightforward to see that $L$ and $R$ could not have met before time $t$, and thus $L$ and $R$ meet at the claimed time and position.
\noindent
\emph{Case 3:} $d \in D_3 = [4Ks/(1-s),4K]$.
Again assume that $X = L$.  This time $L$ finds the exit at position $-d$ at time 
$3K + K + \frac{1}{s} \cdot \frac{4Ks}{1-s} + d - \frac{4Ks}{1-s} = 8K + d$.  
Going full speed to the right, at time $t = 8K + 2d + 2ds/(1-s)$, 
it reaches position $2ds/(1-s)$.
Since $d \leq 4K$, we have $t \leq 8K(2 + s/(1-s)) = 8K(1 + 1/(1-s))$. 
As in the previous case, the second phase of $R$ ends at time $8K(1 + 1/(1-s)) \geq t$.
Thus at time $t$, $R$ is at position $s(t - 8K) = 2ds + 2ds^2/(1-s) = 2ds/(1-s)$.
Again, one can check that $L$ and $R$ could not have met before, which concludes the proof.
\end{proof}

\subsection{Lemma~\ref{lem: algo is feasible}}
\label{sec: algo is feasible}
\begin{proof}[Proof of Lemma~\ref{lem: algo is feasible}]
Let $X$ be the robot that finds the exit at distance $d$, and let $K := K(X, k)$.
We show that all speeds are at most 1, as well as that the evacuation time is at most $9d$. There are three cases to consider.

\noindent
\emph{Case 1:} $d \in D_1$.
By Lemma~\ref{lem:meeting-cases}, after meeting, both robots need to travel distance $d$ and have $9d - 8K$ time remaining.
By the last line of the exit phase algorithm, they go back at speed $s_{b1} := \frac{d}{9d - 8K}$ and make it in time, provided that speed $s_{b1}$ is achievable, i.e. $0 < s_{b1} \leq 1$.
Clearly $s_{b1} > 0$ since $9d - 8K > 0$.  If we assume $\frac{d}{9d - 8K} > 1$, we get $d > 9d - 8K$, leading to $d < K$, a contradiction.  

\noindent
\emph{Case 2:} $d \in D_2$.
By Lemma~\ref{lem:meeting-cases}, the robots meet at position $p$ such that $|p| = \frac{d + ds - 4Ks}{1-s}$
at time $t = 8K + \frac{d + d/s - 4K}{1 - s}$.  
The robots use the smallest speed $s_{b2} := \frac{p + d}{t - 9d}$
that allows the them to reach the exit in time $9d$.
We must check that $0 < s_{b2} \leq 1$.  
We argue that if the two robots used speed $1$ to get to the exit after meeting, they would make it before time $9d$.  
Since $s_{b2}$ allows the robots to reach the exit in time exactly $9d$, it follows that $0 < s_{b2} \leq 1$.  

First note that since $d \leq 4Ks/(1-s)$, we have $4K \geq d(1-s)/s$.
Using speed $1$ from the point they meet, the robots would reach the exit at time 

\begin{align*}
8K + \frac{d + d/s - 4K}{1-s} + d + \frac{d + ds - 4Ks}{1-s} &= 4K\left(2 - \frac{1 + s}{1-s}\right) + d \left(1 + \frac{2 + s + 1/s}{1-s}\right) \\
&= 4K \cdot \frac{1-3s}{1-s} + d \cdot \frac{3s + 1}{s(1 - s)} \\
&\leq d \cdot \frac{1-s}{s} \cdot \frac{1-3s}{1-s} + d \cdot \frac{3s + 1}{s(1 - s)}\\
&= d \left( \frac{1-3s}{s} +  \frac{3s + 1}{s(1 - s)} \right)
\end{align*}

where we have used the fact that $1-3s \leq 0$ in the inequality.
It is straighforward to show that $d \left( \frac{1-3s}{1-s} +  \frac{3s + 1}{s(1 - s)} \right) \leq 9d$ 
when $1/3 \leq s \leq 1/2$, proving our claim.

\noindent
\emph{Case 3:} $d \in D_3$.
Again according to Lemma~\ref{lem:meeting-cases}, the robots meet at position $p$ satisfying $|p| = 2ds/(1-s)$ at time 
$t = 8K + 2d + 2ds/(1-s)$. 
The robots go towards the exit at speed $s_{c3} := \frac{p + d}{9d - t}$.
As in the previous case, we show that $s_{c3}$ is a valid speed by arguing that the robots have enough time if they used their full speed.
If they do use speed $1$ after they meet, they reach the exit at time 
$t' = 8K + 3d + 4ds/(1-s)$.  Since $d \geq 4Ks/(1-s)$, we have $K \leq d(1-s)/(4s)$.
Therefore $t' \leq 8 \cdot (d(1-s)/(4s)) + 3d + 4ds/(1-s) = d \cdot \frac{3s^2 - s + 2}{s(1-s)}$.  
One can check that this is $9d$ or less whenever $1/3 \leq s \leq 1/2$. 
\end{proof}

\subsection{Lemma~\ref{lem:energies}}
\label{sec:energies}
\begin{proof}[Proof of Lemma~\ref{lem:energies}]
For any $K' := 4^i$ power of $4$ with $i \geq 1$, define $B_L(K',s)$ as the energy spent by $L$ after reaching position $-K'$ for the first time without having found the exit, ignoring the initial $1/9$ energy spent to get to position $1$.
The quantity $B_L(K',s)$ is the sum of energy spent in each of the first $i - 1$ rounds, and so

\begin{align*}
B_L(K',s) &= \sum_{j = 0}^{i-1} \left(4^j + s^2(4^{j+1}s/(1-s)) + 4^{j+1} - 4^{j+1}s/(1-s)    \right) \\
&= \sum_{j = 0}^{i-1} \left(4^{j+1}(5/4+(s^3 - s)/(1-s)) \right) \\
&= 4/3 (4^i - 1)(5/4 - s(s+1)) \\
&= 1/3 (K' - 1)(5 - 4s(s+1))
\end{align*}

We define $B_R(2K',s)$ similarly for $R$, i.e. $B_R(2K',s)$ is the energy spent by $R$ when its $(i-1)$-th round is finished and it reached position 
$2K'$ for the first time, ignoring the initial $2/9$ energy to get at position $2$.  We get 

\begin{align*}
B_R(2K',s) &= \sum_{j = 0}^{i-1} \left(2 \cdot 4^j + s^2(2 \cdot 4^{j+1}s/(1-s)) + 2 \cdot 4^{j+1} - 2 \cdot 4^{j+1}s/(1-s)    \right) \\
&= 2B_L(K',s) = 2/3 (K' - 1)(5 - 4s(s+1))
\end{align*}

We may now calculate the three possible cases of energy.  Assume that $X \in \{L,R\}$ finds the exit
and $Y = \{L,R\} \setminus \{X\}$.
Observe that 
$$
B_X(K,s) + B_Y(2K,s) = (K - 1)(5 - 4s(s+1)) = F(K,s)
$$

We implictly use Lemma~\ref{lem:meeting-cases} for the distance traveled by $X$ to catch up to $Y$ after
finding the exit, and the distance traveled back by both robots.
In the $E(K,d,s)$ expressions that follow, for clarity we partition the terms into 3 brackets, which respectively represent the energy spent by $X$ to find the exit and catch up to $Y$, the energy spent by $Y$ before being caught, 
and the energy spent by both robots to go to the exit.

\noindent
\emph{Case 1: $d \in D_1$}.
The total energy spent is 
\begin{align*}
E(K,d,s) &= \left[ B_X(K,s) + K + s^2d + s_{c1}^2 d \right] + \left[ B_Y(2K,s) + 2K \right] + \left[ 2 s_{b1}^2 d \right] \\
\\&= F(K,s) + 3K + d(s^2 + s_{c1}^2 + 2s_{b1}^2)
\end{align*}

\noindent
\emph{Case 2: $d \in D_2$}.
In this case, the energy spent is 

\begin{align*}
E(K,d,s) &= \left[ B_X(K,s) + K + s^2 d + d + \frac{d + ds - 4Ks}{1-s} \right]  + \left[ B_Y(2K,s) + 2K + s^2 \left( \frac{d + ds - 4Ks}{1-s} \right)  \right] +  \\
 & \quad  \left[2 s_{b2}^2 \left( d + \frac{d + ds - 4Ks}{1-s} \right)\right] \\
&= F(K,s) + 3K + \left( \frac{2d - 4Ks}{1-s}\right)(1 + s^2 + 2s_{b2}^2)
\end{align*}

\noindent
\emph{Case 3: $d \in D_3$}.
The energy spent is

\begin{align*}
E(K,d,s) &= \left[B_X(K,s) + K + s^2 4Ks/(1-s) + d - 4Ks/(1-s) + d +  2ds/(1-s) \right] + \\
& \quad \left[B_Y(2K,s) + 2K + s^2(2ds/(1-s)) \right] + \left[2s^2_{b3}\left(d + 2ds/(1-s)  \right)    \right] \\
&= F(K,s) + 3K  + 4Ks/(1-s) (s^2 - 1) + d \left( \frac{s}{1-s}(2 + 2s^2 + 2s_{b3}^2) + 2 + 2s_{b3}^2 \right) \\
&= F(K,s) + 3K - 4Ks(s + 1) + \frac{2d}{1-s}(s^3 + s_{b3}^2(s + 1) + 1)
\end{align*}
\end{proof}

\subsection{Lemma~\ref{lem: monotonicity of algo s}}
\label{sec: monotonicity of algo s}
\begin{proof}[Proof of Lemma~\ref{lem: monotonicity of algo s}]
We analyze each case separately.

\noindent
\emph{Case 1: } $d \in D_1 = [K, 4Ks/(s + 1)]$.  
According to Lemma~\ref{lem:energies}, we have

\begin{align*}
E(K,d,s)/d &= 1/d \cdot ((K - 1)(5 - 4s(s+1)) + 3K + d(s^2 + s_{c1}^2 + 2s_{b1}^2)) \\
&= \dfrac{(K - 1)(5 - 4s(s+1)) + 3K}{d} + s^2 + \left( \frac{d}{4K - d/s} \right)^2 + 2\left( \frac{d}{9d - 8K} \right)^2
\end{align*}

Consider the case $d = K$.  Plugging in $s = 0.39403$, the above evaluates to $8.4258714091 - 2.8028414364/K$.  
We claim that $E(K,d,s)/d$ is a decreasing function over interval $D_1$, and therefore attains its maximum when $d=K$.  Assuming this is true, adding the initialization energy of $1/3$ omitted so far and given that $d \geq K$, the energy ratio is at most  

$$
8.4258714091 - 2.8028414364/K + 1/(3K) \leq 8.42588
$$


We now prove that $E(K,d,s)/d$ is decreasing over the interval $D_1$.
Let 
\begin{align*}
f_1(K,d,s):=&\frac{d^2 }{(9 d-8 K)^2} \\
f_2(K,d,s):=&\frac{d^2 }{(d-4 K s)^2}\\
f_3(K,d,s):=&\frac{-4 K \left(s^2+s-2\right)+4 s (s+1)-5}{d},
\end{align*}
and observe that 
$$
\frac{E(K,d,s)}{d} =2 f_1(K,d,s)+s^2 f_2(K,d,s)+f_3(K,d,s)+s^2.
$$

The plan is to prove that 
$$
\frac{\partial}{\partial d} E(K,d,s)/d <0.
$$

For this we calculate 
\begin{align*}
\frac{\partial}{\partial d} f_1(K,d,s):= 
& -\frac{16 d K}{(9 d-8 K)^3} \\
\frac{\partial}{\partial d} f_2(K,d,s):=& 
-\frac{8 d K s}{(d-4 K s)^3}\\
\frac{\partial}{\partial d} f_3(K,d,s):=&
\frac{4 K \left(s^2+s-2\right)-4 s (s+1)+5}{d^2},
\end{align*}

Now we claim that all $\frac{\partial}{\partial d} f_i(K,d,s)$ are increasing functions in $d$, for $i=1,2,3$. 
Indeed, first, 
$$
\frac{\partial^2}{\partial d^2} f_1(K,d,s)
=
\frac{32 K (9 d+4 K)}{(9 d-8 K)^4}>0
$$
since $d\geq K$. Hence $\frac{\partial}{\partial d} f_1(K,d,s)$ is increasing in $d$.

Second, 
$$
\frac{\partial^2}{\partial d^2} f_2(K,d,s)
=
\frac{16 K s (d+2 K s)}{(d-4 K s)^4}
$$
is positive (and well defined), since $d\leq 4ks/(1+s)$. Hence $\frac{\partial}{\partial d} f_2(K,d,s)$ is increasing in $d$.

Third, we show that $\frac{\partial}{\partial d} f_3(K,d,s)$ is increasing in $d$. For this it is enough to prove that 
$
4 K \left(s^2+s-2\right)-4 s (s+1)+5 <0.
$
For $s=0.39403$ (and in fact for all $s\in (-2,1)$) the strict inequality can be written as 
$
K\geq \frac{4 s^2+4 s-5}{4 s^2+4 s-8},
$
which we show next it is satisfied. 
Indeed, it is easy to see that 
$\frac{4 s^2+4 s-5}{4 s^2+4 s-8} \leq 5/8$ (which is attained for $s=0$), while $K\geq 4$, hence the claim follows. 

To resume, we showed that $\frac{\partial}{\partial d} f_i(K,d,s)$ are increasing functions in $d$, for $i=1,2,3$. Recalling that $s=0.39403$, and since $d\leq 4ks/(1+s)$, we obtain that 
\begin{align*}
&\frac{\partial}{\partial d} E(K,d,s)/d \\
& \leq 
2 
\frac{\partial}{\partial d} f_1(K,4Ks/(1+s),s)
+
s^2 
\frac{\partial}{\partial d}
f_2(K,4Ks/(1+s),s)
+
\frac{\partial}{\partial d}
f_3(K,4Ks/(1+s),s) \\
&= \frac{(s+1)^2 (s (4 K (s (s+1) (49 s (7 s-6)+76)-8)-s (7 s (28 s (7 s+1)-365)+1774)+452)-40)}{16 K^2 s^2 (7 s-2)^3}\\
&= \frac{2.19262\, -1.79465 K}{K^2}.
\end{align*}
Since $K\geq 4$, the latter quantity is clearly negative. This shows that 
$\frac{\partial}{\partial d} E(K,d,s)/d$ is negative (in the given domain), hence $E(K,d,s)/d$ is decreasing in $d$.


\vspace{4mm}

\noindent
\emph{Case 2: } $d \in D_2 = [4Ks/(s + 1), 4Ks/(1-s)]$.
In this case,  the energy ratio $E(K,d,s)/d$ is

\begin{align*}
& \quad 1/d \cdot \left( 1/3 + (K - 1)(5 - 4s(s+1)) + 3K + \left( \frac{2d - 4Ks}{1-s}\right)(1 + s^2 + 2s_{b2}^2) \right)\\
&= \frac{(K - 1)(5 - 4s(s+1)) + 3K}{d} +  \left( \frac{2d - 4Ks}{d(1-s)} \right) \left(1 + s^2 + 2\left( \frac{2d - 4Ks}{d(8 - 9s - 1/s) + 4K(2s-1)} \right)^2 \right)
\end{align*}

We will show that this expression achieves its maximum at $d = 4Ks/(1-s)$.
When $s = 0.39403$, then above yields $8.425786060 - 1.0776069241/K$.  Given that $1/(3d) \leq 1/(3.39186 K)$, this implies that the energy ratio is at most 

$$
8.425786060 - 1.0776069241/K + 1/(3.39186 K) \leq 8.42588
$$

We prove that $E(K,d,s)/d$ is an increasing function over interval $D_2$.
First we compute 
$
\frac{\partial }{\partial d} \frac{E(K,d,s)}{d}
$
and we substitute $s=0.39403$ to find $
\frac{1}{d^2 (d-0.442471 K)^3} g(K,d)
$, where

\begin{align*}
g(K,d):= &d^3 (7.8438 K+2.80279)+d^2 K (-15.5674 K-3.72046) \\
&+d K^2 (8.91957 K+1.6462)+K^3 (-1.31555 K-0.242798).
\end{align*}
Note that $d\geq 4Ks/(1+s) \approx 1.13064 K$, and hence $d^2 (d-0.442471 K)^3 >0$ for all values of $d$ under consideration. Therefore the lemma will follow if we show that $g(K,d)\geq 0$ as well. 

$g(K,d)$ is a degree-3 polynomial with positive leading coefficient. It attains a local minimum at the largest real root of 
$$\frac{\partial }{\partial d} g(K,d)
=3 d^2 (7.8438 K+2.80279)+2 d K (-15.5674 K-3.72046)+K^2 (8.91957 K+1.6462)
$$
which is
$$
d_0(K) := 
\frac{K \left(0.661559 K + 0.0212482 \sqrt{K (129.818 K+8.39821)}+0.158106\right)}{ K+0.357325}
$$
Now we observe that for all $K>0$, we have 
$d_0(K) < 4Ks/(1+s) \approx 1.13064 K$.

From the above, it follows that $g(K,d)$ is monotonically increasing for $d\geq 4Ks/(1+s)$, and therefore 
$$
g(K,d) \geq 
g(K, 4Ks/(1+s))
= (0.205742 K+0.913416) K^3
\geq 0
$$
as wanted.


\vspace{4mm}

\noindent
\emph{Case 3: } $d \in D_3 = [4Ks/(1-s),4K]$.
The energy ratio $E(K,d,s)/d$ is 

\begin{align*}
&\quad  1/d \cdot \left( 1/3 + (K - 1)(5 - 4s(s+1)) + 3K - 4Ks(s + 1) + \frac{2d}{1-s}(s^3 + s_{b3}^2(s + 1) + 1) \right) \\
&= \frac{1/3 + (K - 1)(5 - 4s(s+1)) + 3K - 4Ks(s+1)}{d} +  \frac{2s^3 + 2}{1-s} + \frac{2s+2}{1-s} \left( \frac{d(1 + s)}{d(7 - 9s) + 8K(s - 1)} \right)^2 
\end{align*}

In this case, we claim that this expression is decreasing over $D_3$ and achieves its maximum at $d = 4Ks/(1-s)$.
When $s = 0.39403$, the above gives $8.425786060 - 1.0776069241/K$ (which is the same as in case 2, as one should expect).  Given that $1/(3d) \leq 1/(7.80296 K)$, we get that the energy ratio is at most 

$$
8.425786060 - 1.0776069241/K + 1/(7.80296K) \leq 8.42588
$$

Let us prove that $E(K,d,s)/d$ is indeed decreasing.
Note that $E(K,d,s)/d$ equals

$$
\frac{2 (s+1)^3}{1-s}
g(K,d,s)
+ 
h(K,d,s)
+\frac{2 \left(s^3+1\right)}{1-s}
$$

where 
$$
g(K,d,s):=\frac{d^2}{(d (7-9 s)+8 K (s-1))^2}, 
~~
h(K,d,s):=\frac{8 K \left(-s^2-s+1\right)+4 s (s+1)-5}
{d}.
$$
In what follows we prove that both $g(K,d,s), h(K,d,s) $ are strictly decreasing when $d\geq 4Ks/(1-s)$, implying the claim of the lemma. 

First we show that $h(K,d,s)$ is decreasing. For that note that, using the fixed value of $s = 0.39403$, we have
$
h(K,d,s) = (3.60558 K-2.80279)/d
$, and the latter expression (in $d$) is clearly strictly decreasing for all constants $K>1$. 

Now we show that $g(K,d,s)$ is strictly decreasing for all $d\geq 4Ks/(1-s)$. 
For that observe that for the specific constant $s$, and since $d\geq 4Ks/(1-s)\approx 2.60107 K$, we have that 
$$
|
d (7-9 s)+8 K (s-1)
|
=d (7-9 s)+8 K (s-1) >0.
$$
Hence, to show that $g(K,d,s)$ is strictly decreasing, it is enough to prove that 
$$
q(K,d,s):=\frac{d}{(d (7-9 s)+8 K (s-1))}
$$
is strictly decreasing in $d\geq 4Ks/(1-s)$. First observe that the rational function is well defined for these values of $d$, since the denominator becomes 0 only when $d= \frac{8 K (s-1)}{9 s-7} < 4Ks/(1-s)$ (the last inequality is easy to verify). To that end, we compute 
$$
\frac{\partial }{\partial d} q(K,d,s)
= 
\frac{8 K (s-1)}{(d (7-9 s)+8 K (s-1))^2}
$$
which is of course negative for the given value of $s<1$. 

\end{proof}

\end{document}